\newcounter{hypothesis}
\definecolor{darkblue}{rgb}{0,0,.6}
\newcommand{\commHS}[1]{{\leavevmode\color{darkblue}#1}}
\newcommand{\comm}[1]{{\leavevmode\color{darkblue}#1}}
\newcommand{\xdownarrow}[1]{%
  {\left\downarrow\vbox to #1{}\right.\kern-\nulldelimiterspace}
}
\newcommand{\blind}{0}
\DeclareMathOperator*{\argmin}{\arg\!\min}
\newsavebox\CBox
\date{\today}
\theoremstyle{plain}
\newtheorem{theorem}{Theorem}[section]
\newtheorem{lemma}{Lemma}[section]
\newtheorem{proposition}[theorem]{Proposition}
\newtheorem{assumption}{Assumptions}[section]
\theoremstyle{definition}
\newtheorem{remark}{Remark}[section]
\newcommand{\hL}{\widehat{\boldsymbol{L}}}
\newcommand{\pL}{\boldsymbol{L}}
\newcommand{\pmu}{\boldsymbol{\mu}}
\newcommand{\hPsi}{\boldsymbol{\widehat{\Psi}}}
\newcommand{\tPsi}{\boldsymbol{\widetilde{\Psi}}}
\newcommand{\pPsi}{\boldsymbol{\Psi}}
\newcommand{\hGa}{\boldsymbol{\widehat{\Gamma}}}
\newcommand{\tGa}{\boldsymbol{\widetilde{\Gamma}}}
\newcommand{\pGa}{\boldsymbol{\Gamma}}
\newcommand{\hA}{\boldsymbol{\widehat{A}}}
\newcommand{\tA}{\boldsymbol{\widetilde{A}}}
\newcommand{\pA}{\boldsymbol{A}}
\newcommand{\hS}{\boldsymbol{\widehat{\Sigma}}}
\newcommand{\tS}{\boldsymbol{\widetilde{\Sigma}}}
\newcommand{\pS}{\boldsymbol{\Sigma}}
\newcommand{\hf}{\boldsymbol{\widehat{f}}}
\newcommand{\tf}{\boldsymbol{\widetilde{f}}}
\newcommand{\brf}{\boldsymbol{\breve{f}}}
\newcommand{\pf}{\boldsymbol{f}}
\newcommand{\pfo}{\boldsymbol{f}^o}
\newcommand{\bof}{\boldsymbol{f}^\ast}
\newcommand{\boof}{\boldsymbol{f}^b}
\newcommand{\hQ}{\boldsymbol{\widehat{Q}}}
\newcommand{\pQ}{\boldsymbol{Q}}
\newcommand{\pQo}{\boldsymbol{Q}^o}
\newcommand{\pq}{\boldsymbol{q}}
\newcommand{\pqo}{\boldsymbol{q}^o}
\newcommand{\py}{\boldsymbol{y}}
\newcommand{\by}{\boldsymbol{y}^\ast}
\newcommand{\pc}{\boldsymbol{c}}
\newcommand{\pu}{\boldsymbol{u}}
\newcommand{\hu}{\boldsymbol{\widehat{u}}}
\newcommand{\pv}{\boldsymbol{v}}
\newcommand{\pI}{\boldsymbol{I}}
\newcommand{\pJ}{\boldsymbol{J}}
\newcommand{\be}{\boldsymbol{e}^\ast}
\newcommand{\booe}{\boldsymbol{e}^b}
\newcommand{\pe}{\boldsymbol{e}}
\newcommand{\te}{\boldsymbol{\widetilde{e}}}
\newcommand{\he}{\boldsymbol{\widehat{e}}}
\newcommand{\bre}{\boldsymbol{\breve{e}}}
\newcommand{\pM}{\boldsymbol{M}}
\newcommand{\pPi}{\boldsymbol{\Pi}}
\newcommand{\tPi}{\boldsymbol{\widetilde{\Pi}}}
\newcommand{\hPi}{\boldsymbol{\widehat{\Pi}}}
\newcommand{\bdel}{\delta^\ast}
\newcommand{\pdel}{\delta}
\newcommand{\pone}{\boldsymbol{1}}
\newcommand\numberthis{\addtocounter{equation}{1}\tag{\theequation}}
\begin{document}

\def\spacingset#1{\renewcommand{\baselinestretch}
{#1}\small\normalsize} \spacingset{1}

\if0\blind
{
  \title{\bf AR-sieve Bootstrap for High-dimensional Time Series}
\author[a]{Daning Bi \orcidlink{0000-0001-9825-7290}
\thanks{Corresponding to: College of Finance and Statistics, Hunan University, 1 Lushan South Road, Changsha, Hunan, 410082, China; Email: daningbi@hnu.edu.cn}}
\author[b]{Han Lin Shang \orcidlink{0000-0003-1769-6430}}
\author[c]{Yanrong Yang \orcidlink{0000-0002-3629-5803}} 
\author[d]{Huanjun Zhu \orcidlink{0000-0003-0575-4525}}
	
\affil[a]{Hunan University}
\affil[b]{Macquarie University}
\affil[c]{Australian National University}
\affil[d]{Xiamen University}

\maketitle
} \fi

\if1\blind
{
   \title{\bf AR-sieve Bootstrap for High-dimensional Time Series}
   \author{}
   \maketitle
} \fi

\begin{abstract}
This paper proposes a new AR-sieve bootstrap approach to high-dimensional time series. The major challenge of classical bootstrap methods on high-dimensional time series is two-fold: curse \commHS{of} dimensionality and temporal dependence. To address such a difficulty, we utilize factor modeling to reduce dimension and capture temporal dependence simultaneously. A factor-based bootstrap procedure is constructed, which performs an AR-sieve bootstrap on the extracted low-dimensional common factor time series and then recovers the bootstrap samples for the original data from the factor model. Asymptotic properties for bootstrap mean statistics and extreme eigenvalues are established. Various simulation studies further demonstrate the advantages of the new AR-sieve bootstrap in high-dimensional scenarios. An empirical application on particulate matter (PM) concentration data is studied, where bootstrap confidence intervals for mean vectors and autocovariance matrices are provided. 
\end{abstract}

\noindent%
{\it Keywords: Autocovariance matrices; Bootstrap validity; Factor models; Particulate matter; Temporal dependence} 

\spacingset{1.0}

\section{Introduction}\label{1:intro}


The bootstrap is a computer-intensive resampling-based methodology that arises as an alternative to asymptotic theory. The bootstrap method, initially introduced by \cite{efron_bootstrap_1979} for independent sample observations, was later extended to more complicated dependent data in the literature. As an important extension to stationary time series, blockwise bootstrap \citep{kunsch_jackknife_1989}, autoregressive (AR) sieve bootstrap \citep{kreiss_asymptotic_1988, buhlmann_sieve_1997}, and frequency-domain bootstrap \citep{Franke_Hardle1992, Dahlhaus_Janas1996} have received the most discussions and developments in the past few years. Several variants of block bootstrap methods have appeared, such as the block bootstrap for time series with fixed regressors \citep{NordmanLahiri2012}, the double block bootstrap \citep{LeeLai2009}, and the stationary bootstrap \citep{PolitisRomano1994}, among others. An apparent disadvantage of the blockwise bootstrap is the neglected dependence between different blocks. The AR-sieve bootstrap method takes up the ``sieve'' strategy, which approximates a stationary time series by an AR model with a large number of time lags. Compared with the blockwise bootstrap, the AR-sieve bootstrap samples are conditionally stationary and keep the dependence structure well. The AR-sieve bootstrap was introduced by \citet{kreiss_asymptotic_1988} and has been well studied from stationary linear processes \citep{buhlmann_sieve_1997} to strictly stationary time series that fulfill a general moving average MA ($\infty$) representation \citep{kreiss_range_2011}. After this work, the theoretical requirement and validity of a general AR-sieve bootstrap method for certain types of statistics have been discussed for univariate \citep{kreiss_range_2011}, multivariate \citep{meyer_vector_2015}, and functional time series \citep{paparoditis_sieve_2018, PaparoditisShang2021}, respectively. The frequency-domain bootstrap to implement the resampling schemes is based on frequency-domain methods, which are motivated by the observation that periodogram ordinates at a finite number of frequencies are approximately independently distributed so that Efron's ideas may be employed. Compared to the AR-sieve bootstrap, this method could deal with more general dependence structures for time series \citep{MeyerPaparoditisKreiss2020, Hidalgo2021}. 

The main goal of this paper is to extend the AR-sieve bootstrap to high-dimensional time series. Due to the curse of dimensionality, the traditional AR-sieve bootstrap fails in the high-dimensional case. This is because the AR model approximation for high-dimensional time series could result in a large approximation error, and the bootstrap procedure on high-dimensional independent and identically distributed (i.i.d.)\ residuals is also inaccurate. The curse of dimensionality on traditional bootstrap methods is demonstrated vividly in \citet{el_karoui_can_2018}. As a remedy, reducing the parameter space is essential for successfully modifying bootstrap methods. Fitting sparse models and low-rank models to high-dimensional data is one of the commonly used techniques to eliminate the curse of dimensionality. \citet{Chernozhukov_Chetverikov_Kato2017} provide a theoretical guarantee on the bootstrap approximation for the distribution of the sample mean vector for high-dimensional i.i.d.\ data. \citet{Chen2018} studies the bootstrap approximation for $U$ statistics constructed with high-dimensional i.i.d data. \citet{Ahn_Reinsel1988} propose a nested reduced-rank structure for coefficients in multivariate AR time series models. For high-dimensional time series, \citet{zhang_bootstrapping_2014} study bootstrap inference for weakly dependent time series based on a general Gaussian approximation theory, and \citet{krampe_bootstrap_2019} consider the AR-sieve bootstrap for vector AR time series with sparse coefficients.  In this article, we will contribute to proposing an appropriate low-rank model for the AR-sieve bootstrap on high-dimensional time series. 

Factor modeling or low-rank representation can project high-dimensional data into a low-dimensional subspace. Principal component analysis (PCA) is a common technique for pursuing projections or subspaces with the most variation in the original data \citep{bai_determining_2002, fan_high-dimensional_2011}. Identifying a low-dimensional representation for high-dimensional time series is more complicated because keeping the temporal dependence in dimension reduction is a crucial requirement. The earlier literature on multivariate time series in this field is vast and includes canonical correlation analysis \citep{Box_Tiao1977}, factor models \citep{Pena_Box1987}, and a scalar component model \citep{Tiao_Tsay1989}. Later, \citet{lam_estimation_2011} studied a factor model for high-dimensional time series based on an accumulation of autocovariance matrices, aiming to capture all temporal dependence by common factors.  

In this article, we reduce high-dimensional time series based on a factor model whose common factors possess all the temporal dependence of the original time series. Efficient estimation for such a factor model is borrowed from the idea of \cite{lam_estimation_2011}, which conducts eigen-decomposition for a set of autocovariance matrices with various time-lags. \comm{However, it is important to distinguish our contribution from this foundational work. While \cite{lam_estimation_2011} established the estimation theory (consistency and convergence rates), they did not address the problem of statistical inference, such as constructing confidence intervals. Our work fills this gap by utilizing their estimation procedure as a building block to develop the theoretical validity of the AR-sieve bootstrap for uncertainty quantification in high-dimensional time series.} With lower-dimensional common factor time series, the AR-sieve bootstrap is feasible and produces bootstrap samples for common factors. Finally, the AR-sieve bootstrap could recover the relationship between common factors and the original high-dimensional time series. 

We also study the theoretical properties of the proposed AR-sieve bootstrap on two commonly used statistics - the mean statistics and spectral statistics of autocovariance matrices. The common factors are in a ``representation and activation position'' in the entire bootstrap method. Under the scenario of comparable $N$ (dimension) and $T$ (time-serial length), we first provide convergence rates for the estimation of common factors, which could affect the statistical properties of the final AR-sieve bootstrap statistics. Furthermore, for the two high-dimensional statistics under consideration, the consistency of the bootstrap versions with the population versions is established. Finite-sample experiments demonstrate the influence of the dimension, the sample size, and the factors' strength on the bootstrap results. Moreover, we also performed an empirical application on $\text{PM}_{10}$ data. As a by-product of interest, we apply the proposed AR-sieve bootstrap for high-dimensional series on sparsely observed discrete functional time series and compare them with the results from the AR-sieve bootstrap for functional time series \citep{paparoditis_sieve_2018}. Due to the smoothing inaccuracy for sparsely observed discrete functional time series, the high-dimensional bootstrap method sometimes results in better statistical inferences than the functional approach. Various simulations in Section~\ref{simu2} and the Supplementary Material could reflect this point. 


\comm{The remainder of this paper is organized as follows. Section~\ref{1:sec:2} introduces factor models for high-dimensional time series and discusses the AR representation of factor time series, a building block of the general AR-sieve bootstrap. In Section~\ref{1:est}, the estimation procedure for factor models and the AR-sieve bootstrap procedure for factor time series are introduced with regularity conditions in factor models. The additional assumptions and asymptotic validity of our novel AR-sieve bootstrap method for the mean statistics of factor time series and spiked eigenvalues of symmetrized autocovariance matrices are discussed in Section~\ref{1:sec:4}. 
Section~\ref{1:sec:6} provides an example of applying our novel AR-sieve bootstrap method to $\text{PM}_{10}$ data, and Section~\ref{1:sec:7} concludes the paper. The technical proofs of the theorems are located in Appendix~\ref{1:sec:Appendix_A}, whereas discussions for assumptions, auxiliary lemmas and their proofs, additional simulations, and further applications on sparsely observed functional time series are included in Appendices~\ref{assu}~to~\ref{appendix0} of the Supplementary Material, respectively.
}


\section{Factor-based AR-sieve representation}\label{1:sec:2}

We first propose a factor model to project the high-dimensional time series into a lower-dimensional subspace, where the common factor time series could represent the original data to capture the most temporal dependence. Secondly, an AR-sieve representation for common factors is provided, which plays a significant role in the AR-sieve bootstrap. 

Consider a stationary $N$-dimensional time series $\{\py_{t} \in \mathbb{R}^{N},\ t \in \mathbb{Z}\}$ following a general unobservable factor model, given as
\begin{equation} \label{1:e1}
\py_t = \pQ \pf_t + \pu_t, \quad t=1, 2, \ldots, T,
\end{equation}
where $\{\pf_t \in \mathbb{R}^{r},\ t \in \mathbb{Z}\}$ are unobserved $r$-dimensional factor time series, $\pQ$ is an $N\times r$ factor loading matrix, and $\{\pu_t \in \comm{\mathbb{R}^
N},\ t \in \mathbb{Z}\}$ are $N$-dimensional white noise with zero means and covariance matrix $\pS_u$. Factor models have received numerous discussions, and there are various identification conditions and assumptions on $\pQ$, $\pf_t$, and $\pu_t$ depending on various objectives. In our work, we adapt the identification condition in \citet{lam_estimation_2011} to consider a factor model where the temporal dependence of $\{\py_t\}$ can be fully captured by the factors $\{\pf_t\}$. 

Then, we introduce an AR-sieve representation for multivariate common factor time series. 
For the $r \times 1$ common factors $\{\pf_t\}$, we know via Wold's theorem \citep[see, e.g.,][]{buhlmann_sieve_1997} that $\{\pf_t\}$ can be written as a one-sided infinite-order moving-average (MA) process 
\begin{align}\label{1:e3}
\pf_t = \sum_{l=1}^\infty \pPsi_{l} \pe_{t-l} + \pe_{t},\ t \in \mathbb{Z},
\end{align}
where $\{\pe_{t} \in \mathbb{R}^r,\ t \in \mathbb{Z}\}$ are full rank uncorrelated white-noise innovation processes with $\mathbb{E} (\pe_t) = 0 $ and  $\mathbb{E} (\pe_t \pe_s^\top) = \mathbf{1}_{t=s} \pS_e$, with $\pS_e$ a full rank $r \times r$ covariance matrix. $\{\pPsi_{l} \in \mathbb{R}^{r \times r},\ l \in \mathbb{N}\}$ are the coefficients matrices. Under the requirement on invertibility of the process in~\eqref{1:e3}, which would narrow the class of stationary processes a little, we can represent $\{\pf_t\}$ as a one-sided infinite-order autoregressive (AR) process. That is, there exists an infinite sequence of $r \times r$ matrices $\{\pA_{l} \in \mathbb{R}^{r \times r},\ l \in \mathbb{N}\}$ such that factors $\{\pf_t\}$ can be expressed as 
\begin{align}\label{1:e2}
\pf_t = \sum_{l=1}^\infty \pA_{l} \pf_{t-l} + \pe_{t},\ t \in \mathbb{Z},
\end{align}
where the coefficient matrices of the expansion for the power series $\left( \boldsymbol{I}_r -  \sum_{l=1}^{\infty} \pA_{l} z^l \right)^{-1}$ are $\{\pPsi_{l} \in \mathbb{R}^{r \times r},\ l \in \mathbb{N}\}$. Here $|z| \le 1$ \citep{brockwell_time_1991}. Note that~\eqref{1:e3} is a representation instead of an imposed assumption or model. The AR-sieve bootstrap is based on an approximated AR representation for~\eqref{1:e2}, i.e., 
\begin{eqnarray}\label{new1}
\pf_t \approx \sum_{l=1}^p \pA_{l} \pf_{t-l} + \pe_{t},\ t \in \mathbb{Z},
\end{eqnarray}
where $p$ is a large integer that tends to infinity. The AR-sieve bootstrap is a nonparametric approach, although~\eqref{new1} looks like a ``fake'' parametric model.   

The (vector) AR representation in~\eqref{1:e2} is more attractive for statistical applications and has received more attention since it relates $\pf_t$ to its past values. The AR-sieve bootstrap, on the other hand, utilizes the AR representation in~\eqref{1:e2} to generate bootstrap common factors by resampling from the de-centered innovations. In practice, since neither factors $\{\pf_t\}$ nor their loadings $\pQ$ are observable, the AR-sieve bootstrap is performed on estimates of $\{\pf_t\}$ rather than true factors. Hence, we will introduce the estimation and bootstrap procedure in the following section. 

\section{Factor-based AR-sieve bootstrap} \label{1:est}

We first introduce the estimation approach for the factor model in~\eqref{1:e1} and then provide the AR-sieve bootstrap procedure for high-dimensional time series. 

\subsection{Analysis on common-factors estimation}

Recall that common factors $\{\pf_t\}$ in model~\eqref{1:e2} are assumed to contain all the temporal dependence of $\{\py_t\}$ because the error components $\{\pu_t\}$ have no temporal dependence. As analyzed by \citet{bathia_identifying_2010} and \citet{lam_estimation_2011}, the factor loading space, that is, the $r$-dimensional linear space spanned by the columns of the factor loading matrix $\pQ$, denoted by $\mathcal{M}(\pQ)$, is uniquely defined. Furthermore, this subspace $\mathcal{M}(\pQ)$ is spanned by the eigenvectors of an accumulated symmetrized autocovariance matrix below, corresponding to its nonzero eigenvalues, 
\begin{equation*}
\pL=\sum_{k=1}^{k_0} \pGa_y(k)\pGa_y(k)^\top,
\end{equation*}
where $\pGa_y(k) = \text{Cov}(\py_t,\py_{t+k})$ is the autocovariance of $\{\py_t\}$ in lag $k$. $k_0$ is a prescribed small integer. Intuitively speaking, the matrix $\pL$ collects the temporal dependence of $\{\py_t\}$ by combining the information contained in the first $k_0$-lags of autocovariance with the squared (symmetrized) form, facilitating the spectral decomposition on~$\pL$.

\begin{remark}
The reason why we do not consider the covariance matrix $\pS_y$ in $\pL$ is straightforward. 
For the factor model~\eqref{1:e1}, $\pS_y = \pGa_y(0) = \pQ \pGa_{\pf}(0) \pQ^\top + \pS_u$, where $\pGa_{\pf}(0)$ is the covariance matrix of \comm{$\{\pf_t\}$} and $\pS_u$ is the covariance matrix of $\{\pu_t\}$. Hence, excluding $\pS_y$ from $\pL$ can filter out the impact of covariance on $\{\pu_t\}$, especially for $N \to \infty$.
\end{remark}

Then it is straightforward to use the spectral (eigenvalue) decomposition on $\pL$ to estimate the factor loading matrix $\pQ$ and the factors $\{\pf_t\}$. Before we discuss the details of the estimation procedure, we first summarize the assumptions and identification conditions for the factor model defined in~\eqref{1:e1}. 
\begin{assumption} [Conditions on factor models] \label{1:c1}
For factor models~\eqref{1:e1}, we suppose that
\begin{enumerate}
\item[(i)] $\{\pf_t\}$ are strictly stationary time series with $\mathbb{E}\pf_t = \mathbf{0}$ and $\mathbb{E}\left\|\pf_t\right\|^2 < \infty$; $\{\pu_t\} \sim WN(0,\pS_u)$ are uncorrelated white noises with covariance matrix $\pS_u$, and all eigenvalues of $\pS_u$ are uniformly bounded as $N \to \infty$; $\{\pf_t\}$ are independent of $\{\pu_s\}$ for any $t, s \in \mathbb{Z}$.
\item[(ii)] \commHS{$\frac{1}{N}\pQ^\top\pQ = \pI_r$,  and for a prescribed small integer $k_0 >0$, the $r\times r$ accumulated autocovariance matrix of the factors, defined as $\pM_f = \sum_{k=1}^{k_0} \pGa_f(k)\pGa_f(k)^\top$ is a diagonal matrix with distinct positive entries $\infty>\lambda_1(\pf) > \lambda_2(\pf) > \cdots > \lambda_r(\pf) >0$ as $N \to \infty$.}
\item[(iii)] $\{\py_t\}$, therefore $\{\pf_t\}$, are $\psi$-mixing with the mixing coefficients $\psi(\cdot)$ satisfying the condition that $\sum_{t\ge 1} \psi(t)^{1/2} < \infty$, and $\mathbb{E}{|y_{j,t}|^4} < \infty$ elementwisely.
\end{enumerate}
\end{assumption}

\commHS{
\begin{remark}[Identification strategy]
Factor models are subject to rotational indeterminacy; that is, the model $\py_t = \pQ \pf_t + \pu_t$ is observationally equivalent to $\py_t = (\pQ\mathbf{S})(\mathbf{S}^{-1}\pf_t) + \pu_t$ for any invertible matrix $\mathbf{S}$. To ensure that the estimated factors ${\hf}_t$ converge to a unique target $\pf_t$ (up to sign and not merely a rotation thereof), we impose identification conditions consistent with our estimation method.
Since our estimator relies on the spectral decomposition of the sample accumulated autocovariance matrix (which produces orthogonal eigenvectors), we essentially define the true factors $\pf_t$ as the specific rotation of the latent space such that the factor loadings are orthogonal ($\frac{1}{N}\pQ^{\top}\pQ=\pI_{r}$); and the factors' temporal dependence structure (accumulated autocovariance $\mathbf{M}_f$) is diagonal. Combined with the distinct eigenvalue condition in Assumption~3.1(ii), these constraints uniquely identify $\pf_t$ and $\pQ$ up to a sign change (column permutation is fixed by the ordering of eigenvalues). This ensures the one-to-one correspondence required for the asymptotic validity of the bootstrap procedure on $\pf_t$, which is a prerequisite for establishing the consistency of the bootstrap procedure on the estimated factors.
\end{remark}

Additional discussions and justifications regarding to Assumption~\ref{1:c1} are left to Appendix~\ref{assu} in the supplementary. 
Under regular conditions in Assumptions~\ref{1:c1}, we can estimate the factors and their loadings, and then generate a sample time series using the AR-sieve bootstrap. To facilitate the estimation process, we define $\pQo = \frac{1}{\sqrt{N}} \pQ$ as the (unscaled) orthonormal factor loading matrix such that $\pQ^{o\top} \pQo = \pI_r$ and $\pfo_t$ as the scaled factors such that $\py_t = \pQo \pfo_t + \pu_t$ is equivalent to model~\eqref{1:e1}, but with different scaling on $\pQ$ and $\{\pf_t\}$. Note that since standard eigenvectors have unit norm, $\pQ = \sqrt{N}\pQ^o$ satisfies the identification condition $\frac{1}{N}\pQ^\top \pQ = \pI_r$.} Details of the proposed method, including the estimation and the bootstrap procedure, are illustrated in the following subsection.

\subsection{The procedure of factor-based AR-sieve bootstrap}

We divide the whole procedure for the proposed factor-based AR-sieve bootstrap 
into four steps.\\ 
\textbf{Step 1:} Estimation of $\pQ$:

We first define the accumulation of symmetrized sample autocovariance up to lag $k_0$ as $\widetilde{\pL}=\sum_{k=1}^{k_0}  \tGa_y(k)  \tGa_y(k)^\top$, with $\tGa_y(k)$ the sample autocovariance at lag $k$ defined as $\tGa_y(k)=\frac{1}{T-k}\sum_{t=1}^{T-k} (\py_{t+k}-\overline{\py}) (\py_{t}-\overline{\py} )^\top$. Applying spectral (eigenvalue) decomposition on $\widetilde{\pL}$, we can obtain  $\widehat{\pQo}=(\widehat{\pq^o_1},\widehat{\pq^o_2},\dots,\widehat{\pq^o_r})$ with $\widehat{\pq^o_i}$ the eigenvector of $\widetilde{\pL}$ corresponding to the $i$\textsuperscript{th} largest eigenvalue of $\widetilde{\pL}$. $\widehat{\pQo}$ is then a natural estimator of the unscaled loading matrix $\pQo$.  And by scaling up $\widehat{\pQo}$ with $\sqrt{N}$, the square root of the dimension, we ended up with $\hQ = \sqrt{N}\widehat{\pQo}$ as the estimator of $\pQ$.

\textcolor{darkblue}{The existing t suggests that the estimation results are robust to the choice of $k_0$ \citep[see, e.g.,][]{lam_estimation_2011, zhang_factor_2024},} and the numerical results associated with $k_0=1$ to $k_0=5$ are similar. In general, when the dimension $N$ is large compared to $T$, a relatively larger $k_0$ may be considered to better calculate sample estimates, while $k_0 = 1$ is computationally more efficient when the sample size $T$ is large compared to the dimension $N$. Besides, for finite samples, some of the non-spiked eigenvalues of $\widetilde{\pL}$ may not be exactly zero; therefore, we can use the ratio-based estimator of \textcolor{darkblue}{\cite{lam_estimation_2011} and \cite{zhang_factor_2024}} 
defined as $\widehat{r} = \argmin_{1 \le j \le R} \widehat{\lambda}_{j+1}/\widehat{\lambda}_{j}$, with $\widehat{\lambda}_{1} \ge \widehat{\lambda}_{2} \ge \cdots \ge \widehat{\lambda}_{N}$ the eigenvalues of $\widetilde{\pL}$ and $R$ an integer that satisfies $r \le R < N$, to estimate the number of factors $r$. And practically, $R$ can be taken as $N/2$ or $N/3$ for the efficiency of the computation.\\
\textbf{Step 2:} Estimation of $\{\pf_t\}$:	

With $\widehat{\pQ}$ the estimator of $\pQ$, it is then straightforward to estimate $\{\pf_t\}$ by $\widehat{\pf_t}=\commHS{\frac{1}{N}} \widehat{\pQ}^\top \py_t.$\\
\textbf{Step 3:} AR-sieve bootstrap on $\{\hf_t\}$:

To apply the AR-sieve bootstrap on $\{\hf_t\}$, we can first fit a $p$\textsuperscript{th} order VAR model on the $r$-dimensional time series $\{\hf_t\}$ as
\begin{equation*}
\hf_t=\sum_{l=1}^{p} \hA_{l,p}(r) \hf_{t-l} + \he_{t,p},\quad  t=p+1,p+2,\dots,T,
\end{equation*}
where $\he_{t,p}$ denotes the residuals and \comm{the order $p$ of the VAR model can be selected based on an information criterion such as AIC \citep{akaike_new_1974} and SC \citep{schwarz_estimating_1978}.} 

Equivalently, we have $\he_{t,p}= \hf_t - \sum_{l=1}^{p} \hA_{l,p} (r) \hf_{t-l},\ t=p+1,p+2,...,T$, where $\{\hA_{l,p},\ l=1,2,...,p;\ t=p+1,p+2,...,T\}$ are Yule-Walker estimators of the AR coefficient matrices. We can then generate $\{\be_t\}$, the bootstrap sample of residuals, by resampling from the empirical distribution of the centered residual vectors. Consequently, based on the idea of an AR-sieve bootstrap \citep[see, e.g.][]{kreiss_bootstrap_1992, meyer_vector_2015, paparoditis_sieve_2018}, we can generate the $r$-dimensional pseudo-time series $\{\pf_t^\ast,\ t=1,2,...,T\}$ by simulating the VAR model with bootstrap residuals $\{\be_t\}$. Therefore, an AR-sieve bootstrap sample of $\{\pf_t^\ast\}$ is generated by
$\pf_{t}^{\ast}=\sum_{l=1}^{p} \hA_{l,p} (r) \pf_{t-l}^{\ast} + \pe_{t}^{\ast}$, where $\{\pe_{t}^\ast\}$ are i.i.d.\ random vectors following the empirical distribution of the centered residual vectors $\{\widetilde{\pe}_{t}\}$, where $\widetilde{\pe}_{t,p} = \he_{t,p} - \overline{\he}_{{T'},p}$ and $\overline{\he}_{{T'},p} = 1/(T-p)\sum_{t=p+1}^{T}\he_{t,p}$.\\
\textbf{Step 4:} Generating bootstrap data $\{\by_t\}$:

Lastly, the bootstrap time series $\{\py_t^\ast\}$ can be constructed as
\begin{equation}\label{new3}
\py_t^\ast=\sum_{j=1}^{r} \pf_{j,t}^\ast \widehat{\pq_j},
\end{equation}
where $\widehat{\pq_j}=\sqrt{N}\widehat{\pqo_j}$ is the scaled eigenvector of $\hL$ corresponding to the $j$\textsuperscript{th} largest eigenvalue. Following this AR-sieve bootstrap procedure, pseudo-time series $\{\by_{t}\}$ can mimic the temporal dependence of the original data $\{\py_t\}$ using a factor model.

\begin{remark}
It should be noted that the bootstrap version in~\eqref{new3} is constructed using the bootstrap version of the common factors. We could also modify it to involve an additional term related to the error components. For example, with the estimate $\hu_t = \py_t - \hQ \hf_t$, under some regular sparse conditions in the population covariance matrix $\boldsymbol{\Sigma}_u$, we can obtain an appropriate estimator $\widehat{\Sigma}_u$ \citep{fan_large_2013}. Then, a modified bootstrap version is 
\begin{equation}\label{new4}
\py_t^{\ast\ast}=\sum_{j=1}^{r} \pf_{j,t}^\ast \widehat{\pq_j}+\widehat{\boldsymbol{\Sigma}}_u^{1/2}\widetilde{\pu}_t,
\end{equation}
where $\widetilde{\pu}_t$ is $N$-dimensional random vector generated from the standard normal distribution $\mathcal{N}\ (\textbf{0}_N, \boldsymbol{I}_{N})$. In this way, the bootstrap version $\py_t^{\ast\ast}$ is not of low rank. For instance, conditional on the original sample observations, the covariance matrix of $\py_t^{\ast\ast}$ is of full rank. Due to the high-dimensionality of error components, $\{\hu_t\}$, non-parametric bootstrap on error would again incur the curse of dimensionality again \citep{el_karoui_can_2018}.

For simplicity, we study the mean statistics and the largest eigenvalue of sample autocovariance matrices based on the bootstrap version in~\eqref{new3}, because~\eqref{new3} and~\eqref{new4} produce bootstrap statistics with similar asymptotic properties. 
\end{remark}


\section{Asymptotic theory}\label{1:sec:4}

Some regular assumptions are first introduced. Then, we establish the asymptotic properties for two commonly used statistics: the mean statistics and the largest eigenvalues of the accumulated autocovariance matrices.

\subsection{Regular assumptions}\label{1:reg}

Before introducing the additional regularity assumptions, we first fix some notation. We use $\|\cdot \|_2$ to denote the $L_2$ norm (also known as the spectral norm or operator norm) of a matrix or vector, and $\| \cdot \|_F$ to denote the Frobenius norm of a matrix. We use $a \asymp b$ to denote the case that $a = O_P(b)$ and $b = O_P(a)$.

In addition to Assumptions~\ref{1:c1} made on the factor model~\eqref{1:e1}, to apply the AR-sieve bootstrap on $\{\hf_t\}$, the estimates of factors $\{\pf_t\}$, we also need some regularity conditions on $\{\pf_t\}$ for the AR-sieve bootstrap to be consistent and valid. Denoted by $W(\cdot)$, the spectral density matrix of a vector process for all frequencies $\omega \in (0,2\pi]$, then the spectral density matrix of $\{\pf_t\}$ can be defined as
\begin{align*}
W_{\pf} (\omega) = \frac{1}{2\pi} \sum_{k=-\infty}^{\infty} \pGa_{\pf}(k) e^{-i \omega k}, \omega \in (0,2\pi].
\end{align*}
\begin{assumption}\label{1:a2}
In model~\eqref{1:e1}, we strengthen the Assumption~\ref{1:c1} such that $\{\pf_t\} $ are strictly stationary and purely nondeterministic stochastic processes of full rank with $\mathbb{E}\pf_t = \mathbf{0}$ and $\mathbb{E}\left\|\pf_t\right\|^2 < \infty$. $\pGa_f(k)$, the autocovariance matrix of $\pf_t$ at lag $k$ satisfies the matrix norm summability condition $\sum_{k=-\infty}^{\infty} (1+|k|)^\gamma \left\| \pGa_f(k) \right\|_F < \infty$ for some $\gamma \ge 0$ that will be specified later. 
\end{assumption}

\begin{lemma}\label{1:l5}
Let $\sigma_j(\omega)$ be the $j$\textsuperscript{th} largest eigenvalue of the spectral density matrix $W_{\pf} (\omega)$ for $\{\pf_t\}$, $j=1,2,...,r$, $\omega \in (0,2\pi] $. Under Assumption~\ref{1:c1} and \ref{1:a2} with $\gamma=0$, \comm{$\sigma_j(\omega)$} fulfills the following so-called boundedness condition \citep{wiener_prediction_1958}:
\begin{equation*}
c \le \sigma_j(\omega) \le d,\ \text{for all } \omega \in (0,2\pi], 0 < c \le d < \infty.
\end{equation*}
\end{lemma}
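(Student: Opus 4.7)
The plan is to prove the upper and lower bounds separately, exploiting the summability condition in Assumption \ref{1:a2} (with $\gamma=0$) for the former and the purely nondeterministic full-rank structure for the latter, then glue them together via continuity and compactness of $[0,2\pi]$.

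For the upper bound, I would apply the triangle inequality directly to the series definition of $W_{\pf}(\omega)$. Since the spectral norm is dominated by the Frobenius norm, $\|W_{\pf}(\omega)\|_2 \le \|W_{\pf}(\omega)\|_F \le (2\pi)^{-1}\sum_{k=-\infty}^{\infty}\|\pGa_f(k)\|_F$, which is finite uniformly in $\omega$ by Assumption \ref{1:a2} with $\gamma=0$. Setting $d$ equal to this sum yields $\sigma_1(\omega)\le d$ for all $\omega$, and consequently $\sigma_j(\omega)\le d$ for every $j$.

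For the lower bound, I would first argue that $W_{\pf}(\omega)$ is a continuous (in fact uniformly continuous) matrix-valued function of $\omega$ on $[0,2\pi]$: the series $\sum_k \pGa_{\pf}(k)e^{-i\omega k}$ is dominated termwise by $\|\pGa_f(k)\|_F$, so it converges absolutely and uniformly, and each entry is a uniform limit of continuous trigonometric polynomials. Since eigenvalues of Hermitian matrices depend continuously on entries, every $\sigma_j(\omega)$ is continuous in $\omega$. Next, because $\{\pf_t\}$ is purely nondeterministic and of full rank by Assumption \ref{1:a2}, the Wold/Wiener--Masani decomposition gives the factorisation $W_{\pf}(\omega)=(2\pi)^{-1}\Psi(e^{-i\omega})\pS_e\Psi(e^{-i\omega})^*$ with $\pS_e$ of full rank. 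Combined with the invertibility hypothesis underlying the VAR representation \eqref{1:e2}, the transfer function $\Psi(e^{-i\omega})$ is nonsingular for every $\omega$, so $W_{\pf}(\omega)\succ 0$ pointwise and in particular $\sigma_r(\omega)>0$ for each $\omega\in(0,2\pi]$.

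Finally, I would invoke compactness: since $\sigma_r:\,[0,2\pi]\to(0,\infty)$ is continuous and strictly positive, it attains its infimum, which is therefore strictly positive. Setting $c:=\min_{\omega\in[0,2\pi]}\sigma_r(\omega)>0$ gives $\sigma_j(\omega)\ge\sigma_r(\omega)\ge c$ for every $j$ and every $\omega$, completing the proof.

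The step I expect to require the most care is the pointwise positive definiteness of $W_{\pf}(\omega)$. The hypothesis of being purely nondeterministic and of full rank standardly yields only $\det W_{\pf}(\omega)>0$ almost everywhere (via integrability of $\log\det W_{\pf}$ in Wiener--Masani); upgrading this to everywhere positivity requires either the invertibility assumption already built into the AR representation \eqref{1:e2} (which forces $\Psi(z)^{-1}$ to exist on $|z|\le 1$) or a continuity-plus-nonnegativity argument. I would prefer the first route, since it is internally consistent with the paper's framework and avoids measure-theoretic subtleties; the rest of the argument is then a routine application of continuity and compactness.
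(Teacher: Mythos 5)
Your argument follows the same skeleton as the paper's proof — upper bound from the norm summability of $\pGa_f(k)$, lower bound from pointwise positivity of $W_{\pf}(\omega)$ combined with continuity of the eigenvalues and compactness of the frequency domain — but you justify the one genuinely delicate step differently, and more carefully. The paper simply asserts that "the assumption of strong factors in Assumption~\ref{1:c1} implies the positivity on eigenvalues of the spectral density matrix," which is loose: Assumption~\ref{1:c1}(ii) concerns the eigenvalues of $\sum_{k=1}^{k_0}\pGa_f(k)\pGa_f(k)^\top$ and does not by itself rule out $\det W_{\pf}(\omega)=0$ at isolated frequencies (a scalar MA(1) with unit coefficient is full rank and purely nondeterministic yet has vanishing spectral density at $\omega=\pi$). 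You correctly identify that "purely nondeterministic of full rank" only yields $\det W_{\pf}>0$ almost everywhere via Wiener--Masani, and that everywhere-positivity must come from the invertibility built into the paper's framework (nonsingularity of the transfer function $\Psi(e^{-i\omega})$ on the unit circle). That is the right diagnosis and a sounder route than the paper's. The one caveat you should flag explicitly is a mild circularity risk: the paper later uses the boundedness condition of this lemma to \emph{entail} the existence of the AR representation~\eqref{1:e2}, so if your proof of the lemma invokes the invertible AR/MA factorisation, you are implicitly treating invertibility of $\Psi(z)$ on $|z|\le 1$ as a standing assumption (as Section~\ref{1:sec:2} of the paper in fact does) rather than as a consequence of the lemma; stated that way, your proof is internally consistent and complete.
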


The continuity and boundedness properties in Lemma~\ref{1:l5} then entail the existence of a vector AR representation for any vector process satisfying Assumption~\ref{1:a2} \citep[see, e.g.][]{meyer_vector_2015, cheng_baxters_1993, wiener_prediction_1958}. That is, the AR representation~\eqref{1:e2} and the Wold representation~\eqref{1:e3} are valid under Assumption~\ref{1:a2}.

The validity of AR-sieve bootstrap on a class of strictly stationary multivariate time series that fulfill Assumption~\ref{1:a2} has been discussed in \citet{meyer_vector_2015}, where some additional conditions on the convergence of Yule-Walker estimators of the finite predictor coefficients on $\{\pf_t\}$ are also introduced. We summarize these conditions in Assumption~\ref{1:a3} and leave the results of \citet{meyer_vector_2015} to Lemma~\ref{1:l6} in Appendix~\ref{1:sec:Appendix_B}, as they are preliminary to show the bootstrap consistency.

\begin{assumption}\label{1:a3}
The Yule-Walker estimators $\{\tA_{l,p}, l = 1,2,...,p\}$ of $\{\pA_{l,p}, l = 1,2,...,p\}$ in~\eqref{1:e2}, the finite predictor coefficients matrices on the VAR representation of $\{\pf_t\}$, fulfills that $p^2 \sum_{l=1}^{p} \|\tA_{l,p} - \pA_{l,p} \|_F = O_P\left(1\right),$ as $T \to \infty$ and $p \to \infty$.
\end{assumption}

\begin{assumption} \label{1:a4}
The dimension $N$ and AR(p) satisfy $N \to \infty$, $p \to \infty$ when $T \to \infty$ such that $p^{11/2}(N^{-1/2} + T^{-1/2}) \to 0$.
\end{assumption}


Discussions and justifications for Assumptions~\ref{1:a2}~to~\ref{1:a4} are left to Appendix~\ref{assu} of the Supplementary Material.

\subsection{Bootstrap validity for mean statistics}


The validity of the general AR and VAR sieve bootstrap has been discussed by \citet{kreiss_range_2011} and \citet{meyer_vector_2015}. It is worth noting that the general AR and VAR sieve bootstrap does not mimic the behavior of the underlying processes in~\eqref{1:e3} or~\eqref{1:e2}, but the behavior of so-called companion processes $\{\brf_t\}$. The companion processes $\{\brf_t\}$ are defined in the same form as $\{\pf_t\}$ but with i.i.d.\ white noises $\{\bre_{t}\}$ rather than the uncorrelated white noises $\{\pe_t\}$ in~\eqref{1:e3} or~\eqref{1:e2}, although $\{\pe_t\}$ and $\{\bre_{t}\}$ share the same distribution. That means, without additional assumptions on the distribution of $\{\pe_t\}$, the higher-order properties of $\{\brf_t\}$ and $\{\pf_t\}$ are not necessarily the same. In other words, except for the Gaussian case, the general AR and VAR sieve bootstrap work for statistics that only depend on up-to-second-order quantities of $\{\pf_t\}$. 

To summarize our first result on the bootstrap consistency of $\pQ \overline{\pf_T}$, the mean statistics of the unobservable factor component $\{\pQ\pf_t\}$, we use $\mathbb{E}^\ast$ to denote the expectation with respect to the measure assigning probability $1/(T-p)$ to each observation.
\begin{theorem}\label{1:t1}
Suppose that Assumptions~\ref{1:c1},~\ref{1:a2} ($\gamma=1$),~\ref{1:a3} and~\ref{1:a4} are satisfied for a fixed and known number of factors $r$. In addition, if we further assume that 
\begin{enumerate}
\item[(a)] The empirical distribution of $\{\pe_t\}$ converges weakly to the distribution function of $\mathcal{L}(\pe_t)$.
\item[(b)] $\lim_{T \to \infty} \mathbb{V} \comm{(\sqrt{{T'}}\overline{\pf_{T'}} )} = \sum_{k\in\mathbf{Z}} \pGa_{\pf} (k) > 0$.
\end{enumerate}
Then, for any vector $\pc \in \mathbb{R}^N$ such that $\| \pc^\top \pQ\|_{\ell_1} < \infty$ and $ 0 < \sum_{k\in\mathbf{Z}} \pc^\top \pQ \pGa_{\pf} (k) \pQ^\top \pc < \infty$ as $N \to \infty$, we can conclude that when $N \to \infty$ and $T \to \infty$,
\comm{\begin{align*}
d_{K}\left(\mathcal{L} \left( \left. \sqrt{T'} \pc^{\top} \hQ \left( \overline{\bof_{T'}} - \mathbb{E}^\ast \overline{\bof_{T'}} \right) \right| \py_1,\py_2,...,\py_T \right), \mathcal{L} \left(\sqrt{T'} \pc^\top \pQ \left( \overline{\pf_{T'}} - \mathbb{E} \overline{\pf_{T'}} \right) \right)\right)  \overset{p}{\to} 0,
\end{align*}
where $T' = T-p$ is the effective sample size, $\overline{\bof_{T'}} = \frac{1}{T'}\sum_{t=p+1}^{T}\bof_{t}$, $\overline{\pf_{T'}} = \frac{1}{T'}\sum_{t=p+1}^{T}\pf_{t}$,} $\mathcal{L}$ and $d_K$ denote the probability distribution and the Kolmogorov distance, respectively.
\end{theorem}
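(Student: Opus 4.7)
The plan is to decompose the bootstrap quantity so as to isolate the effect of the factor-loading estimation, then reduce the remaining piece to the standard vector AR-sieve bootstrap for an $r$-dimensional time series. Concretely, writing
\begin{align*}
\sqrt{T}\, \pc^\top \hQ\bigl(\overline{\bof_T} - \mathbb{E}^\ast\overline{\bof_T}\bigr)
&= \sqrt{T}\,\pc^\top \pQ\bigl(\overline{\bof_T} - \mathbb{E}^\ast\overline{\bof_T}\bigr) \\
&\quad + \sqrt{T}\,\pc^\top (\hQ-\pQ)\bigl(\overline{\bof_T} - \mathbb{E}^\ast\overline{\bof_T}\bigr),
\end{align*}
(up to the usual rotational identification implicit in the \citet{lam_estimation_2011} spectral estimator), the second ``remainder'' term is controlled via the convergence rate of $\hQ-\pQ$ from that estimator together with the bootstrap variance bound $\mathbb{E}^\ast\|\sqrt{T}(\overline{\bof_T}-\mathbb{E}^\ast\overline{\bof_T})\|^2=O_P(1)$. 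Combining this with the summability $\|\pc^\top\pQ\|_{\ell_1}<\infty$ and Assumption~\ref{1:a4} renders the remainder conditionally $o_P(1)$.

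For the main term I would invoke the vector AR-sieve bootstrap validity of \citet{meyer_vector_2015}, summarised in Lemma~\ref{1:l6} of Appendix~\ref{1:sec:Appendix_B}. Since the bootstrap is run on $\hf_t$ rather than on the unobservable $\pf_t$, this requires three replacement arguments: (i) show that the Yule--Walker estimators $\{\hA_{l,p}\}$ computed from $\hf_t$ agree with those computed from $\pf_t$ in the norm $p^2\sum_{l=1}^{p}\|\cdot\|_F$ of Assumption~\ref{1:a3}; (ii) show that the empirical distribution of the centered residuals $\{\widetilde{\pe}_{t,p}\}$ converges weakly to $\mathcal{L}(\pe_t)$, by combining condition (a) with uniform closeness of $\hf_t$ to $\pf_t$; (iii) apply Lemma~\ref{1:l6} to obtain the conditional CLT
\begin{align*}
\sqrt{T}\,\pc^\top \pQ\bigl(\overline{\bof_T} - \mathbb{E}^\ast\overline{\bof_T}\bigr) \overset{d}{\to} \mathcal{N}\!\left(0,\ \pc^\top \pQ\sum_{k\in\mathbb{Z}}\pGa_{\pf}(k)\pQ^\top\pc\right)
\end{align*}
in probability. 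To match the unconditional side, apply the multivariate CLT for the $\psi$-mixing sequence $\{\pf_t\}$ under Assumption~\ref{1:c1}(iii), with the limiting long-run variance identified through condition (b). Convergence in Kolmogorov distance then follows because the Gaussian limit is continuous (Polya's theorem).

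\textbf{Main obstacle.} The most delicate step is (i): controlling the Yule--Walker discrepancy uniformly in the growing order $p$ once the factor-estimation error $\hf_t-\pf_t$ (equivalently, the rotation error in $\hQ-\pQ$) is propagated through the inversion of the $rp\times rp$ sample block-Toeplitz covariance matrix of $\hf_t$. This is precisely where Assumption~\ref{1:a4}'s rate $p^{11/2}(N^{-1/2}+T^{-1/2})\to 0$ is consumed: the exponent $11/2$ absorbs the $p^2$ factor demanded by Assumption~\ref{1:a3}, additional powers of $p$ from bounding the block-inverse's operator norm, and the lag-wise accumulation of factor-estimation error. Step (ii) is comparatively routine once (i) is in hand, since the residuals $\he_{t,p}-\pe_{t,p}$ inherit the same uniform rate of closeness.
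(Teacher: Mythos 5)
Your decomposition of the remainder term $\sqrt{T}\,\pc^\top(\hQ-\pQ)(\overline{\bof_T}-\mathbb{E}^\ast\overline{\bof_T})$ and its treatment via the conditional variance bound is exactly the paper's term $\mathcal{M}_2$, and your closing steps (mixing CLT for $\sqrt{T}\,\overline{\pf_T}$, Cram\'er--Wold, continuity of the Gaussian limit to upgrade to Kolmogorov distance) also match. The divergence is in how the main term is handled. The paper introduces an intermediate \emph{oracle} bootstrap process $\boof_t=\sum_{l=1}^p\tA_{l,p}\boof_{t-l}+\booe_{t,p}$ built from the \emph{true} factors, applies Theorem 4.1 of \citet{meyer_vector_2015} to that process (for which Assumption~\ref{1:a3} holds by hypothesis), and then shows the difference $\mathcal{M}_3$ between the estimated-factor bootstrap and the oracle bootstrap is conditionally negligible by comparing the two MA($\infty$) representations term by term. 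That comparison only requires $\sum_{l}\|\hPsi_{l,p}-\tPsi_{l,p}\|_F=O_P\bigl(p^{3/2}\|\hA_p-\tA_p\|_F\bigr)$ and $\mathbb{E}^\ast\|\be_{t,p}-\booe_{t,p}\|^2\to0$, which with Lemma~\ref{1:l9}'s rate $\|\hA_p-\tA_p\|_F=O_P\bigl(p^4(N^{-1/2}+T^{-1/2})\bigr)$ costs exactly $p^{11/2}(N^{-1/2}+T^{-1/2})\to0$, i.e.\ Assumption~\ref{1:a4} with nothing to spare.

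Your step (i) — transferring Assumption~\ref{1:a3} itself to the estimated-factor Yule--Walker coefficients so that Meyer--Kreiss can be invoked directly for $\{\bof_t\}$ — is where your route has a genuine quantitative gap. You would need $p^2\sum_{l=1}^p\|\hA_{l,p}-\pA_{l,p}\|_F=O_P(1)$, hence $p^2\sum_{l=1}^p\|\hA_{l,p}-\tA_{l,p}\|_F=O_P(1)$. By Cauchy--Schwarz this sum is bounded by $\sqrt{p}\,\|\hA_p-\tA_p\|_F$, so the requirement becomes $p^{13/2}(N^{-1/2}+T^{-1/2})=O_P(1)$ — one full power of $p$ more than Assumption~\ref{1:a4} delivers. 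So the exponent $11/2$ does \emph{not} absorb the $p^2$ of Assumption~\ref{1:a3} on top of the $p^4$ from the block-Toeplitz inversion and the $\sqrt{p}$ from the lag accumulation; it is calibrated to the weaker $p^{3/2}\|\hA_p-\tA_p\|_F$ requirement of the comparison argument. To repair your route you would either need a sharper version of Lemma~\ref{1:l9}, a stronger rate assumption, or — as the paper does — to observe that the bootstrap CLT does not actually need Assumption~\ref{1:a3} to hold for $\hA_{l,p}$, only that the resulting bootstrap process be asymptotically indistinguishable from the oracle one, which is a strictly cheaper statement. Your steps (ii) and (iii) are fine and correspond to the paper's Lemmas~\ref{1:k1} and~\ref{1:k2}.
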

\color{darkblue}
\begin{remark} \label{rem:c_vector_example}
The condition $\|\pc^{\top}\pQ\|_{1} < \infty$ in Theorem 4.1 ensures that the asymptotic variance of the bootstrap mean statistic remains well-defined as $N \to \infty$. A practical example satisfying this condition is the inference on the global cross-sectional mean. Consider $c = \frac{1}{N}(1, 1, \dots, 1)^{\top} \in \mathbb{R}^{N}$. In this case, the statistic of interest is the grand mean of the observed process. The projection term becomes
\begin{equation*}
\pc^{\top}\pQ = \frac{1}{N}\sum_{i=1}^{N}\pq_{i}.
\end{equation*}
Under Assumption~3.1 (ii) (where $\frac{1}{N}\pQ^{\top}\pQ = \pI_{r}$), the factor loadings are pervasive. Consequently, the average loading $\pc^{\top}\pQ$ converges to the population mean of the factor loadings (a constant vector in $\mathbb{R}^{r}$) as $N \to \infty$. Thus, the term $\|\pc^{\top}\pQ\|_{1}$ remains bounded, and the limiting distribution in Theorem 4.1 stabilizes.
\end{remark}
\color{black}
Theorem~\ref{1:t1} states the validity of the proposed AR-sieve bootstrap methods on estimated factors $\{\hf_t\}$. In general, the bootstrap inferences can be considered as an alternative statistical tool for practical use compared with the asymptotic results, which can be rather difficult to derive, especially for high-dimensional time series. The factor model in~($\ref{1:e1}$) filters out time-invariant noise $\{\pu_t\}$ and projects the original time series onto a low-dimensional subspace where the AR-sieve bootstrap procedure can be developed.

\subsection{Bootstrap consistency for autocovariance matrices}

For high-dimensional i.i.d.\ data, the covariance matrix plays an important role in dimension-reduction techniques, such as factor models and principal component analysis. However, for high-dimensional dependent data, the autocovariance matrices are vital or even more crucial than the covariance matrix. 
Therefore, it is critical to establish the bootstrap consistency for the autocovariance matrices under the proposed AR-sieve bootstrap method. In the next theorem, we show that the proposed AR-sieve bootstrap method can guarantee the asymptotic consistency on the autocovariance matrices, which in turn implies the validity of using bootstrap data $\{\py_t^\ast\}$ to approximate the original data $\{\py_t\}$.

Recall that $\pGa_f(k) = \text{Cov}(\pf_t,\pf_{t+k})$ is the autocovariance of unobservable factors $\{\pf_t\}$ at lag $k$, for $k>0$. Without loss of generality, we again assume that the means of the factors are~0 to simplify the notation and define $\text{Cov}^\ast$ as the covariance with respect to the measure that assigns probability $1/(T-p)$ to each observation. Denoted by $\pGa^\ast_f(k) = \text{Cov}^\ast(\bof_t,\bof_{t+k})$ the autocovariance of bootstrap factors $\{\bof_t\}$ at lag $k$, we have the following theorem on the consistency of $\pGa^\ast_f(k)$.
\begin{theorem}\label{1:t2}
Suppose that Assumptions~\ref{1:c1}, \ref{1:a2} ($\gamma = 1$) and \ref{1:a3} are satisfied for fixed and known number of factors $r$. In addition, if we further assume that the empirical distribution of $\{\pe_t\}$ converges weakly to the distribution function of $\mathcal{L}(\pe_t)$.
Then for $k \in \mathbb{N}$, we have as $N, T\rightarrow\infty$, 
\begin{equation*}
\left\| \pGa^\ast_f(k) - \pGa_f(k) \right\|_{2} \overset{p}{\to} 0.
\end{equation*}
\end{theorem}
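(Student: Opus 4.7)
My plan is to introduce an oracle bootstrap autocovariance $\widetilde{\pGa}^{\ast}_f(k)$, defined exactly as $\pGa^{\ast}_f(k)$ but with the VAR($p$) sieve fitted to the unobserved true factors $\{\pf_t\}$ in place of $\{\hf_t\}$. The triangle inequality then gives
\begin{equation*}
\|\pGa^{\ast}_f(k) - \pGa_f(k)\|_2 \;\le\; \underbrace{\|\pGa^{\ast}_f(k) - \widetilde{\pGa}^{\ast}_f(k)\|_2}_{(\mathrm{I})} \;+\; \underbrace{\|\widetilde{\pGa}^{\ast}_f(k) - \pGa_f(k)\|_2}_{(\mathrm{II})},
\end{equation*}
so it suffices to show that both terms vanish in probability. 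Term $(\mathrm{II})$ is the classical VAR-sieve bootstrap autocovariance consistency for a stationary vector process, and is available through the Meyer--Kreiss machinery collected in Lemma~\ref{1:l6}: the oracle bootstrap process shares its second-order structure with the companion process $\{\brf_t\}$, whose autocovariances coincide with those of $\{\pf_t\}$, and the Yule--Walker fit on the true factors converges to the AR($\infty$) representation under Assumptions~\ref{1:a2} and~\ref{1:a3}.

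For term $(\mathrm{I})$ I would pass to the moving-average representation. Since each bootstrap process is a causal VAR($p$) driven by iid innovations, writing them in MA form yields
\begin{equation*}
\pGa^{\ast}_f(k) \;=\; \sum_{j=0}^{\infty}\hPsi_{j+k}\,\hS_e^{\,\ast}\,\hPsi_j^{\top},\qquad
\widetilde{\pGa}^{\ast}_f(k) \;=\; \sum_{j=0}^{\infty}\tPsi_{j+k}\,\tS_e^{\,\ast}\,\tPsi_j^{\top},
\end{equation*}
where $\hPsi_j$ and $\tPsi_j$ are the MA coefficients obtained by inverting the matrix power series $\boldsymbol{I}_r-\sum_{l=1}^p\hA_{l,p}z^l$ and $\boldsymbol{I}_r-\sum_{l=1}^p\tA_{l,p}z^l$, respectively, and $\hS_e^{\,\ast},\tS_e^{\,\ast}$ are the empirical covariances of the centred residuals in the two procedures. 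Telescoping the summand and using sub-multiplicativity of $\|\cdot\|_F$, it is enough to establish (a) $\sum_{j\ge 0}\|\hPsi_j - \tPsi_j\|_F = o_P(1)$; (b) $\|\hS_e^{\,\ast} - \tS_e^{\,\ast}\|_F = o_P(1)$; together with the uniform summability $\sum_j\|\tPsi_j\|_F = O_P(1)$, which is guaranteed by Assumption~\ref{1:a2} with $\gamma = 1$ together with the spectral lower bound of Lemma~\ref{1:l5}.

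Both (a) and (b) are driven by bounds on $\sum_{l=1}^p\|\hA_{l,p}-\tA_{l,p}\|_F$ and on the perturbation of the residual sample covariance caused by replacing $\pf_t$ with $\hf_t$ in the Yule--Walker normal equations. The factor-estimation rate of \citet{lam_estimation_2011}, combined with the slow-growth restriction of Assumption~\ref{1:a4}, should deliver
\begin{equation*}
\sum_{l=1}^p\|\hA_{l,p}-\tA_{l,p}\|_F \;=\; O_P\!\left(p^{\alpha}\bigl(N^{-1/2}+T^{-1/2}\bigr)\right)
\end{equation*}
for a modest $\alpha>0$, which is $o_P(1)$ under Assumption~\ref{1:a4}. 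Converting this AR-coefficient bound into (a) relies on the standard power-series relation between the VAR($p$) coefficients and the induced MA coefficients, combined with the uniform boundedness of the spectral densities of the two fitted processes; (b) is then the easier of the two and follows from the same ingredients plus a direct computation, since the centred residuals are linear functionals of $\hf_t$ and the Yule--Walker coefficients.

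\textbf{Main obstacle.} The delicate step is (a): converting a summable bound on AR-coefficient differences into a \emph{summable} bound on MA-coefficient differences, uniformly in the sieve order $p$, while simultaneously propagating the factor-estimation error. Assumption~\ref{1:a4} permits only polynomial slack in $p$ (up to $p^{11/2}$), so the bookkeeping on how that slack is consumed by the power-series inversion and by the cross terms in the MA expansion is the real content of the argument. Everything else is either a direct consequence of Lemma~\ref{1:l6} or a routine application of the factor-consistency bound from \citet{lam_estimation_2011}.
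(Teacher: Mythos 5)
Your proposal is correct and follows essentially the same route as the paper: pass to the MA representation of the bootstrap autocovariance, insert an intermediate quantity via the triangle inequality, telescope, and reduce everything to summable bounds on MA-coefficient differences (driven by $\left\|\hA_p-\tA_p\right\|_F=O_P\left(p^4\left(N^{-1/2}+T^{-1/2}\right)\right)$ and Cauchy's inequality for the power-series inversion) together with convergence of the residual covariances. The only cosmetic difference is the choice of intermediate point — you split at the oracle bootstrap built on the true factors, while the paper splits at the population finite-predictor quantity $\pGa_{f,p}(k)=\sum_{l}\pPsi_{l,p}\pS_{e,p}\pPsi_{l+k,p}^\top$ — but the union of estimates required (the chains $\hPsi_{l,p}\to\tPsi_{l,p}\to\pPsi_{l,p}\to\pPsi_l$ and $\hS_{e,p}\to\tS_{e,p}\to\pS_{e,p}\to\pS_e$) is identical, and your explicit appeal to Assumption~\ref{1:a4} is in fact also needed by the paper's own argument even though the theorem statement omits it.
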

Let $\{\pdel_i(k)\}_{i=1}^r$ be the ordered spiked eigenvalues of $\frac{1}{N^2}\pGa_y(k)\pGa_y(k)^\top$, the symmetrized autocovariance matrices of $\{\py_t\}$ at lag $k>0$. And define $\{\bdel_i(k)\}_{i=1}^r$ as the first $r$ largest eigenvalues of $\frac{1}{N^2}\pGa^\ast_y(k)\pGa^\ast_y(k)^\top$, the bootstrap symmetrized autocovariance matrices of $\{\by_t\}$ at lag $k>0$, where $\pGa^\ast_y(k) = \text{Cov}^\ast(\by_{t},\by_{t+k})$. As a consequence of Theorem \ref{1:t2}, we immediately have the following proposition on the convergence of spiked eigenvalues of the bootstrap symmetrized autocovariance matrices to their population counterparts.
\begin{proposition}\label{1:t3}
Under the Assumptions of Theorem \ref{1:t2}, for $i = 1,2,...,r$ and $k \in \mathbb{N}$, we have 
\begin{align*}
\left\| \pGa^\ast_y(k) - \pGa_y(k) \right\|_2 \overset{p}{\to} 0, \ \ 
\left| \bdel_i(k) - \pdel_i(k) \right| \overset{p}{\to} 0, \ as \ N, T\rightarrow\infty. 
\end{align*}
\end{proposition}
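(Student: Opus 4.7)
The plan is to reduce Proposition \ref{1:t3} to Theorem \ref{1:t2} combined with the loading-matrix consistency of \citet{lam_estimation_2011}. The key structural observation is that, under Assumption \ref{1:c1}(i), the noise $\{\pu_t\}$ is white and independent of $\{\pf_t\}$, so for every $k\ge 1$,
\[
\pGa_y(k) \;=\; \pQ \,\pGa_f(k)\, \pQ^\top,
\]
while the bootstrap construction in Step 4 gives $\py_t^\ast = \hQ\bof_t$, hence
\[
\pGa_y^\ast(k) \;=\; \hQ\,\pGa_f^\ast(k)\,\hQ^\top.
\]
All the work therefore lies in controlling a difference of quadratic forms in the $N\times r$ loading matrix.

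The first step I would carry out is the telescoping decomposition
\[
\pGa_y^\ast(k) - \pGa_y(k) \;=\; \hQ\bigl[\pGa_f^\ast(k)-\pGa_f(k)\bigr]\hQ^\top \;+\; (\hQ-\pQ)\pGa_f(k)\hQ^\top \;+\; \pQ\pGa_f(k)(\hQ-\pQ)^\top,
\]
and bound each summand using: $\|\hQ\|_2=\|\pQ\|_2=\sqrt{N}$ from the normalization $\tfrac{1}{N}\pQ^\top\pQ=\pI_r$ in Assumption \ref{1:c1}(ii); $\|\pGa_f(k)\|_2=O(1)$ from Assumption \ref{1:a2}; Theorem \ref{1:t2} for the first term; and the operator-norm convergence rate for $\|\hQ-\pQ\|_2$ obtained in \citet{lam_estimation_2011}. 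Assumption \ref{1:a4} then provides the joint growth regime $N,T,p\to\infty$ under which the combined bound vanishes on the scale consistent with the $N^{-2}$ normalization of the spiked eigenvalues. This yields the first display of the proposition.

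For the eigenvalue statement I would invoke Weyl's inequality applied to the symmetric PSD matrices $\tfrac{1}{N^2}\pGa_y^\ast(k)\pGa_y^\ast(k)^\top$ and $\tfrac{1}{N^2}\pGa_y(k)\pGa_y(k)^\top$, obtaining
\[
\bigl|\bdel_i(k)-\pdel_i(k)\bigr| \;\le\; \left\|\tfrac{1}{N^2}\bigl[\pGa_y^\ast(k)\pGa_y^\ast(k)^\top - \pGa_y(k)\pGa_y(k)^\top\bigr]\right\|_2.
\]
Applying $AA^\top - BB^\top = (A-B)A^\top + B(A-B)^\top$ with $A=\pGa_y^\ast(k)$, $B=\pGa_y(k)$, and using $\|\pGa_y(k)\|_2 = O(N)$ together with $\|\pGa_y^\ast(k)\|_2 = O_p(N)$, the right-hand side is controlled by a constant multiple of $N^{-1}\|\pGa_y^\ast(k)-\pGa_y(k)\|_2$, which has just been shown to tend to zero in probability.

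The hard part will be the rotational identifiability of the loadings: $\hQ$ is only pinned down up to an orthogonal matrix $H$ acting on the right, so Theorem \ref{1:t2} must be invoked for the correspondingly rotated factor series $H^\top \pf_t$ and its bootstrap analogue. Fortunately the quadratic forms $\hQ\pGa_f^\ast\hQ^\top$ and $\pQ\pGa_f\pQ^\top$ are themselves invariant under such rotations, so $H$ cancels in the telescoping above; the technical care is in applying the \citet{lam_estimation_2011} bound to $\|\hQ - \pQ H\|_2$ for the appropriate $H$ and verifying the $H$-rotated version of Theorem \ref{1:t2} without requiring any new probabilistic estimates.
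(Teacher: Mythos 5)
Your proposal is correct and follows essentially the same route as the paper: the same telescoping of $\hQ\pGa_f^\ast(k)\hQ^\top-\pQ\pGa_f(k)\pQ^\top$ into three terms controlled by Theorem \ref{1:t2} and the rate for $\|\hQ-\pQ\|_2$ from \citet{lam_estimation_2011}, followed by Weyl's inequality and the decomposition $AA^\top-BB^\top=(A-B)A^\top+B(A-B)^\top$ with $\|\pGa_y^\ast(k)\|_2\asymp N$ for the eigenvalues. The only cosmetic differences are which factor carries the starred middle term in the telescoping and your explicit remark on rotational identifiability, which the paper absorbs into the identification convention of Section 3 rather than the proof itself.
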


The asymptotic property of spiked eigenvalues of symmetrized autocovariance matrices is significant in many applications. However, there is no literature due to the difficulties and complexities of studying dependent data when $N \to \infty$. Proposition~\ref{1:t3} verifies the consistency of the bootstrap on the spiked eigenvalues of the symmetrized autocovariance matrices and provides statistical tools to study the properties of spiked eigenvalues based on the AR-sieve bootstrap.
\begin{remark}
Despite that $\pGa^\ast_y(k) = \text{Cov}^\ast(\by_{t},\by_{t+k})$ are the autocovariances defined conditionally on the sample observations, the convergence results in Proposition~\ref{1:t3} are on the entire probability space, which allows for the use of autocovariances and their spiked eigenvalues computed from a bootstrap sample $\{\py_t^\ast\}$ to approximate the autocovariances and the corresponding spiked eigenvalues of the original data $\{\py_t\}$.
\end{remark}


\section{Simulation studies}\label{simu2}

\color{darkblue}
In this section, we evaluate the finite-sample performance of the proposed AR-sieve bootstrap confidence intervals for the mean statistics. We compute empirical coverage probabilities and discuss the impacts of sample size $T$, data dimension $N$, and factor strength under strong factor scenarios. Comprehensive additional simulation results are provided in Appendix~\ref{1:sec:Appendix_C} of the Supplementary Material. These include: (1) an examination of the proposed method's performance in constructing confidence intervals for the eigenvalues of the symmetrized autocovariance matrix; and (2) a comparison with the standard moving block bootstrap method for vector time series to demonstrate the curse of dimensionality in high-dimensional settings.

To evaluate the finite-sample performance of the proposed AR-sieve bootstrap method, we first examine the empirical coverage and average width of the bootstrap confidence intervals for the mean statistics (Theorem~\ref{1:t1}). We consider the data-generating process (DGP) based on the factor model
\begin{align}\label{sim1}
    \py_t = \pQo \pfo_t + \pu_t,
\end{align}
where the factor loading matrix $\pQo \in \mathbb{R}^{N \times r}$ is generated by QR decomposition of a matrix $N \times r$ with independent standard normal entries that satisfy $\pQ^{o\top} \pQo = \pI_r$. We set the number of factors at $r=2$. The idiosyncratic errors $\{u_{i,t}\}$ are generated as independent white noise $\mathcal{N}(0,1)$.

Latent factors $\pfo_t = (f_{1,t}, f_{2,t})^{\top}$ are generated from independent AR(1) processes to capture the temporal dependence
\begin{align*}
f_{i,t} = 0.5 f_{i,t-1} + e_{i,t}, \quad \text{for}\quad i=1,2.
\end{align*}
The autoregressive coefficient is set to $0.5$ to reflect a moderate temporal dependence. To study the impact of factor strength, we follow the definition in \citet{lam_factor_2012} and assume that the innovations $e_{i,t}$ follow $\mathcal{N}(0, \sigma_i^2 N^\nu)$, where $\nu \in (0,1]$ controls the strength. Specifically, we set the variances as $\sigma_1^2 = 1$ and $\sigma_2^2 = 0.5$. This scaling ensures that the first two eigenvalues of the accumulated symmetrized autocovariance matrices are spiked and distinct.

We focus on the strong factor scenarios where $\nu = 1$ or $\nu=0.8$. Simulations for weaker factors ($\nu \in \{0.6, 0.4, 0.2\}$) are provided in Appendix~\ref{1:sec:Appendix_C} of the Supplementary Material. For each scenario, we perform 1,000 Monte Carlo replications. In each replication, $B=999$ bootstrap samples are generated to construct confidence intervals for the standardized mean statistic, defined as $\theta_y \coloneqq \frac{\sqrt{T}}{\sqrt{N^{\nu}}} \pone^\top \pQ \pmu_f$. Standardization by $N^{\nu}$ facilitates the comparison of interval lengths across different factor strengths.

Specifically, we compute the bootstrap estimates as $\overline{\overline{y^\ast}} = \frac{\sqrt{T}}{\sqrt{N^{\nu}}} \pone^\top \hQ \overline{\bof}$, where $\overline{\bof}$ is the bootstrap sample mean of the factors and the order $p$ of the AR-sieve is selected via the Akaike Information Criterion (AIC) for each replication. We evaluate the performance using two types of intervals: the nonparametric bootstrap interval using quantiles and the parametric bootstrap interval based on normality. 

Both methods are computationally efficient and widely used. For an arbitrary statistic $\theta$ and its sample estimate $\widehat{\theta}$, the nonparametric bootstrap interval (also known as the reverse percentile interval) is calculated as 
\begin{align*}
	\left(2\widehat{\theta} - \theta^\ast_{(1-\alpha/2)},\ 2\widehat{\theta} - \theta^\ast_{(\alpha/2)}  \right),
\end{align*}
where $\theta^\ast_{(\alpha)}$ denotes the $\alpha$-percentile of the bootstrap distribution and $\alpha$ is the significance level. This method relies on the approximation that the distribution of $(\theta^\ast - \widehat{\theta})$ mimics that of $(\widehat{\theta} - \theta)$. Alternatively, the parametric bootstrap interval based on normality is computed as
\begin{align*}
\left(\widehat{\theta} - b^\ast - \sqrt{v^\ast}z_{(1-\alpha/2)}, \widehat{\theta} - b^\ast + \sqrt{v^\ast}z_{(1-\alpha/2)}\right),
\end{align*}
where $b^\ast$ is the bootstrap bias estimate, $v^\ast$ is the bootstrap variance estimate, and $z_{(1-\alpha/2)}$ is the $(1-\alpha/2)$ percentile of the standard normal distribution. 
While more sophisticated methods, such as percentile-$t$ or accelerated bias-corrected intervals, could offer higher-order accuracy \citep{hall_theoretical_1988}, they require a double bootstrap procedure, which is computationally expensive. Since our primary goal is to verify the validity and consistency of the proposed AR-sieve bootstrap framework, we focus on these two fundamental and efficient interval types.

To provide a comprehensive performance assessment, we report empirical coverage, average width, and interval score \citep{gneiting_strictly_2007}. The interval score for a confidence interval $(l,u)$ is defined as
\begin{align*}
S_{\alpha} = (u-l) + \frac{2}{\alpha}(l-\theta)\mathbbm{1}{\theta<l} + \frac{2}{\alpha}(\theta-u)\mathbbm{1}{\theta>u}.
\end{align*}
This metric rewards narrower intervals while penalizing those that fail to cover the true parameter $\theta$. It serves as a robust summary statistic for comparing intervals, especially when coverage probabilities and widths are similar.

Table~\ref{1:ta1a} presents the results for the strongest factor scenario ($\nu = 1$). We report the metrics for nominal coverage levels of $95\%$, $90\%$, and $80\%$ across various combinations of $N$ and $T$. As shown in Table~\ref{1:ta1a}, when the sample size $T$ is sufficiently large, the empirical coverage approaches the nominal levels and remains robust to the ratio $N/T$. In particular, the average width of the intervals remains stable or even improves as $N$ increases. This phenomenon reflects the ``blessing of dimensionality'' in factor analysis, where the estimation precision of the common factors improves with the cross-sectional dimension. Overall, the performance benefits from increases in both $N$ and $T$. Comparing the two interval types, the average interval scores are very close across almost all scenarios, indicating that both the nonparametric and parametric approaches perform equally well for strong factors. Table~\ref{1:ta1c} displays the results for the slightly weaker factor case ($\nu = 0.8$), exhibiting a similar consistent performance.
\color{black}
\begin{table}[!htbp]
	\centering
	\caption{Empirical coverage, average width, and interval score of bootstrap intervals using quantiles for $\theta_y$ with $\nu=1$.}
	\label{1:ta1a}
	\resizebox{\textwidth}{!}{%
		\begin{tabular}{|ccccccccccc|}
			\hline
			&  & \multicolumn{3}{c}{95\%} & \multicolumn{3}{c}{90\%} & \multicolumn{3}{c|}{80\%} \\ \hline
			T & N & \begin{tabular}[c]{@{}c@{}}Empirical\\  coverage\end{tabular} & \begin{tabular}[c]{@{}c@{}}Average\\ width\end{tabular} & \begin{tabular}[c]{@{}c@{}}Average\\  interval score\end{tabular} & \begin{tabular}[c]{@{}c@{}}Empirical\\  coverage\end{tabular} & \begin{tabular}[c]{@{}c@{}}Average\\ width\end{tabular} & \begin{tabular}[c]{@{}c@{}}Average\\  interval score\end{tabular} & \begin{tabular}[c]{@{}c@{}}Empirical\\  coverage\end{tabular} & \begin{tabular}[c]{@{}c@{}}Average\\ width\end{tabular} & \begin{tabular}[c]{@{}c@{}}Average\\  interval score\end{tabular} \\ \hline
			\multicolumn{11}{|c|}{Nonparametric bootstrap intervals using quantiles} \\ \hline
			\multirow{5}{*}{200} & 50 & $0.941$ & $8.369$ & $11.572$ & $0.892$ & $7.029$ & $10.686$ & $0.799$ & $5.480$ & $9.449$ \\
			& 100 & $0.948$ & $8.407$ & $11.339$ & $0.901$ & $7.067$ & $10.466$ & $0.811$ & $5.511$ & $9.289$ \\
			& 200 & $0.941$ & $8.366$ & $11.868$ & $0.889$ & $7.038$ & $10.745$ & $0.787$ & $5.488$ & $9.568$ \\
			& 500 & $0.935$ & $8.438$ & $12.514$ & $0.876$ & $7.098$ & $11.470$ & $0.778$ & $5.536$ & $10.394$ \\
			& 1000 & $0.943$ & $8.513$ & $13.615$ & $0.889$ & $7.161$ & $11.759$ & $0.792$ & $5.584$ & $10.170$ \\
			&&&&&&&&&& \\ 
			\multirow{5}{*}{500} & 50 & $0.936$ & $8.501$ & $12.354$ & $0.882$ & $7.160$ & $11.352$ & $0.781$ & $5.579$ & $10.212$ \\
			& 100 & $0.940$ & $8.275$ & $11.693$ & $0.887$ & $6.964$ & $10.804$ & $0.781$ & $5.427$ & $9.808$ \\
			& 200 & $0.943$ & $8.430$ & $12.900$ & $0.891$ & $7.096$ & $11.465$ & $0.792$ & $5.531$ & $9.978$ \\
			& 500 & $0.946$ & $8.354$ & $11.818$ & $0.902$ & $7.023$ & $10.678$ & $0.797$ & $5.484$ & $9.566$ \\
			& 1000 & $0.941$ & $8.147$ & $12.547$ & $0.894$ & $6.850$ & $10.905$ & $0.802$ & $5.344$ & $9.448$ \\
			&&&&&&&&&& \\ 
			\multirow{5}{*}{1000} & 50 & $0.935$ & $8.594$ & $13.142$ & $0.898$ & $7.219$ & $11.658$ & $0.777$ & $5.629$ & $10.220$ \\
			& 100 & $0.944$ & $8.428$ & $13.273$ & $0.892$ & $7.088$ & $11.662$ & $0.777$ & $5.531$ & $10.442$ \\
			& 200 & $0.938$ & $8.194$ & $12.472$ & $0.888$ & $6.889$ & $11.300$ & $0.784$ & $5.371$ & $9.943$ \\
			& 500 & $0.946$ & $8.469$ & $11.918$ & $0.894$ & $7.123$ & $11.077$ & $0.806$ & $5.559$ & $9.865$ \\
			& 1000 & $0.944$ & $8.479$ & $11.928$ & $0.884$ & $7.133$ & $11.177$ & $0.783$ & $5.565$ & $10.141$ \\ \hline
			\multicolumn{11}{|c|}{Parametric bootstrap intervals based on normality} \\ \hline
			\multirow{5}{*}{200} & 50 & $0.944$ & $8.402$ & $11.457$ & $0.893$ & $7.051$ & $10.646$ & $0.795$ & $5.493$ & $9.482$ \\
			& 100 & $0.947$ & $8.449$ & $11.321$ & $0.903$ & $7.090$ & $10.444$ & $0.818$ & $5.524$ & $9.271$ \\
			& 200 & $0.941$ & $8.407$ & $11.698$ & $0.890$ & $7.055$ & $10.657$ & $0.789$ & $5.497$ & $9.534$ \\
			& 500 & $0.935$ & $8.481$ & $12.343$ & $0.878$ & $7.117$ & $11.414$ & $0.775$ & $5.545$ & $10.379$ \\
			& 1000 & $0.942$ & $8.555$ & $13.575$ & $0.888$ & $7.180$ & $11.747$ & $0.793$ & $5.594$ & $10.211$ \\
			&&&&&&&&&& \\
			\multirow{5}{*}{500} & 50 & $0.939$ & $8.548$ & $12.167$ & $0.886$ & $7.174$ & $11.276$ & $0.775$ & $5.590$ & $10.198$ \\
			& 100 & $0.943$ & $8.318$ & $11.609$ & $0.890$ & $6.981$ & $10.837$ & $0.781$ & $5.439$ & $9.814$ \\
			& 200 & $0.942$ & $8.470$ & $12.784$ & $0.897$ & $7.109$ & $11.384$ & $0.794$ & $5.538$ & $9.980$ \\
			& 500 & $0.944$ & $8.395$ & $11.648$ & $0.903$ & $7.046$ & $10.592$ & $0.797$ & $5.489$ & $9.532$ \\
			& 1000 & $0.942$ & $8.190$ & $12.434$ & $0.895$ & $6.873$ & $10.928$ & $0.806$ & $5.355$ & $9.406$ \\
			&&&&&&&&&& \\
			\multirow{5}{*}{1000} & 50 & $0.938$ & $8.632$ & $13.181$ & $0.898$ & $7.244$ & $11.639$ & $0.778$ & $5.644$ & $10.206$ \\
			& 100 & $0.945$ & $8.470$ & $13.145$ & $0.891$ & $7.108$ & $11.652$ & $0.780$ & $5.538$ & $10.419$ \\
			& 200 & $0.942$ & $8.232$ & $12.587$ & $0.891$ & $6.908$ & $11.266$ & $0.786$ & $5.382$ & $9.908$ \\
			& 500 & $0.947$ & $8.516$ & $11.874$ & $0.895$ & $7.147$ & $11.015$ & $0.810$ & $5.568$ & $9.826$ \\
			& 1000 & $0.948$ & $8.525$ & $11.815$ & $0.888$ & $7.154$ & $11.097$ & $0.782$ & $5.574$ & $10.119$ \\ \hline
		\end{tabular}%
	}
\end{table}

\begin{table}[!htbp]
	\centering
	\caption{Empirical coverage, average width, and interval score of bootstrap intervals for $\theta_y$ with $\nu=0.8$.}
	\label{1:ta1c}
	\resizebox{\textwidth}{!}{%
		\begin{tabular}{|ccccccccccc|}
			\hline
			&  & \multicolumn{3}{c}{95\%} & \multicolumn{3}{c}{90\%} & \multicolumn{3}{c|}{80\%} \\ \hline
			T & N & \begin{tabular}[c]{@{}c@{}}Empirical\\  coverage\end{tabular} & \begin{tabular}[c]{@{}c@{}}Average\\ width\end{tabular} & \begin{tabular}[c]{@{}c@{}}Average\\  interval score\end{tabular} & \begin{tabular}[c]{@{}c@{}}Empirical\\  coverage\end{tabular} & \begin{tabular}[c]{@{}c@{}}Average\\ width\end{tabular} & \begin{tabular}[c]{@{}c@{}}Average\\  interval score\end{tabular} & \begin{tabular}[c]{@{}c@{}}Empirical\\  coverage\end{tabular} & \begin{tabular}[c]{@{}c@{}}Average\\ width\end{tabular} & \begin{tabular}[c]{@{}c@{}}Average\\  interval score\end{tabular} \\ \hline
			\multicolumn{11}{|c|}{Nonparametric bootstrap intervals using quantiles} \\ \hline
			\multirow{5}{*}{200} & 50 & $0.948$ & $8.391$ & $11.301$ & $0.897$ & $7.044$ & $10.418$ & $0.807$ & $5.493$ & $9.236$ \\
			& 100 & $0.955$ & $8.432$ & $10.869$ & $0.903$ & $7.087$ & $10.158$ & $0.816$ & $5.530$ & $9.068$ \\
			& 200 & $0.950$ & $8.388$ & $11.421$ & $0.894$ & $7.051$ & $10.315$ & $0.802$ & $5.499$ & $9.181$ \\
			& 500 & $0.940$ & $8.472$ & $11.870$ & $0.887$ & $7.125$ & $10.800$ & $0.788$ & $5.559$ & $9.881$ \\
			& 1000 & $0.951$ & $8.610$ & $12.637$ & $0.899$ & $7.248$ & $10.940$ & $0.807$ & $5.647$ & $9.628$ \\
			&&&&&&&&&& \\
			\multirow{5}{*}{500} & 50 & $0.939$ & $8.507$ & $12.423$ & $0.884$ & $7.161$ & $11.384$ & $0.777$ & $5.578$ & $10.242$ \\
			& 100 & $0.943$ & $8.274$ & $11.467$ & $0.893$ & $6.962$ & $10.626$ & $0.788$ & $5.429$ & $9.662$ \\
			& 200 & $0.943$ & $8.463$ & $12.846$ & $0.894$ & $7.120$ & $11.297$ & $0.800$ & $5.546$ & $9.901$ \\
			& 500 & $0.949$ & $8.392$ & $11.535$ & $0.907$ & $7.048$ & $10.397$ & $0.798$ & $5.506$ & $9.322$ \\
			& 1000 & $0.940$ & $8.173$ & $12.197$ & $0.902$ & $6.876$ & $10.698$ & $0.811$ & $5.363$ & $9.249$ \\
			&&&&&&&&&& \\
			\multirow{5}{*}{1000} & 50 & $0.933$ & $8.590$ & $13.273$ & $0.892$ & $7.216$ & $11.784$ & $0.774$ & $5.631$ & $10.288$ \\
			& 100 & $0.942$ & $8.428$ & $13.244$ & $0.896$ & $7.097$ & $11.703$ & $0.769$ & $5.532$ & $10.485$ \\
			& 200 & $0.936$ & $8.195$ & $12.470$ & $0.894$ & $6.894$ & $11.246$ & $0.784$ & $5.376$ & $9.867$ \\
			& 500 & $0.950$ & $8.490$ & $11.764$ & $0.892$ & $7.138$ & $10.971$ & $0.809$ & $5.571$ & $9.820$ \\
			& 1000 & $0.949$ & $8.498$ & $11.801$ & $0.887$ & $7.147$ & $11.066$ & $0.782$ & $5.571$ & $10.094$ \\ \hline
			\multicolumn{11}{|c|}{Parametric bootstrap intervals based on normality} \\ \hline
			\multirow{5}{*}{200} & 50 & $0.950$ & $8.423$ & $11.218$ & $0.898$ & $7.068$ & $10.412$ & $0.802$ & $5.507$ & $9.258$ \\
			& 100 & $0.955$ & $8.476$ & $10.796$ & $0.905$ & $7.113$ & $10.086$ & $0.818$ & $5.542$ & $9.046$ \\
			& 200 & $0.948$ & $8.427$ & $11.318$ & $0.900$ & $7.072$ & $10.254$ & $0.805$ & $5.510$ & $9.149$ \\
			& 500 & $0.942$ & $8.516$ & $11.765$ & $0.893$ & $7.147$ & $10.735$ & $0.790$ & $5.568$ & $9.860$ \\
			& 1000 & $0.954$ & $8.653$ & $12.561$ & $0.897$ & $7.262$ & $10.991$ & $0.803$ & $5.658$ & $9.636$ \\
			&&&&&&&&&& \\
			\multirow{5}{*}{500} & 50 & $0.942$ & $8.546$ & $12.259$ & $0.880$ & $7.172$ & $11.316$ & $0.778$ & $5.588$ & $10.236$ \\
			& 100 & $0.945$ & $8.316$ & $11.383$ & $0.892$ & $6.979$ & $10.629$ & $0.786$ & $5.437$ & $9.664$ \\
			& 200 & $0.943$ & $8.498$ & $12.714$ & $0.900$ & $7.131$ & $11.261$ & $0.796$ & $5.556$ & $9.882$ \\
			& 500 & $0.949$ & $8.429$ & $11.348$ & $0.910$ & $7.074$ & $10.308$ & $0.804$ & $5.512$ & $9.311$ \\
			& 1000 & $0.945$ & $8.219$ & $12.119$ & $0.900$ & $6.898$ & $10.728$ & $0.815$ & $5.374$ & $9.217$ \\
			&&&&&&&&&& \\
			\multirow{5}{*}{1000} & 50 & $0.933$ & $8.631$ & $13.249$ & $0.891$ & $7.243$ & $11.750$ & $0.773$ & $5.643$ & $10.288$ \\
			& 100 & $0.943$ & $8.472$ & $13.150$ & $0.891$ & $7.110$ & $11.681$ & $0.774$ & $5.540$ & $10.458$ \\
			& 200 & $0.937$ & $8.233$ & $12.517$ & $0.894$ & $6.909$ & $11.213$ & $0.783$ & $5.383$ & $9.839$ \\
			& 500 & $0.951$ & $8.534$ & $11.722$ & $0.894$ & $7.162$ & $10.896$ & $0.808$ & $5.580$ & $9.769$ \\
			& 1000 & $0.952$ & $8.540$ & $11.670$ & $0.888$ & $7.167$ & $10.998$ & $0.784$ & $5.584$ & $10.071$ \\ \hline
		\end{tabular}%
	}
\end{table}

\newpage

\section{Empirical application: Particulate matter concentration}\label{1:sec:6}

We apply the proposed AR-sieve bootstrap method to a real data set. The raw data are observations of $\text{PM}_{10}$ particles in the air, collected on a half-hour basis in Graz, Austria, from 1 October 2010 to 31 March 2011. The particles $\text{PM}_{10}$ represent a common type of air pollutant that can be found in smoke and dust with an aerodynamic diameter of less than $0.01$mm. 

This data set has been studied by \citet{hormann_dynamic_2015} for topics of dynamic functional principal component analysis (FPCA), by \cite{Shang17} for topics of dynamic updating, and by \citet{shang_bootstrap_2018} for comparisons of bootstrap methods for stationary functional time series. The original data are pre-processed by a square-root transformation to stabilize the variance and avoid heavy-tailed observations as directed by \citet{aue_prediction_2015} and \citet{hormann_dynamic_2015}. The square-root of $\text{PM}_{10}$ levels contained in a matrix $48 \times 182$ are then plotted in Figure~\ref{1:f10} as high-dimensional time series over $182$ days with the dimension of $48$ and in Figure~\ref{1:f11} as $182$ days of $48$ half-hourly observations within each day. In general, $\text{PM}_{10}$ concentration levels are relatively high in winter when temperatures are low and pollutants related to daily life, such as traffic and heating, lack space to disperse in the atmosphere. The day-to-day $\text{PM}_{10}$ levels in winter, therefore, are highly temporally dependent, while the half-hourly observations in each day experience similar patterns which are mainly related to people's day-to-day life and temperature. 
\begin{figure}[!htbp]
\centering
\begin{subfigure}[h]{0.43\textwidth}
\includegraphics[scale=0.6]{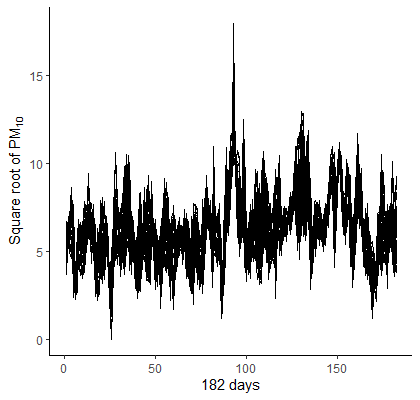}
\caption{Univariate time series plot}\label{1:f10}
\end{subfigure}
\qquad
\begin{subfigure}[h]{0.43\textwidth}
\includegraphics[scale=0.6]{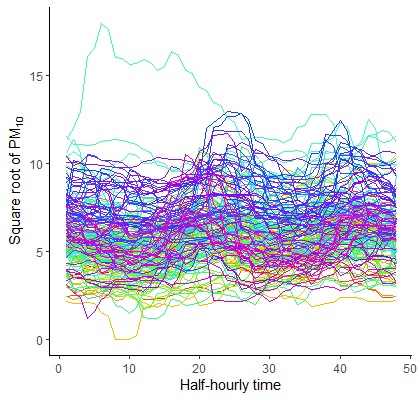}
\caption{Functional time series plot}\label{1:f11}
\end{subfigure}	        
\caption{Observed time series of (square-root) $\text{PM}_{10}$ levels}
\end{figure} 

In \citet{hormann_dynamic_2015} and \citet{shang_bootstrap_2018}, observations of half-hourly $\text{PM}_{10}$ levels as in Figure~\ref{1:f11} are assumed to come from a functional curve. In general, for a functional time series, the original observations are smoothed before further studies such as FPCA and functional bootstrap. Hence, according to \citet{hormann_dynamic_2015} and \citet{shang_bootstrap_2018}, there are $182$ temporal dependent functional curves, each smoothed from $48$ observations. However, as illustrated in Appendix~\ref{appendix0} of the Supplementary Material, the pre-smoothing results rely heavily on the smoothness condition of the functional curve. When the observations are not dense enough, pre-smoothing may cause a loss of information, especially on local patterns. To maintain the original features of time-series observations to the greatest extent, we treat the data as a multivariate or high-dimensional time series. We then perform the proposed AR-sieve bootstrap methods with a factor model on this $48$ by $182$ matrix of time series. This creates a bootstrap confidence interval for the mean levels of (square root) $\text{PM}_{10}$ that are temporal dependent at each half-hourly time point, and to create a bootstrap confidence surface for the lag-$1$ autocovariance matrix of (square root) $\text{PM}_{10}$ levels. 
\begin{figure}[!htbp]
\begin{minipage}[b]{0.45\linewidth}
\centering
\includegraphics[width=\textwidth]{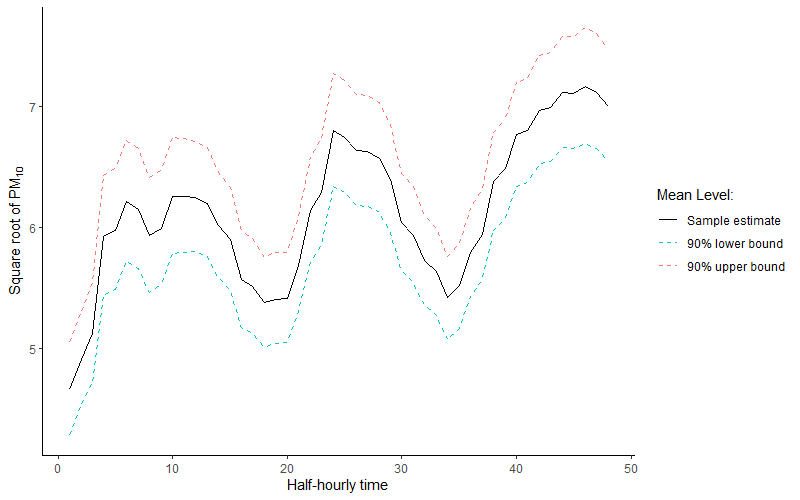}
\caption{$90\%$ AR-sieve bootstrap confidence interval for the mean of temporal dependent (square root) $\text{PM}_{10}$ levels at $48$ half-hourly time}\label{1:f12}
\end{minipage}
\hspace{0.5cm}
\begin{minipage}[b]{0.45\linewidth}
\centering
\includegraphics[width=\textwidth]{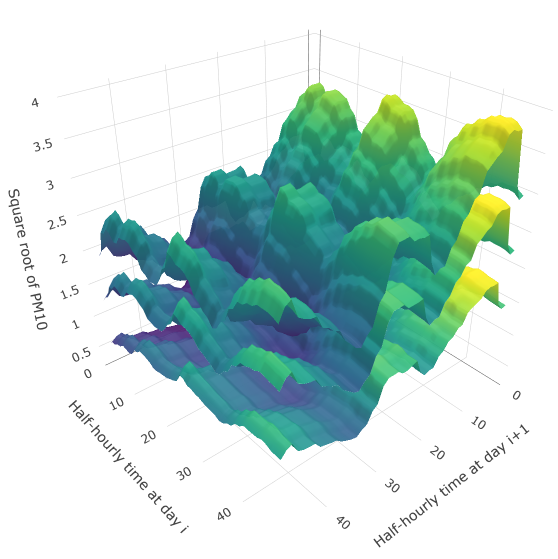}
\caption{$90\%$ AR-sieve bootstrap confidence surface for lag-$1$ autocovariance of temporal dependent (square root) $\text{PM}_{10}$ levels at $48$ half-hourly time point}\label{1:f13}
\end{minipage}
\end{figure}

In Figure~\ref{1:f12}, a $90\%$ nonparametric bootstrap interval using quantiles is created on the mean levels of (square root) $\text{PM}_{10}$, defined as $\theta_y \coloneqq \pQ \pmu_f$ with $\pmu_f$ denoting the population mean of temporal dependent factors $\{\pf_t\}$. From this graph of the sample estimate and the confidence interval of $\theta_y$, it is clear that local patterns, for example between $4$\textsuperscript{th} and $10$\textsuperscript{th} half-hourly time points, are preserved flawlessly by our proposed AR-sieve bootstrap method. Similarly, a sample estimate and a $90\%$ nonparametric bootstrap interval using quantiles for the lag-$1$ autocovariance matrix $\text{Cov}(\py_t,\py_{t+1})$ of temporal dependent (square root) $\text{PM}_{10}$ levels at $48$ half-hourly time points are also computed and presented in Figure~\ref{1:f13}. This nonparametric bootstrap interval using quantiles provides interval estimates on autocovariance of (square root) $\text{PM}_{10}$ levels between two consecutive days, where, as shown in Figure~\ref{1:f13}, local patterns are again completely preserved by the proposed AR-sieve bootstrap method. 

\section{Conclusions and discussions}\label{1:sec:7}

\comm{We utilize factor models to effectively reduce dimensionality and capture temporal dependence, enabling the establishment of the AR-sieve bootstrap for high-dimensional time series.} Specifically, we suggest using autocovariance to estimate the factor model and performing an AR-sieve bootstrap on the estimated factors to provide the ultimate inferences on the original time series. Our proposed AR-sieve bootstrap methods using factor models provide valid statistical inferences on the mean statistic and maintain consistency on bootstrap estimates of spiked eigenvalues of autocovariance matrices. Simulation studies provide numerical evidence on the finite-sample performance of the AR-sieve bootstrap methods. Finally, we apply our methods to the $\text{PM}_{10}$ data to construct bootstrap confidence intervals for the mean vector and the autocovariance matrix, respectively. 

Our work is crucial as a building block for bootstrap methods for high-dimensional time series. We propose a low-rank model for the AR-sieve bootstrap on high-dimensional stationary time series. There are two ways in which the present paper could be further extended: 
\begin{inparaenum}
\item[1)] The asymptotics of the bootstrap validity on the mean statistics can be extended for weaker factor models; 
\item[2)] While the AR-sieve bootstrap is only valid for stationary time series, alternative bootstrap methods can be considered on the factors where the dimension has been reduced.
\end{inparaenum}

\section*{Acknowledgment}

The authors thank the insightful comments provided by the two reviewers.

\bigskip


\begin{appendices}

\section{Technical proof of theorems}\label{1:sec:Appendix_A}

\begin{proof}[Proof of Lemma~\ref{1:l5}]
The upper bound $d$ for all $\omega \in (0,2\pi]$ follows directly from the norm summability condition stated in Assumption~\ref{1:a2}. The assumption of strong factors in Assumption~\ref{1:c1} implies the positivity on eigenvalues of the spectral density matrix $W_{\pf} (\omega)$. Denoted by $\sigma_i(\omega)$, the minimum eigenvalue of $W_{\pf} (\omega)$ for $i=1,2,\dots,r$, then $\sigma_i(\omega)$ is continuous in $(0,2\pi]$ and strictly positive. Denoted by $\sigma_\text{min}=\text{min}_{\omega \in (0,2\pi]} (\sigma_i (\omega))$, the minimum eigenvalue of the spectral density matrix of $\{\pf_t\}$, then there exists a constant $c > 0$ so that $\sigma_\text{min} \ge c$ for all frequencies $\omega \in (0,2\pi]$.
\end{proof}

\color{darkblue}
\begin{proof}[Proof of Theorem~\ref{1:t1}]
Let $\boof_t = \sum_{l=1}^{p} \tA_{l,p} \boof_{t-l} + \booe_{t,p}$, where $\{\tA_{l,p},\ l=1,2,...,p\}$ are the estimators of the AR coefficient matrices based on true factors $\{\pf_t\}$, and $\{\booe_{t,p}, t=p+1,p+2,...,T\}$ are generated by i.i.d.\ resampling from the centered residuals $(\te_{t,p} - \overline{\te}_{{T'},p})$ with $\te_{t,p}=\pf_t-\sum_{l=1}^{p} \tA \pf_{t-l}$ and $\overline{\te}_{{T'},p} = \frac{1}{T-p} \sum_{t=p+1}^T \te_{t,p}$. Therefore, $\{\boof_t\}$ are bootstrap pseudo-variables generated based on the true factors $\{\pf_t\}$ rather than $\{\hf_t\}$. Recall that $\{\bof_t\}$ are bootstrapped based on the centered residuals $\{(\he_{t,p} - \overline{\he}_{{T'},p})\}$ \comm{for $t=p+1,...,T$}, with $\he_{t,p}=\hf_t-\sum_{l=1}^{p} \hA_{l,p} \hf_{t-l}$ and $\overline{\he}_{{T'},p} = \frac{1}{T-p} \sum_{t=p+1}^T \he_{t,p}$, and we define $\mathbb{E}^\ast$ and $\text{Cov}^\ast$ as the expectation and covariance with respect to the measure that assigns probability $1/(T-p)$ to each observation, respectively. 
\color{darkblue}
Recall that $T' = T-p$ is the effective sample size, the bootstrap empirical mean is defined as $\overline{\bof_{T'}} = \frac{1}{T'}\sum_{t=p+1}^{T}\bof_{t}$, and the sample mean of the estimated factors is defined as $\overline{\hf_{T'}} = \frac{1}{T'}\sum_{t=p+1}^{T}\hf_{t}$. By definition, $\mathbb{E}^\ast \overline{\bof_T}= \overline{\hf_T}$.
    
Then, we decompose the standardized statistic as
\begin{align*}
\sqrt{T'} \pc^\top \hQ \left( \overline{\bof_{T'}} - \mathbb{E}^\ast \overline{\bof_{T'}} \right)
&=
\sqrt{T'} \pc^\top \pQ \left( \overline{\boof_{T'}} - \mathbb{E}^\ast \overline{\boof_{T'}} \right) +
\sqrt{T'} \pc^\top \left(\hQ-\pQ\right) \left( \overline{\bof_{T'}} - \mathbb{E}^\ast \overline{\bof_{T'}} \right) \\
&+
\sqrt{T'}\pc^\top \pQ \left[\left( \overline{\bof_{T'}} - \mathbb{E}^\ast \overline{\bof_{T'}} \right)-\left( \overline{\boof_{T'}} - \mathbb{E}^\ast \overline{\boof_{T'}} \right)\right]
\eqqcolon \mathcal{M}_1 + \mathcal{M}_2 + \mathcal{M}_3,
\end{align*}
with obvious definitions of $\mathcal{M}_1, \mathcal{M}_2$ and $\mathcal{M}_3$.
	
For the term $\mathcal{M}_1$, under Assumptions~\ref{1:c1} (iii), \ref{1:a2} and the additional assumption in Theorem \ref{1:t1} that $\lim_{T \to \infty} \mathbb{V} (\sqrt{{T'}}\overline{\pf_{T'}} ) = \sum_{k\in\mathbf{Z}} \pGa_{\pf} (k) < \infty$, using Theorem 2.1 in \citet{politis_subsampling_1997}, we have the CLT for $\sqrt{{T'}}\ \overline{\pf_{T'}}$ as $\sqrt{{T'}}\left( \overline{\pf_{T'}} - \mathbb{E}\overline{\pf_{T'}} \right) \overset{d}\to \mathcal{N}\ \left(0,\ \sum_{k\in\mathbf{Z}} \pGa_{\pf} (k)\right).$ Moreover, under the additional assumptions in Theorem~\ref{1:t1}, $\pc^\top \pQ$ is an $r$-dimensional vector such that $\| \pc^\top \pQ \|_{\ell_1} < \infty$ for a fixed $r$, Therefore, under Assumptions~\ref{1:c1} (ii) and~\ref{1:a2}, we can use Cramer-Wold Theorem \citep{cramer_theorems_1936} to conclude for the scalar $\sqrt{{T'}} \pc^\top \pQ \overline{\pf_{T'}}$ that
\begin{align*}
\sqrt{{T'}} \pc^\top \pQ \left( \overline{\pf_{T'}} - \mathbb{E}\overline{\pf_{T'}} \right)
\overset{d}\to \mathcal{N}\ \left(0,\  \pc^\top \pQ \left( \sum_{k\in\mathbf{Z}} \pGa_{\pf}(k) \right)  \pQ^\top \pc \right),
\end{align*}
when $T,N \to \infty$. 
	
In addition, under the strong mixing condition on the true factors $\{\pf_t\}$, the empirical moments of $\{\pe_t\}$ converge to their population counterpart. Therefore, under all the assumptions of \ref{1:t1}, we fulfill all the conditions of Theorem~4.1 in \citet{meyer_vector_2015}. Consequently, we can use Theorem~4.1 in \citet{meyer_vector_2015} to conclude that the general VAR-sieve bootstrap is valid for $\sqrt{{T'}} \pc^\top \pQ \overline{\pf_{T'}}$ since $\sqrt{{T'}} \pc^\top \pQ \overline{\pf_{T'}}$ shares the same CLT with its counterpart generated from the companion process as discussed in \citet{meyer_vector_2015}. Hence,
\begin{align*}
d_K\left(\mathcal{L} \left( \left. \sqrt{{T'}} \pc^\top \hQ \left( \overline{\bof_{T'}} - \mathbb{E}^\ast \overline{\bof_{T'}} \right) \right| \py_1,\py_2,\dots,\py_{T} \right), \mathcal{L} \left(\sqrt{{T'}} \pc^\top \pQ \left( \overline{\pf_{T'}} - \mathbb{E} \overline{\pf_{T'}} \right) \right)   \right) =o_P\left(1\right)
\end{align*}
as $T, N \to \infty$. 
	
Therefore, to see the assertion in Theorem~\ref{1:t1}, we first need to show that when $T,N \to \infty$, both $\mathcal{M}_2$ and $\mathcal{M}_3$ tend to $0$ in probability, then apply Slutsky's theorem. To show $\mathcal{M}_2 \to 0$ in probability for $T,N \to \infty$, we first notice that
$
\sqrt{{T'}}\pc^\top \left(\hQ-\pQ\right) \left( \overline{\bof_{T'}} - \mathbb{E}^\ast \overline{\bof_{T'}} \right) = \frac{1}{\sqrt{{T'}}} \pc^\top \left(\hQ-\pQ\right) \sum_{t=p+1}^{T}\left(\bof_t - \overline{\hf_{T'}} \right).
$
Therefore, we can show that
\begin{align*}
& \mathbb{E} \left[ \sqrt{{T'}} \pc^\top \left(\hQ-\pQ\right) \left( \overline{\bof_{T'}} - \mathbb{E} \overline{\bof_{T'}} \right) \right]^2
=
\mathbb{E} \left[ \frac{1}{{T'}} \pc^\top \left(\hQ-\pQ\right) \sum_{t=p+1}^{T}\left(\bof_t - \overline{\hf_{T'}} \right) \right]  \left[ \sum_{s=p+1}^{T}\left(\bof_s - \overline{\hf_{T'}} \right)^\top  \left(\hQ-\pQ\right)^\top \pc \right]\\
=& 
\left[ \frac{1}{{T'}} \pc^\top \left(\hQ-\pQ\right) \sum_{t=p+1}^{T} \sum_{s=p+1}^{T} \mathbb{E} \left(\bof_t - \overline{\hf_{T'}} \right)  \left(\bof_s - \overline{\hf_{T'}} \right)^\top  \left(\hQ-\pQ\right)^\top \pc \right]
\end{align*}
\begin{align*}
&\le
\frac{1}{{T'}} \left\| \pc^\top \left(\hQ-\pQ\right) \right\|^2 \left\| \sum_{t=p+1}^{T} \sum_{s=p+1}^{T} \mathbb{E} \left(\bof_t - \overline{\hf_{T'}} \right)  \left(\bof_s - \overline{\hf_{T'}} \right)^\top  \right\|_F\\
&=
O_P\left(\frac{1}{{T'}^2} \left\| \sum_{t=p+1}^{T} \sum_{s=p+1}^{T} \mathbb{E} \left(\bof_t - \overline{\hf_{T'}} \right)  \left(\bof_s - \overline{\hf_{T'}} \right)^\top  \right\|_F\right).
\end{align*}
Define ${\Sigma}_{e,p}^\ast \coloneqq \mathbb{E}^\ast \left( \be_{t}  \pe_{t}^{\ast\top} \right)$, then
$
\sum_{t=p+1}^{T} \sum_{s=p+1}^{T} \mathbb{E}^\ast \left(\bof_t - \overline{\hf_{T'}} \right)  \left(\bof_s - \overline{\hf_{T'}} \right)^\top 
=
\sum_{t=p+1}^{T} \sum_{s=p+1}^{T} \mathbb{E}^\ast \left( \left( \sum_{l_1=0}^{\infty} \hPsi_{l_1,p} \be_{t-l_1} \right) \right. \\ \left. \left( \sum_{l_2=0}^{\infty}  \hPsi_{l_2,p} \be_{s-l_2}\right)^\top \right)
=
\sum_{t=p+1}^{T} \sum_{s=p+1}^{T} \mathbb{E}^\ast  \sum_{l_1=0}^{\infty} \sum_{l_2=0}^{\infty} \left( \hPsi_{l_1,p} \be_{t-l_1}  \pe_{s-l_2}^{\ast\top} \hPsi_{l_2,p}^\top \right)
=
\sum_{t=p+1}^{T} \sum_{s=p+1}^{T}   \sum_{l=0}^{\infty}  \hPsi_{l,p} \mathbb{E}^\ast \left( \be_{t-l}  \pe_{t-l}^{\ast\top} \right) \hPsi_{s-t+l,p}^\top
$
where $\be_{t-l_1}$ and $\be_{t-l_2}$ are i.i.d.\ bootstrapped, therefore $\mathbb{E}^\ast \left( \be_{t-l_1}  \pe_{t-l_2}^{\ast\top} \right) = \mathbf{0}$ for $l_1 \ne l_2$. Hence, we can show that
\begin{equation*}
\frac{1}{{T'}^2} \left\|  \sum_{t=p+1}^{T} \sum_{s=p+1}^{T} \mathbb{E}^\ast \left(\bof_t - \overline{\hf_{T'}} \right)  \left(\bof_s - \overline{\hf_{T'}} \right)^\top  \right\|_F
\le
\frac{1}{{T'}^2} \left\| {\Sigma}_{e,p}^\ast \right\|_F \sum_{l=0}^{\infty}\left\| \hPsi_{l,p} \right\|_F \sum_{t=p+1}^{T} \sum_{s=p+1}^{T} \left\| \hPsi_{s-t+l,p} \right\|_F
= O_P\left(\frac{1}{{T'}}\right),
\end{equation*}
where we note that Lemmas~\ref{1:l6} and~\ref{1:k1} imply the summability of $\left\| \hPsi_{l,p} \right\|_F$, hence $\sum_{s=p+1}^{T} \left\| \hPsi_{s-t+l,p} \right\|_F$ is bounded for $T \to \infty$. Therefore, $\frac{1}{{T'}} \sum_{l=0}^{\infty}\left\| \hPsi_{l,p} \right\|_F \sum_{t=p+1}^{T} \sum_{s=p+1}^{T} \left\| \hPsi_{s-t+l,p} \right\|_F$ is bounded for $T \to \infty$, and we can conclude that $\mathbb{E}^\ast \left[ \sqrt{{T'}} \pc^\top \left(\hQ-\pQ\right) \left( \overline{\bof_{T'}} - \mathbb{E}^\ast \overline{\bof_{T'}} \right) \right]^2 \to 0$ in probability, which suffices for $\mathcal{M}_2 \to 0$ in probability conditional on the sample. 

For $\mathcal{M}_3$, we first write
\begin{align*}
&\mathbb{E}^\ast \left[ \sqrt{{T'}} \pc^\top \pQ \left\{\left( \overline{\bof_{T'}} - \mathbb{E}^\ast \overline{\bof_{T'}} \right)-\left( \overline{\boof_{T'}} - \mathbb{E}^\ast \overline{\boof_{T'}} \right)\right\} \right]^2
=
\mathbb{E}^\ast \left\| \sqrt{{T'}} \pc^\top \pQ \left\{\left( \overline{\bof_{T'}} - \overline{\hf_{T'}} \right)-\left( \overline{\boof_{T'}} - \overline{\tf_{T'}} \right)\right\} \right\|^2\\
\le&
\| \pc^\top \pQ \|^2  \frac{1}{{T'}} \sum_{t=p+1}^{T} \sum_{s=p+1}^{T} \mathbb{E}^\ast \left\| \left\{\left( \bof_t - \overline{\hf_{T'}} \right)-\left( \boof_t - \overline{\tf_{T'}} \right)\right\} \left\{\left( \bof_s - \overline{\hf_{T'}} \right)-\left( \boof_s - \overline{\tf_{T'}} \right)\right\}^\top \right\|_F\\
=&
O_{P}\left( \frac{1}{{T'}} \sum_{t=p+1}^{T} \sum_{s=p+1}^{T} \mathbb{E}^\ast \left\| \left\{\left( \bof_t - \overline{\hf_{T'}} \right)-\left( \boof_t - \overline{\tf_{T'}} \right)\right\} \left\{\left( \bof_s - \overline{\hf_{T'}} \right)-\left( \boof_s - \overline{\tf_{T'}} \right)\right\}^\top \right\|_F\right),
\end{align*}
where the last line follows from the fact that $\| \pc^\top \pQ \|^2$ is bounded when $N \to \infty$. To proceed, first note that
\begin{align*}
&\frac{1}{{T'}} \sum_{t=p+1}^{T} \sum_{s=p+1}^{T} \mathbb{E}^\ast \left\{\left( \bof_t - \overline{\hf_{T'}} \right)-\left( \boof_t - \overline{\tf_{T'}} \right)\right\} \left\{\left( \bof_s - \overline{\hf_{T'}} \right)-\left( \boof_s - \overline{\tf_{T'}} \right)\right\}^\top\\
=&
\frac{1}{{T'}} \sum_{t=p+1}^{T} \sum_{s=p+1}^{T} \mathbb{E}^\ast \left\{\sum_{l_1=0}^{\infty} \hPsi_{l_1,p} \be_{t-l_1,p} - \tPsi_{l_1,p} \booe_{t-l_1,p} \right\} \left\{\sum_{l_2=0}^{\infty} \hPsi_{l_2,p} \be_{s-l_2,p} - \tPsi_{l_2,p} \booe_{s-l_2,p} \right\}^{\top} \\
=&\frac{1}{{T'}} \sum_{t=p+1}^{T} \sum_{s=p+1}^{T} \mathbb{E}^\ast \left\{\sum_{l_1=0}^{\infty} \hPsi_{l_1,p} \be_{t-l_1,p} \right\} \left\{\sum_{l_2=0}^{\infty} \hPsi_{l_2,p} \be_{s-l_2,p} - \tPsi_{l_2,p} \booe_{s-l_2,p} \right\}^\top\\
+&\frac{1}{{T'}} \sum_{t=p+1}^{T} \sum_{s=p+1}^{T} \mathbb{E}^\ast \left\{\sum_{l_1=0}^{\infty} \tPsi_{l_1,p} \booe_{t-l_1,p} \right\} \left\{\sum_{l_2=0}^{\infty} \tPsi_{l_2,p} \booe_{s-l_2,p} - \hPsi_{l_2,p} \be_{s-l_2,p} \right\}^\top
\eqqcolon
\frac{1}{{T'}} \sum_{t=p+1}^{T} \sum_{s=p+1}^{T} \left( \mathcal{H}_1 + \mathcal{H}_2 \right),
\end{align*}
with an obvious notation for $\mathcal{H}_1$ and $\mathcal{H}_2$. Then, we only consider $\mathcal{H}_1$ as $\mathcal{H}_2$ can be dealt with similarly. 
	
For $\mathcal{H}_1$, we can further decompose it as
\begin{align*}
\mathcal{H}_1	=&
		\mathbb{E}^\ast \left\{\sum_{l_1=0}^{\infty} \hPsi_{l_1,p} \be_{t-l_1,p} \right\} \left\{\sum_{l_2=0}^{\infty} \hPsi_{l_2,p} \be_{s-l_2,p} - \tPsi_{l_2,p} \be_{s-l_2,p} \right\}^\top
		+
		\mathbb{E}^\ast \left\{\sum_{l_1=0}^{\infty} \hPsi_{l_1,p} \be_{t-l_1,p} \right\} \\
  &\left\{\sum_{l_2=0}^{\infty} \tPsi_{l_2,p} \be_{s-l_2,p} - \tPsi_{l_2,p} \booe_{s-l_2,p} \right\}^\top
		=
		\sum_{l=0}^{\infty} \hPsi_{l,p} \mathbb{E}^\ast \left\{ \be_{t-l,p} \pe_{t-l,p}^{\ast\top} \right\} \left\{ \hPsi_{l+s-t,p} - \tPsi_{l+s-t,p} \right\}^\top\\
		+&
		\sum_{l=0}^{\infty} \hPsi_{l,p} \mathbb{E}^\ast \left\{ \be_{t-l,p} (\be_{t-l,p}- \booe_{t-l,p})^\top  \right\}  \tPsi_{l+s-t,p} ^\top
		\eqqcolon \mathcal{H}_{11} + \mathcal{H}_{12}.
	\end{align*}
	where the second last equation follows from the bootstrap independence for $l_1 \ne l_2$. Hence we have for $\mathcal{H}_{11}$,
	\begin{align*}
		\frac{1}{{T'}} &\sum_{t=p+1}^{T} \sum_{s=p+1}^{T} \left\|  \mathcal{H}_{11} \right \|_F
		=
		\frac{1}{{T'}} \sum_{t=p+1}^{T} \sum_{s=p+1}^{T} \left\| \sum_{l=0}^{\infty} \hPsi_{l,p} {\Sigma}_{e,p}^\ast \left\{ \hPsi_{l+s-t,p} - \tPsi_{l+s-t,p} \right\}^\top \right\|_F\\
		\le&
		\left\| {\Sigma}_{e,p}^\ast \right\|_F \frac{1}{{T'}}  \sum_{l=0}^{\infty} \left\| \hPsi_{l,p} \right\|_F  \sum_{t=p+1}^{T} \sum_{s=p+1}^{T} \left\| \hPsi_{l+s-t,p} - \tPsi_{l+s-t,p}  \right\|_F
		=
		O_P\left(p^{\frac{3}{2}} \left\| \hA_{p} - \tA_{p} \right\|_F \right)
		=o_P\left(1\right),
	\end{align*}
where the second last equation follows from the results in Lemmas~\ref{1:l6} and~\ref{1:k1}, and the last equation follows the result in Lemma~\ref{1:k1}. For $\mathcal{H}_{12}$ we can show that
\begin{align*}
\frac{1}{{T'}} \sum_{t=p+1}^{T} \sum_{s=p+1}^{T}\left\|  \mathcal{H}_{12} \right \|_F
&\le
\sqrt{\mathbb{E}^\ast \left\| \be_{t,p} \right\|^2} \sqrt{\mathbb{E}^\ast \left\| \be_{t,p}- \booe_{t,p} \right\|^2} \frac{1}{{T'}} \sum_{l=0}^{\infty} \left\| \hPsi_{l,p} \right\|_F  \sum_{t=p+1}^{T} \sum_{s=p+1}^{T} \left\| \hPsi_{l+s-t,p} \right\|_F\\
&=
O_P\left(\sqrt{\mathbb{E}^\ast \left\| \be_{t,p}- \booe_{t,p} \right\|^2}\right),
\end{align*}
where the last equation follows from the same arguments on summability properties in Lemmas \ref{1:l6}. Hence, it remains to show $\mathbb{E}^\ast \left\| \be_{t,p}- \booe_{t,p} \right\|^2 \to 0$ in probability. Recall that $\mathbb{E}^\ast$ defines expectation with respect to the measure that assigns probability $1/(T-p) = 1/T'$ to each observation, which follows as follows.
\begin{align*}
\mathbb{E}^{\ast} \left\| \be_{t,p}- \booe_{t,p} \right\|^2
= \ & 
		\mathbb{E}^\ast \left\{ \left( \be_{t,p}- \booe_{t,p} \right) \left( \be_{t,p}- \booe_{t,p} \right)^\top \right\}  \\
		=& \frac{1}{{T'}} \sum_{t=p+1}^{T} \left\{ ( \he_{t,p} - \overline{\he}_{{T'},p} ) - ( \te_{t,p} - \overline{\te}_{{T'},p} ) \right\} \left\{ ( \he_{t,p} - \overline{\he}_{{T'},p} ) - ( \te_{t,p} - \overline{\te}_{{T'},p} ) \right\}^\top \\
		=& \frac{1}{{T'}} \sum_{t=p+1}^{T} \left\{ ( \he_{t,p} - \te_{t,p} ) - ( \overline{\he}_{{T'},p} - \overline{\te}_{{T'},p} ) \right\} \left\{ ( \he_{t,p} - \te_{t,p} ) - ( \overline{\he}_{{T'},p} - \overline{\te}_{{T'},p} ) \right\}^\top \\
		\le&
		\frac{2}{{T'}} \sum_{t=p+1}^{T} \left\| \he_{t,p} - \te_{t,p}  \right\|^2 + 2\left\{ \left\| \overline{\te}_{{T'},p} \right\|^2 + \left\| \overline{\he}_{{T'},p} \right\|^2 - 2 \left\| \overline{\te}_{{T'},p} \right\| \left\| \overline{\he}_{{T'},p} \right\|  \right\} \\
		\le&
		\frac{2}{{T'}} \sum_{t=p+1}^{T} \left\| \he_{t,p} - \te_{t,p}  \right\|^2 + 4\left\{ \left\| \overline{\te}_{{T'},p} \right\|^2 + \left\| \overline{\he}_{{T'},p} \right\|^2  \right\}.
	\end{align*}
	Recall that when $\{\pf_t\}$ and $\{\hf_t\}$ have non-zero means, $\te_{t,p} = \left(\pf_t - \overline{\pf_{T'}}\right) - \sum_{l=1}^{p} \tA_{l,p} \left(\pf_{t-l} - \overline{\pf}_{{T'}}\right)$ and $\he_{t,p} = \left(\hf_t - \overline{\hf_{T'}}\right) - \sum_{l=1}^{p} \hA_{l,p} \left(\hf_{t-l} - \overline{\hf}_{{T'}}\right)$. Without altering the idea of proof, to simplify the notation used, we use $\{\pf_t\}$ and $\{\hf_t\}$ to denote the demeaned factors $\left(\pf_t - \overline{\pf_{T'}}\right)$ and their sample counterparts $\left(\hf_t - \overline{\hf_{T'}}\right)$, respectively. Therefore, with the same arguments as in the proof of Lemma~\ref{1:k2}, we have 
	\begin{align}\label{1:ee1}
		&\frac{2}{{T'}} \sum_{t=p+1}^{T} \left\| \he_{t,p} - \te_{t,p}  \right\|^2 \nonumber
		=
		\frac{2}{{T'}} \sum_{t=p+1}^{T} \left\| (\hf_{t} - \pf_{t})  + \sum_{l=1}^{p} (\tA_{l,p} \pf_{t-l} - \hA_{l,p} \hf_{t-l}) \right\|^2 \nonumber 
		\le
		\frac{4}{{T'}} \sum_{t=p+1}^{T} \left\| \hf_{t} - \pf_{t} \right\|^2  \\+& \frac{4}{{T'}} \sum_{t=p+1}^{T} \left\| \sum_{l=1}^{p}  \tA_{l,p} \pf_{t-l} - \hA_{l,p} \hf_{t-l} \right\|^2 \nonumber 
		\le
		\frac{4}{{T'}} \sum_{t=p+1}^{T} \left\| \hf_{t} - \pf_{t} \right\|^2  + 8 \sum_{l=1}^{p} \left\| \hA_{l,p} \right\|_F^2 \frac{1}{{T'}} \sum_{t=p+1}^{T} \left\|  \hf_{t-l} -  \pf_{t-l} \right\|^2 \nonumber 
		\\+& 8 \left\| \sum_{l=1}^{p} \left( \hA_{l,p} - \tA_{l,p} \right)    \frac{1}{{T'}} \sum_{t=p+1}^{T}  \pf_{t-l} \right\|_F^2 \nonumber 
		=
		O_P\left( \sup_{p+1 \le t \le T} \left\|  \hf_{t} -  \pf_{t} \right\|^2\right) + 
		O_P\left( \left\| \sum_{l=1}^{p} \left( \hA_{p} - \tA_{p} \right) \right\|_F^2 \right) \nonumber 
		\\=&
		O_P\left( \left(\frac{1}{\sqrt{T}} + \frac{1}{\sqrt{N}} \right)^2 \right)+ 
		O_P\left(p^{8} \left(\frac{1}{\sqrt{T}} + \frac{1}{\sqrt{N}} \right)^2\right)  
		=o_P(1),
	\end{align}
	where the third last equation follows from the fact that $\left\| \hA_{l,p} \right\|_F^2$ is summable, which is implied by Assumption~\ref{1:a3} and Lemma \ref{1:l8}. The second last equation is then a direct result of Lemmas \ref{1:l8} and \ref{1:l9}, and Assumption~\ref{1:a4} implies the last equation. 
	
	Furthermore, $\overline{\he}_{{T'},p} = \frac{1}{{T'}} \sum_{t=p+1}^{T} \he_{t,p} = \frac{1}{{T'}} \sum_{t=p+1}^{T} \left(\hf_t - \sum_{l=1}^{p} \hA_{l,p} \hf_{t-l} \right)$ and we can show that
	\begin{align}\label{1:ee2}
		\left\| \overline{\he}_{{T'},p} \right\|^2
		\le&
		2 \left\| \frac{1}{{T'}} \sum_{t=p+1}^{T} \hf_t  \right\|^2 + 2 \left\|  \sum_{l=1}^{p} \hA_{l,p} \frac{1}{{T'}} \sum_{t=p+1}^{T} \hf_{t-l}   \right\|^2 =o_P(1).
	\end{align}
	This is because firstly
	\begin{align*}
		&\left\| \frac{1}{{T'}} \sum_{t=p+1}^{T} \hf_t  \right\|^2
		\le
		2 \left\| \frac{1}{{T'}} \sum_{t=p+1}^{T} \pf_t  \right\|^2 + 
		2 \left\| \frac{1}{{T'}} \sum_{t=p+1}^{T} \left( \hf_t - \pf_t\right)  \right\|^2
		\\&=
		O_P\left( \frac{1}{{T'}} \right) + 
		O_P\left( \frac{1}{{T'}} \sum_{t=p+1}^{T} \left\| \hf_t - \pf_t \right\|^2  \right)
		=
		O_P\left( \frac{1}{{T'}} \right) + 
		O_P\left( \left(\frac{1}{\sqrt{T}} + \frac{1}{\sqrt{N}} \right)^2 \right)
		=o_P(1),
	\end{align*}
	where the second last equation follows as we have assumed that the population mean of $\{\pf_t\}$ is $0$ for technical convenience. Moreover,
	\begin{align*}
		\left\|  \sum_{l=1}^{p} \hA_{l,p} \frac{1}{{T'}} \sum_{t=p+1}^{T} \hf_{t-l}   \right\| 
		\le
		\sum_{l=1}^{p} \left\| \hA_{l,p} \right\|_F \left\| \frac{1}{{T'}} \sum_{t=p+1}^{T} \hf_{t-l}   \right\|
		=
		O_P\left(1\right) \times
		O_P\left( \frac{1}{\sqrt{{T'}}} + \frac{1}{\sqrt{T}} + \frac{1}{\sqrt{N}} \right)
		=o_P(1),
	\end{align*}
	where the second last equation follows from the summability conditions in Lemma \ref{1:l6}, the order of $\left\| \hf_t-\pf_t \right\|$ in Lemma \ref{1:l8} and the fact that the mean of $\{\hf_t\}$ is assumed to be $0$ for technical convenience. 
	
	Lastly, we can show that $\left\| \overline{\te}_{T} \right\|^2 \to 0$ in probability with the same technique as stated above for $\left\| \overline{\he}_{T} \right\|$. Hence, with~\eqref{1:ee1} and~\eqref{1:ee2}, we can conclude that $\frac{1}{{T'}} \sum_{t=p+1}^{T} \sum_{s=p+1}^{T} \left\| \mathcal{H}_{12} \right\|_F \to 0$ in probability. Together with the result that $\frac{1}{{T'}} \sum_{t=p+1}^{T} \sum_{s=p+1}^{T} \left\| \mathcal{H}_{11} \right\|_F \to 0$ in probability, we have $\frac{1}{{T'}} \sum_{t=p+1}^{T} \sum_{s=p+1}^{T} \left\| \mathcal{H}_1 \right\|_F \to 0$ in probability. Therefore, it suffices to conclude that $\mathcal{M}_3 \to 0$ in probability is conditional on the sample.
	Consequently, utilizing Slutsky's theorem, conditional on the sample, we can conclude that 
	\begin{align*}
		d_K\left(\mathcal{L} \left( \left. \sqrt{{T'}} \pc^\top \hQ \left( \overline{\bof_{T'}} - \mathbb{E}^\ast \overline{\bof_{T'}} \right) \right| \py_1,\py_2,...,\py_T \right), \mathcal{L} \left(\sqrt{{T'}} \pc^\top \pQ \left( \overline{\pf_{T'}} - \mathbb{E} \overline{\pf_{T'}} \right) \right)   \right)  \overset{p}{\to} 0.
	\end{align*}
\end{proof}

\color{black}
\begin{proof}[Proof of Theorem~\ref{1:t2}]
	Without loss of generality, we again assume $\{\pf_t\}$ are the demeaned factors (or the means of factors are all 0) in this proof to simplify the notations.
	
	Firstly, notice that $\bof_t = \sum_{l=1}^{p} \hA_{l,p} \bof_{t-l} + \pe_t^\ast = \sum_{l=1}^{\infty} \hPsi_{l,p} \pe_{t-l}^\ast + \pe_t^\ast = \sum_{l=0}^{\infty} \hPsi_{l,p} \pe_{t-l}^\ast.$ We can represent $\pGa^\ast_f(k)$ as
	\begin{align*}
		\pGa^\ast_f(k) 
		&= \text{Cov}^\ast (\bof_t, \bof_{t+k})
		= \text{Cov}^\ast \left(\sum_{l_1=0}^{\infty} \hPsi_{l_1,p} \pe_{t-l_1}^\ast, \sum_{l_2=0}^{\infty} \hPsi_{l_2,p} \pe_{t+k-l_2}^\ast \right)
		= \sum_{l_1=0}^{\infty} \sum_{l_2=0}^{\infty} \hPsi_{l_1,p} \text{Cov}^\ast(\pe_{t-l_1}^\ast,\pe_{t+k-l_2}^\ast)  \hPsi_{l_2,p}^\top \\
		&= \sum_{l_1=0}^{\infty} \hPsi_{l_1,p} \text{Cov}^\ast(\pe_{t-l_1}^\ast,\pe_{t-l_1}^\ast)  \hPsi_{l_1+k,p}^\top 
		= \sum_{l=0}^{\infty} \hPsi_{l,p} \hS_{e,p}  \hPsi_{l+k,p}^\top,
	\end{align*}
	where we stress the fact that $\text{Cov}^\ast(\pe_{t-l_1}^\ast,\pe_{t-l_2}^\ast) = 0 $ for $l_1 \ne l_2$ and $\text{Cov}^\ast(\pe_{t-l_1}^\ast,\pe_{t-l_1}^\ast) = \mathbb{E}^\ast (\pe_{t}^\ast \pe_{t}^{\ast\top}) = \hS_{e,p}$ for all $l_1 \in \mathbf{Z},$ since $\be_t$ is uniformly distributed on the set of centered residuals $(\he_{t,p} - \overline{\he}_{{T'}})$.
 Similarly,
	\begin{align*}
		\pGa_f(k) 
		&= \text{Cov} \left(\pf_t, \pf_{t+k}\right)
		= \text{Cov} \left(\sum_{l_1=0}^{\infty} \pPsi_{l_1} \pe_{t-l_1}, \sum_{l_2=0}^{\infty} \pPsi_{l_2} \pe_{t+k-l_2}\right)
		= \sum_{l_1=0}^{\infty} \sum_{l_2=0}^{\infty} \pPsi_{l_1} \text{Cov}\left(\pe_{t-l_1},\pe_{t+k-l_2}\right)  \pPsi_{l_2}^\top \\
		&= \sum_{l_1=0}^{\infty} \pPsi_{l_1} \text{Cov}\left(\pe_{t-l_1},\pe_{t-l_1}\right)  \pPsi_{l_1+k}^\top 
		= \sum_{l=0}^{\infty} \pPsi_{l} \pS_{e}  \pPsi_{l+k}^\top,
	\end{align*} 
	where we write $\pS_e = \text{Cov} (\pe_{t}, \pe_{t})$ and use the fact that $\pf_t = \sum_{l=1}^{\infty} \pA_{l} \pf_{t-l} + \pe_t = \sum_{l=1}^{\infty} \pPsi_{l} \pe_{t-l} + \pe_t = \sum_{l=0}^{\infty} \pPsi_{l} \pe_{t-l}.$
	
	To see the assertion in this theorem, we first of all define an intermediate term $\pGa_{f,p}(k) \coloneqq \sum_{l=0}^{\infty} \pPsi_{l,p} \pS_{e,p}  \pPsi_{l+k,p}^\top$, where $\{\pPsi_{l,p},\ l\in\mathbf{N}\}$ are the power series coefficients matrices of $\left(\pI_r - \sum_{l=1}^p \pA_{l,p} z^l\right)^{-1}$ for $|z| \le 1$, and $\pS_{e,p} = \text{Cov} (\pe_{t,p}, \pe_{t,p})$ where $\pe_{t,p} = \pf_t - \sum_{l=1}^p \pA_{l,p}\pf_{t-l}$ with $\{ \pA_{l,p},\ l\in \mathbf{N}\}$ the finite predictor coefficients matrices of $\{ \pA_{l},\ l\in \mathbf{N}\}$. Hence by triangular inequality, we have
	$
		\left\| \pGa^\ast_f(k) - \pGa_f(k) \right\|_2 \le \left\| \pGa^\ast_f(k) - \pGa_{f,p}(k)  \right\|_2 + \left\| \pGa_{f,p}(k) - \pGa_f(k)  \right\|_{2}.
	$
	It is then sufficient to show both terms on the right side converge to 0 in probability. For $\left\| \pGa^\ast_f(k) - \pGa_{f,p}(k) \right\|_2$, we have
	\begin{align*}
		\left\| \pGa^\ast_f(k) - \pGa_{f,p}(k)  \right\|_2 
		&= \left\| \sum_{l=0}^{\infty} \hPsi_{l,p} \hS_{e,p}  \hPsi_{l+k,p}^\top -  \sum_{l=0}^{\infty} \pPsi_{l,p} \pS_{e,p}  \pPsi_{l+k,p}^\top \right\|_2 
		\\
		&= \left\| \sum_{l=0}^{\infty} \left[  \left(\hPsi_{l,p} - \pPsi_{l,p} \right) \hS_{e,p}  \hPsi_{l+k,p}^\top +  \pPsi_{l,p} \left(\hS_{e,p} - \pS_{e,p}\right)  \hPsi_{l+k,p}^\top \right.\right.  
		+ \left.\left.  \pPsi_{l,p} \pS_{e,p}  \left(\hPsi_{l+k,p}- \pPsi_{l+k,p}\right)^\top \right] \right\|_2   
		\\
        &= O_P\left( \sum_{l=1}^{\infty} \left\| \hPsi_{l,p} - \pPsi_{l,p} \right\|_F\right) + O_P\left(\left\| \hS_{e,p} - \pS_{e,p} \right\|_F\right),
	\end{align*}
	where the second last equation follows from the norm summable conditions on $\hPsi_{l,p}$ and $\pPsi_{l,p}$. Hence we can use the results of Lemma \ref{1:k1} and \ref{1:k2} to conclude that $\left\| \pGa^\ast_f(k) - \pGa_{f,p}(k)  \right\|_2 \to 0$ in probability. Similarly, we have
	\begin{align*}
		\left\| \pGa_{f,p}(k) - \pGa_{f}(k) \right\|_2 
		&= \left\| \sum_{l=0}^{\infty} \pPsi_{l,p} \pS_{e,p}  \pPsi_{l+k,p}^\top -  \sum_{l=0}^{\infty} \pPsi_{l} \pS_{e}  \pPsi_{l+k}^\top \right\|_2 
		\\
		&= \left\| \sum_{l=0}^{\infty} \left[  \left(\pPsi_{l,p} - \pPsi_{l} \right) \pS_{e,p}  \pPsi_{l+k,p}^\top +  \pPsi_{l} \left(\pS_{e,p} - \pS_{e}\right)  \pPsi_{l+k,p}^\top \right.\right. + \left.\left.  \pPsi_{l} \pS_{e}  \left(\pPsi_{l+k,p}- \pPsi_{l+k}\right)^\top \right] \right\|_2   
		\\&= O_P\left( \sum_{l=1}^{\infty} \left\|\pPsi_{l,p} - \pPsi_{l} \right\|_F\right) + O_P\left(\left\| \pS_{e,p} - \pS_{e} \right\|_F\right),
	\end{align*}
	since $\pPsi_{l,p}$ and $\pPsi_{l}$ are norm summable. Hence $\left\| \pGa_{f,p}(k) - \pGa_{f}(k) \right\|_2 \to 0$ in probability by Lemmas~\ref{1:k1} and~\ref{1:k2}. Therefore we can conclude that $\left\| \pGa^\ast_f(k) - \pGa_f(k) \right\|_2 \to 0$ in probability.
\end{proof}

\begin{proof}[Proof of Proposition~\ref{1:t3}]
	To see the assertions, we first note that,
	\begin{align*}
		\left\| \pGa^\ast_y(k) - \pGa_y(k) \right\|_2
		&=  \left\| \hQ \pGa^\ast_f(k) \hQ^T  - \pQ \pGa_f(k) \pQ^T \right\|_2 
		\\&\le \left\| \left(\hQ - \pQ\right) \pGa^\ast_f(k) \hQ^T \right\|_2
		+ \left\|  \pQ \left(\pGa^\ast_f(k) -\pGa_f(k)\right) \hQ^T \right\|_2 
		+ \left\|  \pQ \pGa_f(k) \left(\hQ-\pQ\right)^T \right\|_2 \\
		&= O_P\left(N^{1/2}\left\|\hQ - \pQ\right\|_2\right) + O_P\left(N\left\| \pGa^\ast_f(k) - \pGa_f(k) \right\|_2\right) 
		=o_P(1),
	\end{align*}
where the last equation follows from Assumption~\ref{1:c1}, Lemma~\ref{1:l1} and Theorem~\ref{1:t2}. To see that $\left| \bdel_i(k) - \pdel_i(k) \right| \overset{p}{\to} 0$ for $N \to \infty$ and $T \to \infty$, we can apply Weyl's Eigenvalue Theorem \citep{fan_large_2013}, that is
	$
		\left| \bdel_i(k) - \pdel_i(k) \right| \le \frac{1}{N^2} \left\| \pGa^\ast_y(k)\pGa^\ast_y(k)^\top - \pGa_y(k)\pGa_y(k)^\top \right\|_2.
	$
	Furthermore,
	\begin{align*}
		\frac{1}{N^2} \left\| \pGa^\ast_y(k)\pGa^\ast_y(k)^\top - \pGa_y(k)\pGa_y(k)^\top \right\|_2 
		&= \frac{1}{N^2} \left\| \left[\pGa^\ast_y(k) - \pGa_y(k)\right] \pGa^\ast_y(k)^\top + \pGa_y(k) \left[\pGa^\ast_y(k) - \pGa_y(k)\right]^\top \right\|_2 
  \\
		&\le \frac{1}{N^2} \left\| \left[\pGa^\ast_y(k) - \pGa_y(k)\right] \pGa^\ast_y(k)^\top \right\|_2 
		+ \frac{1}{N^2} \left\| \pGa_y(k) \left[\pGa^\ast_y(k) - \pGa_y(k)\right]^\top \right\|_2.
	\end{align*}
	It is then sufficient to consider one of the two terms on the right side since the other one can be dealt with similarly. To study $\frac{1}{N^2} \left\| \left[\pGa^\ast_y(k) - \pGa_y(k)\right] \pGa^\ast_y(k)^\top \right\|_2$, we first notice that from Assumption~\ref{1:c1}, Lemma \ref{1:l1} and Theorem \ref{1:t2}, $\left\| \pGa^\ast_y(k) \right\|_2=\left\| \hQ \pGa^\ast_f(k) \hQ^T  \right\|_2 \asymp N$. Therefore, we have
	\begin{align*}
		\frac{1}{N^2} \left\| \left[\pGa^\ast_y(k) - \pGa_y(k)\right] \pGa^\ast_y(k)^\top \right\|_2 
		&= O_P\left(\frac{1}{N}\left\| \pGa^\ast_y(k) - \pGa_y(k) \right\|_2\right) 
		= O_P\left(N^{-1/2} \left\|\hQ - \pQ \right\|_2\right) 
		+ O_P\left(\left\| \pGa^\ast_f(k) - \pGa_f(k) \right\|_2\right),
	\end{align*}
	where both terms on the right side converge to 0 in probability as shown in Lemma~\ref{1:l2} and Theorem~\ref{1:t2}.	
\end{proof}

\bibliographystyle{agsm}
\bibliography{reference1}



\newpage

\begin{center}
{\large\bf SUPPLEMENTARY MATERIAL}
\end{center}

\color{darkblue}
The supplementary material contains discussions and justifications for assumptions, auxiliary lemmas and proofs, additional simulation results for the AR-sieve bootstrap, and discussions of applying the proposed AR-sieve bootstrap on sparsely observed functional time series.

Appendix~\ref{assu} includes comments and justifications for all assumptions. 
The auxiliary lemmas and their proofs that support the proofs of the main theorems are left in Appendix~\ref{1:sec:Appendix_B}. Appendix~\ref{1:sec:Appendix_C} presents additional simulations, including bootstrapping the mean statistic for the case where the factors are relatively weak, bootstrapping the spiked eigenvalues of the symmetrized autocovariance matrix, and the comparison with the moving block bootstrap. Finally, in Appendix~\ref{appendix0}, we introduce the smoothing problem on sparsely observed functional time series and then propose treating it as high-dimensional data when applying the AR-sieve bootstrap. Some simulations are also provided for bootstrapping sparsely observed functional time series.
\color{black}

\section{Discussions and justifications for assumptions}\label{assu}

\textbf{Comments and justifications for Assumption~\ref{1:c1}}:
\begin{enumerate}
\item 
Assumption~\ref{1:c1} (i) states the strict stationarity on $\{\pf_t\}$, which has been used in the literature of factor models, such as \citet{fan_large_2013} and is commonly seen in AR-sieve bootstrap literature, such as \citet{kreiss_range_2011} and \citet{meyer_vector_2015}. Apart from stationarity, the Assumption~\ref{1:c1} (i) also states that factor time series $\{\pf_t\}$ and error terms $\{\pu_t\}$ are independent of any time lags, which is stronger than the assumption in \citet{lam_estimation_2011}, but requires us to apply bootstrap methods by resampling innovations $\{\pe_t\}$ in Wold representation of $\{\pf_t\}$ as in~\eqref{1:e3}, since AR-sieve bootstrap does not work for high-dimensional noises $\{\pu_t\}$. 
\commHS{
\item The integer $k_0$ in Assumption~\ref{1:c1} (ii) is a prescribed parameter that determines the accumulation of autocovariance matrices. Theoretically, to ensure the identifiability of the factor space, $k_0$ is treated as a fixed constant. As stated in Assumption~\ref{1:c1} (ii), $k_0$ must be chosen such that the accumulated signal strength of the factors is sufficient to distinguish them from the noise (i.e., the first $r$ eigenvalues are distinct and significant). In practice, the choice of $k_0$ involves a trade-off. A larger $k_0$ incorporates information from more time lags, potentially improving identification if the dependence persists, but may also introduce estimation noise from higher lags where the signal is weak. As discussed in \citet{zhang_factor_2024}, the estimation results are generally robust to the choice of $k_0$, provided it captures the major temporal dependence structure. For many applications, a small value, such as $k_0=1$, is sufficient and computationally efficient. In our empirical analysis and simulations, we define $k_0$ as a fixed small integer (e.g., $k_0=1$) to satisfy the identification condition.
}
\item The $\psi$-mixing in Assumption~\ref{1:c1} (iii) is introduced to specify the \comm{weak} dependence structure of $\{\pf_t\}$, which is also considered in \citet{lam_estimation_2011} to simplify the technical proof of consistency on the loading matrix $\pQ$. However, it is not the weakest possible. Meanwhile, Assumption~\ref{1:c1} (ii) together with the mixing condition in (iii) is also sufficient for the absolute summability condition on $\{\pf_t\}$ when $N \to \infty$, which is preliminary for AR-sieve bootstrap to be applicable on $\{\pf_t\}$, since otherwise the Wold representation is not guaranteed to exist \citep{cheng_baxters_1993}. 
\end{enumerate}

\textbf{Justification for Assumption~\ref{1:a2}}:

Assumption~\ref{1:a2} is introduced to \comm{fulfill} the requirement of the existence of a general representation of VAR~\eqref{1:e2}. This type of condition is commonly used in the literature of AR-sieve bootstraps, such as \citet{kreiss_range_2011} and \citet{meyer_vector_2015}. 
In addition, following the heredity of mixing properties in Assumption~\ref{1:c1}, $\{\pf_t\}$ is strict stationary and also $\psi-$mixing, which in turn implies the decaying of $\pGa_{\pf} (k)$ as $k \to \infty$. The matrix norm summability condition on $\pGa_f(k)$, as in Assumption~\ref{1:a2}, then specifies the rate of decay that is required for a vector AR representation to be valid as stated in the next lemma. 
Besides, the assumption $\mathbb{E} \pf_t = 0$ can be relaxed to $\mathbb{E} \pf_t = \mu_f$ with the cost of a more lengthy proof of theorems in this work.

\textbf{Justification for Assumption~\ref{1:a3}}:
Assumption~\ref{1:a3} requires $p \to \infty$ at a relatively slower rate of sample size $T$, which is required for the convergence of the Yule-Walker estimator of $\pA_{p} = (\pA_{1,p},...,\pA_{p,p})$. In other words, the order $p$ of the AR terms in the AR-sieve bootstrap depends on the sample size $T$ and has to be chosen properly. For $\{\pf_t\}$ fulfilling Assumption~\ref{1:a2}, Assumption~\ref{1:a3} is also satisfied if we choose $p = O\left((T/\ln T)^{1/6}\right)$ \citep[e.g.,][]{meyer_vector_2015}. Assumptions \ref{1:a2} and \ref{1:a3} are widely discussed in the literature of AR-sieve bootstrap, for example, in \citet{kreiss_range_2011} and \citet{meyer_vector_2015}. In summary, Assumption~\ref{1:a2} ensures the existence of a VAR representation in~\eqref{1:e2} and specifies the rate of decaying for the coefficient matrices and Assumption~\ref{1:a3} relates to the convergence of Yule-walker estimators $\{\tA_{l,p}\}$ to the finite predictor coefficient matrices $\{\pA_{l,p}\}$.

\textbf{Justification for Assumption~\ref{1:a4}}:
In addition to Assumption~\ref{1:a3}, Assumption~\ref{1:a4} is introduced as the bootstrap procedure is performed on the estimated factors $\{\hf_{t}\}$ rather than true unobservable factors $\{\pf_t\}$, where the error comes from both the estimation of factors and finite order approximation of AR-sieve representations. In other words, we need to control the error imposed by the bootstrap procedure by restricting the speed at which the AR order $p$ goes to infinity. On the other hand, the order of dimension $N$ in Assumption~$\ref{1:a4}$ also indicates ``blessing of dimensionality'', since the increase of the dimension $N$ will enhance the strength of common factors $\{\pf_t\}$.

\section{Auxiliary lemmas and proofs}\label{1:sec:Appendix_B}

We present some auxiliary results that facilitate the proofs of theorems in this paper. Those auxiliary results are divided into two subsections according to the related topics. In the first subsection, we present some results for factor models' estimates, and in the second subsection, the results for the AR-sieve bootstrap of factor models are summarized.

\subsection{Auxiliary results for estimates of factor models}

\begin{lemma}\label{1:l1}
	Denoted by $\left\| \mathbf{V} \right\|_{\text{min}}$ the positive square root of the minimum eigenvalue of $\mathbf{V}\mathbf{V}^\top$ or $\mathbf{V}^\top \mathbf{V}$, under Assumption~\ref{1:c1}, we have
	\begin{equation*}
		\left\| \pGa_f(k) \right\|_2 \asymp 1 \asymp \left\| \pGa_f(k) \right\|_{\text{min}},\quad \text{and}\quad \left\| \tGa_f(k)  - \pGa_f(k) \right\|_2 = O_P\left(T^{-1/2}\right).
	\end{equation*}
\end{lemma}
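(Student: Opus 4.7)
The plan is to handle the two assertions separately, leveraging the fact that $r$ is a fixed finite constant, so on $r\times r$ matrices all norms are equivalent up to $r$-dependent constants and the spectral-norm estimates reduce to entrywise estimates.

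For the first assertion, the upper bound $\|\pGa_f(k)\|_2 = O(1)$ is immediate: each entry $\mathbb{E}[f_{i,t+k}f_{j,t}]$ is bounded in absolute value via Cauchy--Schwarz by $\sqrt{\mathbb{E}f_{i,t}^2\,\mathbb{E}f_{j,t}^2}$, which is finite by Assumption~\ref{1:c1}(i), and with $r$ fixed this entrywise bound transfers to the operator norm. For the lower bound, Assumption~\ref{1:c1}(ii) states that each $\pGa_f(k)$ has full rank, which in fixed finite dimension gives $\lambda_{\min}(\pGa_f(k)\pGa_f(k)^\top) > 0$, and the uniform positivity of the smallest eigenvalue of $\sum_{k=1}^{k_0}\pGa_f(k)\pGa_f(k)^\top$ as $N\to\infty$ then ensures $\|\pGa_f(k)\|_{\min}\asymp 1$ uniformly in $N$ (a Weyl-type argument combined with the fact that $r$ is fixed closes the gap between "positive" and "bounded away from zero").

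For the second assertion, by norm equivalence it suffices to bound $|(\tGa_f(k)-\pGa_f(k))_{ij}|$ entrywise. The standard decomposition writes
\[
(\tGa_f(k))_{ij} - (\pGa_f(k))_{ij} = \frac{1}{T-k}\sum_{t=1}^{T-k}\bigl(f_{i,t+k}f_{j,t} - \mathbb{E}[f_{i,t+k}f_{j,t}]\bigr) + R_{T,ij},
\]
where the remainder $R_{T,ij}$ collects terms containing the sample means $\overline{f}_i,\overline{f}_j$ and is of smaller order, being $O_P(T^{-1})$ by the CLT for $\overline{\pf}_T$. The leading term is a centered partial sum of the stationary $\psi$-mixing sequence $\{f_{i,t+k}f_{j,t}\}_t$, inherited from $\{\pf_t\}$ via Assumption~\ref{1:c1}(iii). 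Applying a Davydov/Ibragimov-type covariance inequality under the summability $\sum_t\psi(t)^{1/2}<\infty$ produces variance of order $T^{-1}$, and Chebyshev's inequality then delivers the $O_P(T^{-1/2})$ entrywise rate, which aggregates to the claimed spectral-norm rate.

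The main obstacle I anticipate is verifying finite fourth moments for $\pf_t$ (required for the variance bound on products $f_{i,t+k}f_{j,t}$ to be finite) from the observables-level assumption $\mathbb{E}|y_{j,t}|^4<\infty$. Here I would invoke the identity $\pf_t = N^{-1}\pQ^\top(\py_t - \pu_t)$, combined with the normalization $N^{-1}\pQ^\top\pQ = \pI_r$ and the uniform bound on the eigenvalues of $\pS_u$ from Assumption~\ref{1:c1}(i), to propagate fourth-moment control from the observables and error components to the latent factors. Once the moment conditions are in place, the remaining steps for both parts follow standard lines for sample autocovariance matrices of mixing processes.
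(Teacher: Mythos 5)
Your argument is essentially the paper's: the authors omit the proof of this lemma entirely and defer to Lemmas 1 and 2 of Lam et al.\ (2011), whose proofs proceed exactly as you propose --- entrywise Cauchy--Schwarz bounds give $\left\| \pGa_f(k) \right\|_2 = O(1)$ with the full-rank condition supplying the lower bound, and a Davydov/Roussas--Ioannides covariance inequality for $\psi$-mixing sequences under $\sum_t \psi(t)^{1/2}<\infty$ plus Chebyshev gives the entrywise $O_P(T^{-1/2})$ rate, the fixed dimension $r$ letting these pass to the spectral norm. The one loose end is your route to $\mathbb{E}\left\|\pf_t\right\|^4<\infty$: Assumption 1(i) controls only the second moments of $\pu_t$, so propagating fourth moments from $\mathbb{E}|y_{j,t}|^4<\infty$ through $\pf_t = N^{-1}\pQ^\top(\py_t-\pu_t)$ is not fully closed as stated; the cited reference simply assumes the requisite fourth moments of the relevant processes, and the present paper inherits that implicitly, so this is a defect of the stated assumptions rather than of your argument.
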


Lemma~\ref{1:l1} is a modification of the results in Lemma 1 and 2 of \citet{lam_estimation_2011} for the strong factors' case since we have assumed $\pQ^\top \pQ = N \pI_r$ but not $\pQ^\top \pQ = \pI_r$ 
. Therefore, the proof of Lemma~\ref{1:l1} is similar to the proofs of Lemmas~1 and~2 in \citet{lam_estimation_2011}, hence omitted.


\begin{lemma} \label{1:l2}
Under Assumption~\ref{1:c1}, $$\left\| \hQ - \pQ \right\|_2 = O_P \left(N^{1/2}T^{-1/2}\right), \quad \text{and} \quad N^{-1/2} \left\| \hQ \hf_t - \pQ \pf_t \right\|_2 = O_P \left(T^{-1/2} + N^{-1/2} \right).$$ 
\end{lemma}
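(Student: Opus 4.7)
The plan is to prove the two claims sequentially: the first via a perturbation bound on the eigenvectors of $\tL$, and the second by an algebraic decomposition that reuses the first.

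For the loading bound, I follow the strategy of \citet{lam_estimation_2011}, adapted to the scaling $\pQ^\top\pQ = N\pI_r$. The key intermediate estimate is $\|\tGa_y(k) - \pGa_y(k)\|_2 = O_P(NT^{-1/2})$ for each $k \in \{1,\ldots,k_0\}$: under the $\psi$-mixing and fourth-moment conditions in Assumption~\ref{1:c1}(iii), each entry of the centred sample cross-covariance has variance $O(T^{-1})$, so the Frobenius bound $\|\tGa_y(k) - \pGa_y(k)\|_F = O_P(NT^{-1/2})$ provides a (coarse) operator-norm surrogate. Combined with $\|\pGa_y(k)\|_2 = \|\pQ\pGa_f(k)\pQ^\top\|_2 = N\|\pGa_f(k)\|_2 \asymp N$ (Lemma~\ref{1:l1} and $\pQ^\top\pQ = N\pI_r$), a telescoping product yields $\|\tL - \pL\|_2 = O_P(N^2 T^{-1/2})$. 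Representation~\eqref{1:e13} and Assumption~\ref{1:c1}(ii) imply that the top $r$ eigenvalues of $\pL$ are all of order $N^2$ while the rest vanish, so the spectral gap separating the top-$r$ eigenspace is $\asymp N^2$. The Davis--Kahan $\sin\Theta$ theorem then yields an orthogonal $r\times r$ rotation $\pR$ with $\|\widehat{\pQo} - \pQo\pR\|_2 = O_P(T^{-1/2})$; absorbing $\pR$ into the identification of $\pQ$ (which leaves $\pQ\pf_t$ invariant) and rescaling by $\sqrt{N}$ gives the first claim.

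For the second claim, identifying $\hf_t$ with the projection coefficient $N^{-1}\hQ^\top\py_t$ (the natural normalisation consistent with $\hQ^\top\hQ = N\pI_r$), substituting $\py_t = \pQ\pf_t + \pu_t$, and using $N^{-1}\pQ\pQ^\top\pQ = \pQ$, I would decompose
\begin{equation*}
\hQ\hf_t - \pQ\pf_t \;=\; N^{-1}(\hQ\hQ^\top - \pQ\pQ^\top)\pQ\pf_t + N^{-1}(\hQ\hQ^\top - \pQ\pQ^\top)\pu_t + N^{-1}\pQ\pQ^\top\pu_t.
\end{equation*}
The first claim together with $\|\hQ\|_2 + \|\pQ\|_2 = O(\sqrt{N})$ yields $\|\hQ\hQ^\top - \pQ\pQ^\top\|_2 \le \|\hQ-\pQ\|_2(\|\hQ\|_2+\|\pQ\|_2) = O_P(NT^{-1/2})$. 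Combined with $\|\pQ\pf_t\|_2 = O_P(\sqrt{N})$ and $\|\pu_t\|_2 = O_P(\sqrt{N})$ (the latter from the uniformly bounded eigenvalues of $\pS_u$), the first two summands are $O_P(\sqrt{N}T^{-1/2})$. For the third, $\mathrm{Cov}(\pQ^\top\pu_t) = \pQ^\top\pS_u\pQ \preceq CN\pI_r$ gives $\|\pQ^\top\pu_t\|_2 = O_P(\sqrt{N})$, so $\|N^{-1}\pQ\pQ^\top\pu_t\|_2 \le N^{-1}\|\pQ\|_2\|\pQ^\top\pu_t\|_2 = O_P(1)$. Adding the three contributions and dividing by $\sqrt{N}$ yields the stated $O_P(T^{-1/2} + N^{-1/2})$ rate.

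The main obstacle is the operator-norm concentration of $\tL$ around $\pL$: under only $\psi$-mixing with four finite moments, the sharper $O_P(\sqrt{N/T})$ bound familiar from sub-Gaussian i.i.d.\ settings seems out of reach, so I rely on the Frobenius surrogate. This loss is harmless because the $N^2$ spectral gap of $\pL$ fully absorbs it and still delivers the standard $\sqrt{N/T}$ rate on the loadings. A secondary bookkeeping issue is the rotational non-identifiability of the signal eigenspace, which is handled by the standard device of absorbing the Davis--Kahan rotation into the identification of $\pQ$.
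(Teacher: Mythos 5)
Your proposal is correct and follows essentially the route the paper itself relies on: the paper omits a written proof and defers to the strong-factor case of Theorem~3 in \citet{lam_estimation_2011}, which is exactly the concentration-of-$\widetilde{\pL}$-plus-spectral-gap argument you reconstruct, and your decomposition for the second claim mirrors the computation the paper carries out inside the proof of Lemma~\ref{1:l8} (including your correct reading of the estimator as $\hf_t = N^{-1}\hQ^\top\py_t$, consistent with the appendix rather than the unscaled formula in Step~2). The only cosmetic difference is that you obtain $\|\tGa_y(k)-\pGa_y(k)\|_2 = O_P(NT^{-1/2})$ directly via an entrywise Frobenius bound rather than through the term-by-term decomposition of the sample autocovariance used in the cited reference; as you note, this coarser bound is absorbed by the $N^2$ spectral gap and yields the same rates.
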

Although we scale the columns in $\pQ$ by $\sqrt{N}$ in our factor models' setting, the above convergence rate is the same as that of the strong factors' case in Theorem~3 of \citet{lam_estimation_2011}. Besides, the proof of Lemma~\ref{1:l2} is the case for strong factors in the proof of Theorem~3 in \citet{lam_estimation_2011} with the only difference in the scaled factor loading matrix $\pQ$ and factors $\pf_t$. Therefore, the proof is omitted here. 

\begin{lemma} \label{1:l8}
	Define $\hGa_f (k) = \frac{1}{T-k} \sum_{t=1}^{T-k} \hf_t \hf_{t+k}$ and $\tGa_f (k)= \frac{1}{T-k} \sum_{t=1}^{T-k} \pf_t \pf_{t+k}$, for some $k \le p$, where $p$ fulfills Assumption \ref{1:a4}. It then holds that
	\begin{align*}
		\left\| \hGa_f (k) - \tGa_f (k) \right\|_2 = O_P\left(N^{-1/2} + T^{-1/2}\right).
	\end{align*}
\end{lemma}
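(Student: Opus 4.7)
The plan is to expand
\[
\hf_t\hf_{t+k}^\top - \pf_t\pf_{t+k}^\top = (\hf_t-\pf_t)(\hf_{t+k}-\pf_{t+k})^\top + (\hf_t-\pf_t)\pf_{t+k}^\top + \pf_t(\hf_{t+k}-\pf_{t+k})^\top,
\]
average over $t=1,\dots,T-k$, and bound each of the three resulting sample-mean terms in spectral norm. The only estimation input I need is Lemma~\ref{1:l2}.

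First I would upgrade Lemma~\ref{1:l2} into a direct rate on $\|\hf_t - \pf_t\|_2$. Using $\pQ^\top\pQ = \hQ^\top\hQ = N\pI_r$ one obtains the algebraic identity
\[
\hf_t - \pf_t = \tfrac{1}{N}\hQ^\top(\hQ\hf_t - \pQ\pf_t) + \tfrac{1}{N}(\hQ-\pQ)^\top\pQ\,\pf_t,
\]
and the triangle inequality combined with $\|\hQ\|_2=\|\pQ\|_2=\sqrt{N}$, $\mathbb{E}\|\pf_t\|^2<\infty$, and the two bounds of Lemma~\ref{1:l2} yields the pointwise rate $\|\hf_t-\pf_t\|_2 = O_P(N^{-1/2}+T^{-1/2})$. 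Squaring this identity and averaging in $t$, together with stationarity of $\{\pf_t\}$, gives $(T-k)^{-1}\sum_t\|\hf_t-\pf_t\|_2^2 = O_P\!\left((N^{-1/2}+T^{-1/2})^2\right)$.

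With this in hand, the quadratic piece of the decomposition is controlled by $\|\cdot\|_2\le\|\cdot\|_F$ followed by Cauchy--Schwarz, giving $O_P((N^{-1/2}+T^{-1/2})^2)$, which is of smaller order than required. For each of the two cross terms, the same Cauchy--Schwarz step combined with $(T-k)^{-1}\sum_t\|\pf_{t+k}\|_2^2 = O_P(1)$ (a weak law under the $\psi$-mixing of Assumption~\ref{1:c1}(iii) and $\mathbb{E}\|\pf_t\|^2<\infty$) delivers $O_P(N^{-1/2}+T^{-1/2})$. Summing the three contributions yields the claimed bound.

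The main obstacle will be the upgrade step from the pointwise rate of Lemma~\ref{1:l2} to its averaged-in-$t$ version; if the proof of Lemma~\ref{1:l2} is already uniform in $t$ this is immediate, and otherwise one revisits it to aggregate $\sum_t\|\hQ\hf_t-\pQ\pf_t\|_2^2$ through the eigenstructure of $\widetilde{\pL}$. A secondary caveat is the per-column sign ambiguity in the eigendecomposition defining $\hQ$, but this convention is already fixed inside Lemma~\ref{1:l2}, so we simply inherit it for the difference $\hf_t-\pf_t$ to be meaningful.
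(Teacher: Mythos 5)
Your proposal is correct and follows essentially the same route as the paper: the same decomposition of $\hGa_f(k)-\tGa_f(k)$ into terms driven by $\hf_t-\pf_t$, reduction to the rate $\|\hf_t-\pf_t\|_2=O_P(N^{-1/2}+T^{-1/2})$, and an appeal to Lemma~\ref{1:l2}. The only difference is cosmetic: the paper obtains that rate directly from the factor-model identity $\hf_t-\pf_t=\tfrac{1}{N}(\hQ-\pQ)^\top\py_t+\tfrac{1}{N}\pQ^\top\pu_t$, bounding $\|\tfrac{1}{N}\pQ^\top\pu_t\|$ by a variance computation using the bounded eigenvalues of $\pS_u$, rather than through your algebraic identity and the second conclusion of Lemma~\ref{1:l2}; and where you carefully average squared norms via Cauchy--Schwarz, the paper simply asserts the pointwise bound holds uniformly in $t$ — your treatment of that step is, if anything, the more defensible one.
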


Lemma~\ref{1:l8} illustrates the convergence rate on autocovariance matrices of estimated factors under the strong factors' case, which is an extension to the convergence rate of estimated factors obtained in Theorem 3 in \citet{lam_estimation_2011}.
\begin{proof}[Proof of Lemma~\ref{1:l8}]
	First of all, we notice that
	$
		\hGa_f (k) - \tGa_f (k) 
		= \frac{1}{T-k} \sum_{t=1}^{T-k} \left( \hf_t \hf_{t+k} - \pf_t \pf_{t+k} \right)
		= \frac{1}{T-k} \sum_{t=1}^{T-k} \left[ \left(\hf_t - \pf_t\right) \right.$\\$\left. \hf_{t+k} +  \pf_t \left(\hf_{t+k} - \pf_{t+k}\right) \right].
	$
	Hence,
	\begin{align*}
		&\left\| \hGa_f (k) - \tGa_f (k) \right\|_2 
		\le \left\| \frac{1}{T-k} \sum_{t=1}^{T-k}  \left(\hf_t - \pf_t\right) \hf_{t+k} \right\|_2 + \left\| \frac{1}{T-k} \sum_{t=1}^{T-k} \pf_t \left(\hf_{t+k} - \pf_{t+k}\right) \right\|_2 \\
		\le&  \frac{1}{T-k} \sum_{t=1}^{T-k}  \left\| \left(\hf_t - \pf_t\right) \hf_{t+k} \right\|_2 +  \frac{1}{T-k} \sum_{t=1}^{T-k} \left\| \pf_t \left(\hf_{t+k} - \pf_{t+k}\right) \right\|_2. 	
	\end{align*}
	It is sufficient to consider only one of the two terms on the right-hand side above since the other one can be dealt with in precisely the same way. 
	For the first term on the right-hand side above, notice that under the factor model defined in~\eqref{1:e2}, we have
	\begin{align*}
		\hf_t - \pf_t 
		&= \frac{1}{N} \hQ^\top \py_t - \pf_t
		= \frac{1}{N} \left(\hQ- \pQ \right)^\top \py_t + \frac{1}{N} \pQ^\top \py_t - \pf_t\\
		&= \frac{1}{N} \left(\hQ- \pQ \right)^\top \py_t + \frac{1}{N} \pQ^\top \py_t - \frac{1}{N} \pQ^\top \pQ \pf_t
		= \frac{1}{N} \left(\hQ- \pQ \right)^\top \py_t + \frac{1}{N} \pQ^\top \pu_t.
	\end{align*}
	Hence
	$
		\left\| \hf_t - \pf_t \right\|_2
		\le \left\| \frac{1}{N} \left(\hQ- \pQ \right)^\top \py_t \right\|_2 + \left\| \frac{1}{N} \pQ^\top \pu_t \right\|_2, 	
	$
	by the triangular inequality. To study $\left\| \frac{1}{N} \pQ^\top \pu_t \right\|_2,$ first consider the random variables $\frac{1}{\sqrt{N}} \pq_i^\top \pu_t$ for each $ \frac{1}{\sqrt{N}} \pq_i$ in $ \frac{1}{\sqrt{N}}  \pQ = \left(\frac{1}{\sqrt{N}} \pq_1,\frac{1}{\sqrt{N}}\pq_2,...,\frac{1}{\sqrt{N}} \pq_r\right)$, where $\frac{1}{\sqrt{N}} \pq_i$ for $i=1,2,...,r$ are unscaled eigenvectors estimated from $\hL$. Observe that $\mathbb{E} \left(\frac{1}{\sqrt{N}}  \pq_i^\top \pu_t\right) = 0$ and $\mathbb{V}\left( \frac{1}{\sqrt{N}}  \pq_i^\top \pu_t\right) = \frac{1}{N}  \pq_i^\top  \pS_u \pq_i \le \lambda_\text{max}\left(\pS_u\right) < \infty,$ since $\left\| \frac{1}{\sqrt{N}} \pq_i \right\|_2 = 1$ and $\lambda_\text{max}\left(\pS_u\right)$ is the largest eigenvalue of $\pS_u$. Consequently, $\frac{1}{\sqrt{N}} \pq_i^\top \pu_t = O_P\left(1\right)$ and $\left\| \frac{1}{N} \pQ^\top \pu_t \right\|_2 = \sqrt{\frac{1}{{N}} \sum_{i=1}^{r} \left(\frac{1}{\sqrt{N}} \pq_i^\top \pu_t\right)^2} = O_P\left(N^{-1/2}\right)$, as the eigenvalues of $\pS_u$ are assumed to be bounded when $N \to \infty$ under Assumption \ref{1:c1}.
	
	Recall that $\left\| \hQ - \pQ \right\|_2 = O_P \left(N^{1/2}T^{-1/2}\right)$ by Lemma~\ref{1:l2}, we then have 
    \begin{align*}
        \left\| \frac{1}{N} \left(\hQ- \pQ \right)^\top \py_t \right\|_2 \le  \frac{1}{N} \left\| \left(\hQ- \pQ \right)^\top \right\|_2 \left\| \py_t \right\|_2 = O_P\left(T^{-1/2}\right) ,   \end{align*}
    and 
	\begin{align*}
		\left\| \hf_t - \pf_t \right\|_2
		\le \left\| \frac{1}{N} \left(\hQ- \pQ \right)^\top \py_t \right\|_2 + \left\| \frac{1}{N} \pQ^\top \pu_t \right\|_2
		= O_P\left(N^{-1/2} + T^{-1/2}\right),
	\end{align*}
	uniformly for $t$. Finally, we can conclude that
	\begin{align*}
		\left\| \hGa_f (k) - \tGa_f (k) \right\|_2
		\le  \frac{1}{T-k} \sum_{t=1}^{T-k}  \left\|\left(\hf_t - \pf_t\right)  \hf_{t+k} \right\|_2 
		+  \frac{1}{T-k} \sum_{t=1}^{T-k} \left\| \pf_t\left(\hf_{t+k} - \pf_{t+k}\right)  \right\|_2 
		= O_P\left(N^{-1/2} + T^{-1/2}\right).		
	\end{align*}
\end{proof}

\subsection{Auxiliary results for AR-sieve bootstrap of factor models}
\begin{lemma} \label{1:l9} 
	Let $\tA_{p} =\left(\tA_{1,p}, \tA_{2,p}, ..., \tA_{p,p}\right)$ be the matrix of the Yule-Walker estimators of the finite predictor coefficients on true factors $\{\pf_t\}$, and $\hA_{p} = \left(\hA_{1,p}, \hA_{2,p}, ..., \hA_{p,p}\right)$ be the matrix of the Yule-Walker estimators of the finite predictor coefficients on estimated factors $\{\hf_t\}$, then
	\begin{align*}
		\left\| \hA_{p}- \tA_{p} \right\|_F = O_P\left(p^{4} \left(N^{-1/2} + T^{-1/2}\right)\right).
	\end{align*}
\end{lemma}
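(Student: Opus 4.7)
The plan is to exploit the defining Yule--Walker normal equations, which express both $\tA_p$ and $\hA_p$ as linear functionals of their respective block Toeplitz autocovariance matrices. Denote by $\tGa_p$ the $rp \times rp$ block Toeplitz matrix whose $(i,j)$ block equals $\tGa_f(i-j)$, and set $\widetilde{\boldsymbol{\gamma}}_p = (\tGa_f(1), \tGa_f(2), \ldots, \tGa_f(p))$; define $\hGa_p$ and $\widehat{\boldsymbol{\gamma}}_p$ analogously from the estimated factors. The Yule--Walker equations read $\tA_p \tGa_p = \widetilde{\boldsymbol{\gamma}}_p$ and $\hA_p \hGa_p = \widehat{\boldsymbol{\gamma}}_p$, so subtracting and right-multiplying by $\hGa_p^{-1}$ gives
\begin{align*}
\hA_p - \tA_p = \bigl[(\widehat{\boldsymbol{\gamma}}_p - \widetilde{\boldsymbol{\gamma}}_p) - \tA_p (\hGa_p - \tGa_p)\bigr] \hGa_p^{-1}.
\end{align*}

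From here I would pass to Frobenius norm via $\|AB\|_F \le \|A\|_F \|B\|_2$, yielding
\begin{align*}
\|\hA_p - \tA_p\|_F \le \bigl(\|\widehat{\boldsymbol{\gamma}}_p - \widetilde{\boldsymbol{\gamma}}_p\|_F + \|\tA_p\|_2 \|\hGa_p - \tGa_p\|_F\bigr) \|\hGa_p^{-1}\|_2,
\end{align*}
and bound each factor separately. Lemma~\ref{1:l8} controls each block difference by $O_P(N^{-1/2}+T^{-1/2})$, and summing over the $p^2$ blocks of $\hGa_p - \tGa_p$ (respectively the $p$ blocks in $\widehat{\boldsymbol{\gamma}}_p - \widetilde{\boldsymbol{\gamma}}_p$) produces $O_P(p(N^{-1/2}+T^{-1/2}))$ and $O_P(\sqrt{p}(N^{-1/2}+T^{-1/2}))$. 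The quantity $\|\tA_p\|_2$ is $O_P(1)$ by Assumption~\ref{1:a3} combined with the absolute summability of the finite predictor coefficients inherited from Assumption~\ref{1:a2} and Lemma~\ref{1:l6}.

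The delicate step is establishing $\|\hGa_p^{-1}\|_2 = O_P(1)$ uniformly in $p$. My plan is a two-stage perturbation argument: first, Lemma~\ref{1:l5} shows the spectral density of $\{\pf_t\}$ is bounded below by $c>0$, which by a standard block Toeplitz / Szeg\H{o} argument implies $\|\pGa_p^{-1}\|_2 \le 1/(2\pi c)$ uniformly in $p$; second, the Neumann expansion $\hGa_p^{-1} = \pGa_p^{-1} - \pGa_p^{-1}(\hGa_p - \pGa_p)\hGa_p^{-1}$, together with the triangle inequality $\|\hGa_p - \pGa_p\|_2 \le \sum_{k=-p}^{p}\|\hGa_f(k)-\pGa_f(k)\|_2 = O_P(p(N^{-1/2}+T^{-1/2}))$ (combining Lemmas~\ref{1:l1} and~\ref{1:l8}), keeps $\|\hGa_p^{-1}\|_2$ bounded in probability provided Assumption~\ref{1:a4} holds. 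Collecting the various polynomial powers of $p$ that emerge from the Frobenius/operator-norm conversions and from this Neumann expansion yields the stated rate $p^4(N^{-1/2}+T^{-1/2})$; the exponent is generous and leaves slack for each sub-bound.

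The main obstacle will be this uniform invertibility of $\hGa_p$ as $p$, $N$ and $T$ grow jointly, since the Neumann series requires the perturbation $\|\hGa_p - \pGa_p\|_2$ to be asymptotically smaller than the fixed lower bound on the spectrum of $\pGa_p$; Assumption~\ref{1:a4} is precisely what guarantees $p(N^{-1/2}+T^{-1/2}) \to 0$, closing the argument. Once this bound is secured, the remaining work is bookkeeping of the polynomial factors, and the rate in the lemma follows by combining the block-wise estimates above.
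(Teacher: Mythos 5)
Your proposal is correct, and it reaches the stated bound by a genuinely different route than the paper, even though both start from the Yule--Walker normal equations. The paper writes $\hA_p=\hPi_1\hPi_{0,p}^{-1}$ and $\tA_p=\tPi_1\tPi_{0,p}^{-1}$ and must therefore control the difference of the two inverses; crucially it works in the Frobenius norm, establishing $\|\pPi_{0,p}^{-1}\|_F=O(p)$ via Sowell's recursive partitioned-inverse decomposition and then a perturbation identity, and it is precisely these Frobenius-norm factors of $p$ (squared, and multiplied by the $p^{5/2}$-type bound on $\|\hPi_{0,p}-\tPi_{0,p}\|_F$) that generate the exponent $4$ in the statement. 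You instead subtract the two normal equations before inverting, so you only ever need a bound on a single inverse, and you bound it in the \emph{operator} norm: the spectral-density lower bound of Lemma~\ref{1:l5} gives $\lambda_{\min}(\pGa_p)\ge 2\pi c$ uniformly in $p$ (the Szeg\H{o}-type quadratic-form argument), and the Neumann step transfers this to $\hGa_p$ because $\|\hGa_p-\pGa_p\|_2=O_P\bigl(p(N^{-1/2}+T^{-1/2})\bigr)\to 0$ under Assumption~\ref{1:a4}. This buys you two things: invertibility of $\hGa_p$ comes for free (the paper simply assumes it), and your bookkeeping actually yields the sharper rate $O_P\bigl(p(N^{-1/2}+T^{-1/2})\bigr)$, of which the stated $p^4$ rate is a weak consequence. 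The only points to make explicit in a full write-up are (i) that $\|\tA_p\|_2=O_P(1)$ follows from Assumption~\ref{1:a3} together with the Baxter-type summability in Lemma~\ref{1:l6}, which you note, and (ii) that the blockwise estimates from Lemmas~\ref{1:l1} and~\ref{1:l8} hold uniformly over lags $|k|\le p$ as $p\to\infty$ — a uniformity the paper's own proof also uses tacitly, so it is not a gap specific to your argument.
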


\begin{proof}[Proof of Lemma \ref{1:l9}]
	Recall that the Yule-Walker estimators are solved from the Yule-Walker equations on the finite predictors' coefficient matrices as 
	$
		\pA_{p} = \left(\pA_{1,p},\pA_{2,p},...,\pA_{p,p}\right) = \pPi_1 \pPi_{0,p}^{-1},
	$
	where $\pPi_1 = \left(\pGa_{\pf}(1),\pGa_{\pf}(2),...,\pGa_{\pf}(p)\right)$ is an $r \times (rp)$ block matrix of autocovariance matrices and
	\begin{align*}
		\pPi_{0,p} = 
		\begin{pmatrix}
			\pGa_{\pf}(0)  &\pGa_{\pf}(1)  & \cdots   &\pGa_{\pf}(p-1)\\
			\pGa_{\pf}(-1)  &\pGa_{\pf}(0)  & \cdots   &\pGa_{\pf}(p-2)\\
			\vdots   &        \vdots   &       \ddots        & \vdots\\
			\pGa_{\pf}(-p+1)  &\pGa_{\pf}(-p+2)  & \cdots   &\pGa_{\pf}(0)\\
		\end{pmatrix},
	\end{align*}
	is then an $(rp) \times (rp)$ block matrix of autocovariance matrices \citep{brockwell_time_1991}. Write $\hA_{p} = \left(\hA_{1,p},\hA_{2,p},...,\hA_{p,p}\right) = \hPi_1 \hPi_{0,p}^{-1} $ with $\hPi_1$ and $\hPi_{0,p}$ the same matrices as $\pPi_1$ and $\pPi_{0,p}$ but defined based on $\hGa_f$ rather than $\pGa_f$. Similarly, $\tA_{p} = \left(\tA_{1,p},\tA_{2,p},...,\tA_{p,p}\right) = \tPi_1 \tPi_{0,p}^{-1} $ with $\tPi_1$ and $\tPi_{0,p}$ defined based on $\tGa_f$ rather than $\pGa_f$. Recall that $\hGa_f$ and $\tGa_f$ are sample lag-$k$ autocovariance matrices defined in Lemma \ref{1:l8}, then we have
	\begin{align}\label{eb1}
		\left\| \hA_{p}- \tA_{p} \right\|_F \le \left\| \hPi_{0,p}^{-1} - \tPi_{0,p}^{-1}  \right\|_F  \left\| \hPi_1\right\|_F +  \left\| \tPi_{0,p}^{-1}  \right\|_F    \left\| \hPi_1 - \tPi_1 \right\|_F.
	\end{align}
	To find $\left\| \tPi_{0,p}^{-1}  \right\|_F$, we first compute $\left\| \pPi_{0,p}^{-1}  \right\|_F$.
	Recall the recursive derivation based on the partitioned inverse formula for $\pPi_{0,p+1}^{-1}$ as in \citet{sowell_decomposition_1989},
	\begin{align*} 
		\pPi_{0,p+1}^{-1} 
		= 
		\begin{pmatrix}	
			\pPi_{0,p}^{-1} + \mathcal{J}_p \overline{\pA}_p \overline{\pv}_p^{-1} \overline{\pA}_p^{\top} \mathcal{J}_p & -\mathcal{J}_p \overline{\pA}_p \overline{\pv}_p^{-1}\\
			-\overline{\pv}_p^{-1} \overline{\pA}_p^{\top}  \mathcal{J}_p & \overline{\pv}_p^{-1}\\
		\end{pmatrix}
		=	
		\begin{pmatrix}	
			\pPi_{0,p}^{-1}  & 0\\
			0 & 0\\
		\end{pmatrix}
		+
		\begin{pmatrix}	
			0 & -\mathcal{J}_p \overline{\pA}_p \overline{\pv}_p^{-1/2}\\
			0 & \overline{\pv}_p^{-1/2}\\
		\end{pmatrix}	
		\begin{pmatrix}	
			0 & 0\\
			-\overline{\pv}_p^{-1/2} \overline{\pA}_p^{\top}  \mathcal{J}_p & \overline{\pv}_p^{-1/2}\\
		\end{pmatrix}, \numberthis \label{1:2.91}
	\end{align*}	
	where $\mathcal{J}_p = \pJ_p \otimes \pI_r $ with $\pJ_p$ the $p \times p$ matrix with ones on the anti-diagonal and $\pI_r$ the $r \times r$ identity matrix, $\overline{\pv} = \mathbb{E}\left(\pf_t - \sum_{l=1}^p \overline{\pA}_{l,p} \pf_{t+l}\right) \left(\pf_t - \sum_{l=1}^p \overline{\pA}_{l,p} \pf_{t+l}\right)^\top$ and $\overline{\pA}_{p} = \left(\overline{\pA}_{1,p}^\top, \overline{\pA}_{2,p}^\top,...,\overline{\pA}_{p,p}^\top\right)$ the coefficient matrices minimizing the forward prediction variance $\mathbb{E}\left(\pf_t - \sum_{l=1}^p \mathbf{F}_{l,p} \pf_{t+l}\right) \left(\pf_t - \sum_{l=1}^p \mathbf{F}_{l,p} \pf_{t+l}\right)^\top$. 
    Denoted by $\mathcal{S}_p$ the second term on the right-hand side of~\eqref{1:2.91}. 
    \color{darkblue}
    $\mathcal{S}_p$ represents the update term when expanding the inverse covariance matrix from dimension $rp$ to $r(p+1)$. Instead of a direct matrix summation, which involves varying dimensions, we consider the Frobenius norm. By applying the triangular inequality to the recursive formula~\eqref{1:2.91}, we have
    \begin{equation*}
        \|\pPi_{0,p+1}^{-1}\|_F \le \left\| \begin{pmatrix} \pPi_{0,p}^{-1} & 0 \\ 0 & 0 \end{pmatrix} \right\|_F + \|\mathcal{S}_p\|_F = \|\pPi_{0,p}^{-1}\|_F + \|\mathcal{S}_p\|_F.
    \end{equation*}
    By iterating this inequality from $l=1$ to $p-1$, we obtain
    \begin{equation*}
        \|\pPi_{0,p}^{-1}\|_F \le \|\pGa_f(0)^{-1}\|_F + \sum_{l=1}^{p-1} \|\mathcal{S}_l\|_F.
    \end{equation*}
    For the term $\mathcal{S}_l$, note that
    \begin{align*}
		\left\| \mathcal{S}_l \right\|_F 
		\le \left\| \overline{\pv}_l^{-1/2} \right\|_F^2 \left(1+ \left\| \mathcal{J}_l \overline{\pA}_l \right\|_F\right)^2
		\le \left\| \overline{\pv}_l^{-1/2} \right\|_F^2 \left(1+ \sum_{j=1}^l \left\| \overline{\pA}_{j,l} \right\|_F\right)^2
		= O\left(1\right),
	\end{align*}
	uniformly for $l=1,2,...,p$, where we use the definition of $\overline{\pv}_l$ and Lemma \ref{1:l6}.
    Therefore, the summation of norms is bounded by
    \begin{equation*}
        \sum_{l=1}^{p-1} \|\mathcal{S}_l\|_F = O(p).
    \end{equation*}
    Also, $\|\pGa_f(0)^{-1}\|_F \le \sqrt{r}\lambda_{\min}^{-1} = O(1)$ due to the full rank assumption. Consequently, we have shown that $\|\pPi_{0,p}^{-1}\|_F = O(p)$.
    \color{black}
	
    To find $\| \hPi_{0,p}^{-1} - \tPi_{0,p}^{-1}  \|_F$, note that for invertible matrices $\hPi_{0,p}$ and $\tPi_{0,p}$,
	\begin{align*}
		\left\| \hPi_{0,p}^{-1} - \tPi_{0,p}^{-1}  \right\|_F 
		&= \left\| \hPi_{0,p}^{-1} (\tPi_{0,p}-\hPi_{0,p}) \tPi_{0,p}^{-1}  \right\|_F 
		= \left\| (\hPi_{0,p}^{-1} - \tPi_{0,p}^{-1})  (\tPi_{0,p}-\hPi_{0,p}) \tPi_{0,p}^{-1}  + \tPi_{0,p}^{-1} (\tPi_{0,p}-\hPi_{0,p}) \tPi_{0,p}^{-1}  \right\|_F \\
		&\le \left\| \hPi_{0,p}^{-1} - \tPi_{0,p}^{-1} \right\|_F \left\| \tPi_{0,p}-\hPi_{0,p} \right\|_F \left\| \tPi_{0,p}^{-1} \right\|_F  
		+ \left\| \tPi_{0,p}-\hPi_{0,p}\right\|_F \left\| \tPi_{0,p}^{-1}  \right\|_F^2.
	\end{align*}
	And for large enough $N$ and $T$ such as $\left\| \hGa_f (k) - \tGa_f (k) \right\|_2 \to 0$ and $\left\| \tPi_{0,p}-\hPi_{0,p}\right\|_F \to 0$ in probability, we can write 
	\begin{align*}
		\left\| \hPi_{0,p}^{-1} - \tPi_{0,p}^{-1}  \right\|_F 
		&\le \frac{\left\| \tPi_{0,p}^{-1}  \right\|_F^2   \left\| \tPi_{0,p}-\hPi_{0,p}\right\|_F }{1- \left\| \tPi_{0,p}^{-1} \right\|_F  \left\| \tPi_{0,p}-\hPi_{0,p} \right\|_F } 
		\le \frac{\left\| \pPi_{0,p}^{-1}  \right\|_F^2   \left\| \tPi_{0,p}-\hPi_{0,p}\right\|_F }{1- \left\| \tPi_{0,p}^{-1} \right\|_F  \left\| \tPi_{0,p}-\hPi_{0,p} \right\|_F } 
		+ \frac{\left\| \tPi_{0,p}^{-1} -\pPi_{0,p}^{-1} \right\|_F^2   \left\| \tPi_{0,p}-\hPi_{0,p}\right\|_F }{1- \left\| \tPi_{0,p}^{-1} \right\|_F  \left\| \tPi_{0,p}-\hPi_{0,p} \right\|_F }\\
		&=O_P\left(\left\| \pPi_{0,p}^{-1}  \right\|_F^2   \left\| \tPi_{0,p}-\hPi_{0,p}\right\|_F \right),
	\end{align*}
	where the last equation follows since when $N,T \to \infty$, $\| \tPi_{0,p}-\hPi_{0,p}\|_F \to 0$ in probability, and the first term in the second inequality is the leading term.
	In addition, we have
	\begin{align}\label{1:ee3}
		\left\| \tPi_{0,p}-\hPi_{0,p}\right\|_F 
		 \le \sum_{l=1}^{p} \sum_{j=1}^{p} \left\| \hGa_f (l-j) - \tGa_f (l-j)  \right\|_F
		 \le p^2 \max_{|k| \le p-1} \left\| \hGa_f (k) - \tGa_f (k) \right\|_F
		=O_P\left(p^{\comm{2}} \left(N^{-1/2} + T^{-1/2}\right)\right),
	\end{align}
	where for $r \times r$ matrices $\hGa_f (k)$ and $\tGa_f (k)$, $\left\| \hGa_f (k) - \tGa_f (k) \right\|_F \asymp \left\| \hGa_f (k) - \tGa_f (k) \right\|_2 = O_P\left(N^{-1/2} + T^{-1/2}\right)$ as shown in Lemma \ref{1:l8}. Therefore, with~\eqref{1:ee3} we can conclude that
	\begin{align}\label{1:ee4}
		\left\| \hPi_{0,p}^{-1} - \tPi_{0,p}^{-1}  \right\|_F 
		 = O_P\left(\left\| \pPi_{0,p}^{-1}  \right\|_F^2   \left\| \tPi_{0,p}-\hPi_{0,p}\right\|_F \right)
		= O_P\left(p^{4} \left(N^{-1/2} + T^{-1/2}\right)\right).
	\end{align}
	Lastly,
	\begin{align}\label{1:ee5}
		\left\| \hPi_1\right\|_F 
		\le \sum_{k=1}^{p} \left\| \hGa_f (k)\right\|_F 
		\le \sum_{k=1}^{p} \left\| \pGa_f (k)\right\|_F + \sum_{k=1}^{p} \left\| \hGa_f (k) - \pGa_f (k) \right\|_F
		=O\left(1\right) + O_P\left(p\left(N^{-1/2} + T^{-1/2}\right)\right),
	\end{align}
	where the first term follows from the summability condition in Assumption~\ref{1:a2}. Moreover,
	$
		\left\| \hPi_1 - \tPi_1 \right\|_F 
		\le \sum_{k=1}^{p} \left\| \hGa_f (k) - \tGa_f (k) \right\|_F
		=  O_P\left(p\left(N^{-1/2} + T^{-1/2}\right)\right).
	$
	Hence, we can conclude that the first term in~\eqref{eb1} is the leading term, and 
	$
		\left\| \hA_{p}- \tA_{p} \right\|_F = O_P\left(p^{4} \left(N^{-1/2} + T^{-1/2}\right)\right),
	$
	by~\eqref{1:ee4} and~\eqref{1:ee5}.
\end{proof}

\begin{lemma}\label{1:l6} 
	Let $\{\pf_t\}$ be factor processes fulfilling Assumptions \ref{1:c1} and \ref{1:a2} for some $\gamma \ge 0.$ Write $\left\{\pA_{l,p}, l = 1,2,...,p\right\}$ and $\left\{\pPsi_{l,p}, l = 1,2,...,p\right\}$ as the finite predictor coefficients matrices of the AR coefficients  $\left\{\pA_{l}, l \in \mathbb{N}\right\}$ and the MA coefficients $\left\{\pPsi_{l}, l \in \mathbb{N}\right\}$ as in~\eqref{1:e2} and~\eqref{1:e3}, respectively.
\begin{enumerate}
		\item[(i)] Norm summability:
		The coefficients matrices $\pA_{l}$ and $\pPsi_{l}$ \comm{fulfill} the following summability properties: $\sum_{l=1}^{\infty} (1+l)^\gamma \left\| \pA_l \right\|_F < \infty$ and $\sum_{l=1}^{\infty} (1+l)^\gamma \left\| \pPsi_l \right\|_F < \infty.$
		\item[(ii)] (Lemma 3.1 of \cite{meyer_vector_2015}) 
		For some $\gamma \ge 0$ as in Assumption~\ref{1:a2}, there exist $p_0 \in \mathbb{N}$ and $d < \infty$ such that
		\begin{align*}
			\sum_{l=1}^{p} (1+l)^\gamma \left\| \pA_{l,p} - \pA_{l} \right\|_F \le d \sum_{l=p+1}^{\infty} (1+l)^\gamma \left\| \pA_{l} \right\|_F,\ \text{for } p \ge p_0,
		\end{align*}
		and the right side converges to 0 when $p \to \infty$.
		\item[(iii)] (Lemma 3.2 of \cite{meyer_vector_2015}) 
		Let $\pA_p (z) \coloneqq \boldsymbol{I}_r -  \sum_{l=1}^{p} \pA_{l,p} z^l$, then there exist $p_1 \in \mathbb{N}$ and $c < \infty$ such that
		\begin{align*}
			\inf_{|z| \le 1+ 1/p} \left|\det\left(\pA_p (z)\right)\right| \ge c,\quad \text{for } p \ge p_1.
		\end{align*}
		\item[(iv)] (Lemma 3.3 of \cite{meyer_vector_2015})
		Let $\{\pPsi_{l,p}, l \in \mathbb{N}\}$ be the power series coefficients matrices of $\left( \boldsymbol{I}_r -  \sum_{l=1}^{p} \pA_{l,p} z^l \right)^{-1}$, for $|z| \le 1$. For $p_1$ as defined in (iii) and some $\gamma \ge 0$ in Assumption~\ref{1:a2}, there exist $p_2 \ge p_1$ and $d < \infty$ such that
		\begin{align*}
			\sum_{l=1}^{\infty} (1+l)^\gamma \left\| \pPsi_{l,p} - \pPsi_{l} \right\|_F \le d \sum_{l=p+1}^{\infty} (1+l)^\gamma \left\| \pA_{l} \right\|_F,\quad \text{for } p \ge p_2,
		\end{align*}
		and the right side converges to 0 when $p \to \infty$.
\end{enumerate}
\end{lemma}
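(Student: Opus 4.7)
My plan is to treat the four parts asymmetrically: parts (ii)--(iv) are explicitly attributed to \citet{meyer_vector_2015}, so I would simply cite those results as lemmas applied to $\{\pf_t\}$, which is legitimate because Assumption \ref{1:a2} on $\{\pf_t\}$ matches the hypotheses of the Meyer--Kreiss lemmas verbatim. The genuinely new content is part (i), the weighted norm summability of $\{\pA_l\}$ and $\{\pPsi_l\}$; the rest of the proof effort concentrates there.

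For the MA coefficients $\{\pPsi_l\}$, I would exploit the Wold representation \eqref{1:e3} together with the spectral density identity $W_{\pf}(\omega) = \frac{1}{2\pi} \Psi(e^{-i\omega}) \pS_e \Psi(e^{-i\omega})^\ast$ where $\Psi(z) = \pI_r + \sum_{l\ge 1} \pPsi_l z^l$. The boundedness from below on $W_{\pf}(\omega)$ (Lemma \ref{1:l5}) forces $\det \Psi(z)$ to be bounded away from zero on $|z|=1$, while boundedness from above plus the weighted summability $\sum_k (1+|k|)^\gamma \|\pGa_\pf(k)\|_F < \infty$ in Assumption \ref{1:a2} places $W_{\pf}$ in a weighted Wiener algebra $\mathcal{W}_\gamma$. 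Standard Wiener--L\'evy type results for matrix-valued Wiener algebras (see, e.g., the matrix extensions of Baxter's theorem used in \citet{cheng_baxters_1993}) then yield $\sum_l (1+l)^\gamma \|\pPsi_l\|_F < \infty$.

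For the AR coefficients $\{\pA_l\}$, the idea is the same in reverse: from the identity $\pA(z) \coloneqq \pI_r - \sum_{l\ge 1} \pA_l z^l = \Psi(z)^{-1}$ valid on $|z|\le 1$, the invertibility $\det \Psi(z) \neq 0$ on the closed unit disk combined with $\Psi \in \mathcal{W}_\gamma$ implies $\Psi^{-1} \in \mathcal{W}_\gamma$ as well, i.e.\ $\sum_l (1+l)^\gamma \|\pA_l\|_F < \infty$. This is precisely the matrix Wiener--L\'evy inversion theorem in the weighted algebra. An equivalent route, if one prefers a more elementary argument, is to derive a recursion expressing $\pA_l$ through $\pPsi_1,\dots,\pPsi_l$ and the inverse of a block-Toeplitz truncation, and then bound $\sum_l (1+l)^\gamma \|\pA_l\|_F$ by the analogous sum for $\{\pPsi_l\}$ via a Young-type convolution inequality on the weight $(1+l)^\gamma$, using submultiplicativity of $l \mapsto (1+l)^\gamma$.

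The main obstacle, as I see it, is ensuring the inversion argument delivers the \emph{weighted} summability rather than merely absolute summability; the $\gamma=0$ case is classical, but for $\gamma>0$ one must know that $\mathcal{W}_\gamma$ is a Banach algebra under convolution and that inversion stays within it. I would handle this by invoking the submultiplicativity $(1+l+k)^\gamma \le 2^\gamma\bigl((1+l)^\gamma + (1+k)^\gamma\bigr)$, which makes $\mathcal{W}_\gamma$ a Banach $\ast$-algebra whose maximal ideal space is the unit circle; spectral invariance (Wiener's lemma in $\mathcal{W}_\gamma$) then closes the argument. Once part (i) is in hand, parts (ii)--(iv) follow by direct reference to \citet[Lemmas 3.1--3.3]{meyer_vector_2015}.
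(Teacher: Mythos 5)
Your treatment of parts (ii)--(iv) coincides exactly with the paper's: those are quoted verbatim from \citet{meyer_vector_2015} and both you and the authors dispose of them by citation. The difference is in part (i), where the paper also simply writes that ``the proofs of Lemma \ref{1:l6} can be found in \citet{meyer_vector_2015}'' and omits everything, whereas you reconstruct the underlying argument: place $W_{\pf}$ in the weighted Wiener algebra $\mathcal{W}_\gamma$ using Assumption~\ref{1:a2}, use the two-sided eigenvalue bounds of Lemma~\ref{1:l5} to get non-degeneracy, extract $\Psi\in\mathcal{W}_\gamma$, and invert within the algebra to get $\pA(z)=\Psi(z)^{-1}\in\mathcal{W}_\gamma$. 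This is indeed the argument that sits behind the cited results (it goes back to \citet{cheng_baxters_1993} and Wiener--Masani), so your route is not different in substance, only more explicit; what it buys is a self-contained justification of the one part of the lemma that is not literally a numbered lemma of \citet{meyer_vector_2015}.

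Two points of precision are worth tightening. First, the step from $W_{\pf}\in\mathcal{W}_\gamma$ (bounded above and below) to $\Psi\in\mathcal{W}_\gamma$ is a \emph{matrix spectral factorization} statement, not an instance of the Wiener--L\'evy theorem: Wiener--L\'evy lets you apply analytic functions (in particular inversion) to elements of the algebra, but it does not by itself produce the canonical factor of a positive definite matrix symbol, nor guarantee that this factor stays in the weighted algebra. The correct reference for that step is the weighted Baxter/factorization machinery of \citet{cheng_baxters_1993}, which you do mention parenthetically; the proof should lean on it explicitly rather than on ``Wiener--L\'evy type results.'' Second, writing $\pA(z)=\Psi(z)^{-1}$ on the closed disk requires $\det\Psi(z)\neq 0$ for $|z|\le 1$, not just on $|z|=1$; the lower spectral bound only handles the boundary, and the interior non-vanishing comes from $\Psi$ being the outer (minimum-phase) Wold factor, which is exactly the invertibility restriction the paper imposes when passing from \eqref{1:e3} to \eqref{1:e2}. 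Neither point breaks the proof, but both should be stated. Your submultiplicativity inequality for the weight and the resulting Banach-algebra/Wiener-lemma structure of $\mathcal{W}_\gamma$ are correct.
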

Lemma~\ref{1:l6} (ii) is the vector form of Baxter's inequality on the AR coefficient matrices $\{\pA_l\}$ and its finite predictor coefficient matrices $\{\pA_{l,p}\}$, while Lemma~\ref{1:l6} (iv) relates Baxter's inequality of AR coefficients to the MA coefficient matrices $\{\pPsi_l\}$ and its finite predictor coefficient matrices $\{\pPsi_{l,p}\}$. The proofs of Lemma~\ref{1:l6} can be found in \citet{meyer_vector_2015}, hence it is omitted here.

\begin{lemma} \label{1:l7} (Lemma~3.5 of \citet{meyer_vector_2015})
Let $\{\pf_t\}$ be factor processes defined under the assumptions of Lemma~\ref{1:l6} and also fulfill Assumption~\ref{1:a3}. Define $\pPsi_{l,p}$ as the coefficients matrices in the power series of $\left( \boldsymbol{I}_r -  \sum_{l=1}^{p} \pA_{l,p} z^l \right)^{-1}$, for $|z| \le 1$ with $\pPsi_{0,q} \coloneqq \boldsymbol{I}_r$  and $\tPsi_{l,p}$ as the power series coefficients matrices of $\left( \boldsymbol{I}_r -  \sum_{l=1}^{p} \tA_{l,p} z^l \right)^{-1},\ \text{for } |z| \le 1$ with $\tPsi_{0,q} \coloneqq \boldsymbol{I}_r$. Then, there exists $p_3 \in \mathbb{N}$ such that \comm{for all $l \in \mathbb{N}$} and for all $p \ge p_3$,
\begin{equation*}
\left\| \tPsi_{l,p}-\pPsi_{l,p}\right\|_F \le \left(1+\frac{1}{p}\right)^{-l} \frac{1}{p^2} O_P\left(1\right).
\end{equation*}
\end{lemma}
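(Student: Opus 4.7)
The plan is to adapt the standard complex-analytic argument for Baxter-type inequalities, essentially replaying the proof of Lemma 3.5 in \citet{meyer_vector_2015} in the present notation. Set $\pA_p(z) = \boldsymbol{I}_r - \sum_{l=1}^p \pA_{l,p} z^l$ and $\tA_p(z) = \boldsymbol{I}_r - \sum_{l=1}^p \tA_{l,p} z^l$. The point is that $\pPsi_{l,p}$ and $\tPsi_{l,p}$ are the Taylor coefficients of $\pA_p(z)^{-1}$ and $\tA_p(z)^{-1}$ respectively, so if both inverses are holomorphic on a disc strictly larger than the unit disc, Cauchy's formula converts a uniform bound on the difference of the inverses on a circle into a geometric decay bound on the coefficients.

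First, I would quantify how close $\tA_p(z)$ is to $\pA_p(z)$ on the circle $|z| = 1 + 1/p$. Using Assumption \ref{1:a3} together with $(1+1/p)^p \le e$, one gets
\begin{align*}
\sup_{|z| = 1+1/p} \|\tA_p(z) - \pA_p(z)\|_F \;\le\; \Big(1+\tfrac{1}{p}\Big)^{p}\sum_{l=1}^{p}\|\tA_{l,p}-\pA_{l,p}\|_F \;=\; O_P(p^{-2}).
\end{align*}
Combined with Lemma \ref{1:l6}(iii), which gives $\inf_{|z|\le 1+1/p}|\det \pA_p(z)| \ge c > 0$ for $p \ge p_1$, a standard perturbation argument (e.g.\ via Weyl's inequality applied to $\pA_p(z)^{-1}(\tA_p(z)-\pA_p(z))$) shows that, with probability tending to one, $\inf_{|z|\le 1+1/p}|\det \tA_p(z)|$ is also bounded away from zero. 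Consequently both $\pA_p(z)^{-1}$ and $\tA_p(z)^{-1}$ are holomorphic on a neighbourhood of the closed disc $\{|z| \le 1+1/p\}$ with operator norms uniformly bounded by some finite constant $M$.

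Next, on that circle I would apply the identity
\begin{align*}
\tA_p(z)^{-1} - \pA_p(z)^{-1} \;=\; \tA_p(z)^{-1}\bigl(\pA_p(z)-\tA_p(z)\bigr)\pA_p(z)^{-1},
\end{align*}
so that $\sup_{|z|=1+1/p}\|\tA_p(z)^{-1} - \pA_p(z)^{-1}\|_F \le M^2 O_P(p^{-2}) = O_P(p^{-2})$. Cauchy's integral formula then yields
\begin{align*}
\tPsi_{l,p}-\pPsi_{l,p} \;=\; \frac{1}{2\pi i}\oint_{|z|=1+1/p}\frac{\tA_p(z)^{-1}-\pA_p(z)^{-1}}{z^{l+1}}\,dz,
\end{align*}
and a direct bound on the integrand (length of the contour is $2\pi(1+1/p)$, denominator has modulus $(1+1/p)^{l+1}$) gives
\begin{align*}
\|\tPsi_{l,p}-\pPsi_{l,p}\|_F \;\le\; \Big(1+\tfrac{1}{p}\Big)^{-l}\,\frac{1}{p^2}\,O_P(1),
\end{align*}
uniformly in $l$, as required.

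The main obstacle I anticipate is step two: verifying that the randomness allowed by Assumption \ref{1:a3} does not destroy the deterministic lower bound on $|\det \tA_p(z)|$ uniformly on the enlarged disc. Since the bound from Lemma \ref{1:l6}(iii) is only $\ge c$ and the perturbation is only $O_P(p^{-2})$ in operator norm (not deterministic), one needs to argue that the event $\{\inf_{|z|\le 1+1/p}|\det \tA_p(z)|\ge c/2\}$ holds with probability tending to one, which is where the threshold $p_3 \ge p_2$ enters and absorbs the stochastic order constants. Once this is in hand, the rest of the argument is a mechanical application of Cauchy's formula and requires no further ingredients beyond Lemma \ref{1:l6}.
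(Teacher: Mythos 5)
Your proposal is correct and is essentially the standard argument: the paper itself gives no proof of this lemma, deferring entirely to Lemma 3.5 of \citet{meyer_vector_2015}, whose proof is exactly the Cauchy-integral-formula argument you describe (uniform closeness of $\tA_p(z)$ and $\pA_p(z)$ on $|z|=1+1/p$ from Assumption~\ref{1:a3}, the determinant lower bound from Lemma~\ref{1:l6}(iii) transferred to $\tA_p(z)$ by perturbation, the resolvent identity, and Cauchy's estimate for the Taylor coefficients). The same technique is replayed in this paper's own proof of Lemma~\ref{1:k1} for $\hPsi_{l,p}-\tPsi_{l,p}$, so your reconstruction matches both the cited source and the paper's style; the only cosmetic point is that the invertibility of $\tA_p(z)$ on the enlarged disc is more naturally obtained via a Neumann-series/continuity-of-determinant argument than via Weyl's inequality.
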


The proof of Lemma~\ref{1:l7} can be found in \citet{meyer_vector_2015}.

\begin{lemma} \label{1:k1} 
Let $\{\pf_t\}$ be factor processes fulfilling Assumptions \ref{1:c1}, \ref{1:a2} ($\gamma = 1$), \ref{1:a3} and \ref{1:a4}. Define $\{\pPsi_{l,p}\}$ as the coefficients matrices in the power series of $\left( \boldsymbol{I}_r -  \sum_{l=1}^{p} \pA_{l,p} z^l \right)^{-1}$, for $|z| \le 1$ with $\pPsi_{0,q} \coloneqq \boldsymbol{I}_r$. Similarly, define $\{\tPsi_{l,p}\}$ as the power series coefficient matrices of $\left( \boldsymbol{I}_r -  \sum_{l=1}^{p} \tA_{l,p} z^l \right)^{-1},$ for $|z| \le 1$ with $\tPsi_{0,q} \coloneqq \boldsymbol{I}_r$, and $\{\hPsi_{l,p}\}$ as the power series coefficient matrices of $\left( \boldsymbol{I}_r -  \sum_{l=1}^{p} \hA_{l,p} z^l \right)^{-1},$ for $|z| \le 1$ with $\hPsi_{0,q} \coloneqq \boldsymbol{I}_r$. Then, there exists $p_3 \in \mathbb{N}$ such that for all $p \ge p_3$ as in Lemma~\ref{1:l7}. As $N, T\rightarrow\infty$, 
\begin{align*}
&\sum_{l=1}^{\infty} \left\| \tPsi_{l,p}-\pPsi_{l,p}\right\|_F = O_P\left(\frac{1}{p}\right) = o_P(1), \quad\quad
& \sum_{l=1}^{\infty} \left\| \pPsi_{l,p}-\pPsi_{l}\right\|_F = o\left(1\right),\\
&\sum_{l=1}^{\infty} \left\| \hPsi_{l,p}-\tPsi_{l,p}\right\|_F = O_P\left(p^{3/2} \left\| \hA_{p} - \tA_{p} \right\|_F\right)  = o_P(1), \quad\quad
& \sum_{l=1}^{\infty} \left\| \hPsi_{l,p}-\pPsi_{l,p}\right\|_F = o_P(1).
\end{align*}

\end{lemma}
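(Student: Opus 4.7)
The plan is to establish the four summability bounds one at a time, leaning on the auxiliary results already proved in the paper. The first and second statements reduce to nearly one-line applications of prior lemmas, the fourth follows by triangle inequality, and only the third requires genuinely new work---which is where I expect the main technical effort to lie.

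For the first bound, I would invoke Lemma~\ref{1:l7} directly, which gives the pointwise estimate $\|\tPsi_{l,p}-\pPsi_{l,p}\|_F \le (1+1/p)^{-l} p^{-2} O_P(1)$ uniformly in $l$ for $p$ large enough. Summing the geometric series $\sum_{l=1}^\infty (1+1/p)^{-l} = p$ then yields $\sum_l \|\tPsi_{l,p}-\pPsi_{l,p}\|_F = O_P(1/p) = o_P(1)$. For the second bound, which is purely deterministic, I would apply Lemma~\ref{1:l6}(iv) with $\gamma=0$: the right-hand side $d\sum_{l=p+1}^\infty \|\pA_l\|_F$ tends to zero as $p \to \infty$ by the norm summability in Lemma~\ref{1:l6}(i).

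The third bound is the main obstacle and requires a Baxter-type perturbation argument adapted from the proof of Lemma~\ref{1:l7}. The plan is as follows. First, I would establish that with probability tending to one, $\hA_p(z) := \boldsymbol{I}_r - \sum_{l=1}^p \hA_{l,p} z^l$ is invertible on $|z| \le 1 + 1/p$ with $|\det \hA_p(z)|$ bounded below. This follows from the corresponding bound for $\tA_p(z)$ (guaranteed by Lemma~\ref{1:l6}(iii) together with Lemma~\ref{1:l7} or a direct argument on $\tA_p$) combined with the perturbation $\sup_{|z|\le 1+1/p}\|\hA_p(z) - \tA_p(z)\|_F \le \sum_{l=1}^p (1+1/p)^l \|\hA_{l,p}-\tA_{l,p}\|_F \lesssim e \sqrt{p}\, \|\hA_p - \tA_p\|_F$, which is $o_P(1)$ by Lemma~\ref{1:l9} and Assumption~\ref{1:a4}. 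Next, use the matrix identity
\begin{equation*}
\hPsi(z) - \tPsi(z) = \hPsi(z)\bigl(\tA_p(z) - \hA_p(z)\bigr)\tPsi(z),
\end{equation*}
extract coefficients via Cauchy products, and bound the total variation by $\bigl(\sum_l \|\hPsi_{l,p}\|_F\bigr) \bigl(\sum_l \|\tA_{l,p}-\hA_{l,p}\|_F\bigr) \bigl(\sum_l \|\tPsi_{l,p}\|_F\bigr)$. The summability $\sum_l \|\tPsi_{l,p}\|_F = O_P(1)$ (and similarly for $\hPsi_{l,p}$ on the good event) follows from the uniform lower bound on $|\det \tA_p(z)|$ on $|z| \le 1+1/p$ via a Cauchy estimate. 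The middle factor is bounded by $\sqrt{p}\, \|\hA_p - \tA_p\|_F$. After careful accounting of the factors of $p$ coming from Cauchy products and polynomial-vs-power-series conversions, one arrives at $\sum_l \|\hPsi_{l,p}-\tPsi_{l,p}\|_F = O_P(p^{3/2}\|\hA_p - \tA_p\|_F)$. Combining with Lemma~\ref{1:l9}, which gives $\|\hA_p - \tA_p\|_F = O_P(p^4(N^{-1/2}+T^{-1/2}))$, yields the rate $O_P(p^{11/2}(N^{-1/2}+T^{-1/2})) = o_P(1)$ under Assumption~\ref{1:a4}.

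The fourth bound is then an immediate consequence of the triangle inequality $\|\hPsi_{l,p}-\pPsi_{l,p}\|_F \le \|\hPsi_{l,p}-\tPsi_{l,p}\|_F + \|\tPsi_{l,p}-\pPsi_{l,p}\|_F$ combined with the first and third bounds. The hardest part is the careful bookkeeping in step three to ensure the polynomial factors in $p$ match exactly the stated $p^{3/2}$ and are compatible with Assumption~\ref{1:a4}; everything else is essentially an application of Baxter's inequality machinery already encoded in Lemmas~\ref{1:l6}--\ref{1:l7}.
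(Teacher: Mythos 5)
Your treatment of the first, second and fourth bounds coincides with the paper's: Lemma~\ref{1:l7} plus the geometric sum $\sum_{l\ge1}(1+1/p)^{-l}=p$ for the first, Lemma~\ref{1:l6}(i),(iv) for the second, and the triangle inequality for the fourth. For the third bound you take a genuinely different (but equivalent) route. The paper works entrywise: it applies Cauchy's integral inequality on the circle $|z|=1+1/p$ to $\hA_p^{-1}(z)-\tA_p^{-1}(z)$, splits that difference via adjugates and determinants into the terms $\mathcal{K}_{1,z}$ and $\mathcal{K}_{2,z}$, shows each is $O_P\bigl(\sqrt{p}\,\|\hA_p-\tA_p\|_F\bigr)$ using the uniform lower bound on $|\det\tA_p(z)|$ (and hence on $|\det\hA_p(z)|$), and then picks up the extra factor $p$ from summing $(1+1/p)^{-l}$ over $l$. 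You instead use the resolvent identity $\hPsi(z)-\tPsi(z)=\hPsi(z)\bigl(\tA_p(z)-\hA_p(z)\bigr)\tPsi(z)$ and the submultiplicativity of the $\ell_1$ coefficient norm under Cauchy products; this is cleaner and avoids the adjugate manipulations. One point you should fix in the write-up: your claim that $\sum_l\|\hPsi_{l,p}\|_F=O_P(1)$ ``via a Cauchy estimate'' is not justified --- the Cauchy estimate on the radius-$(1+1/p)$ circle only yields $\|\hPsi_{l,p}\|_F\le(1+1/p)^{-l}O_P(1)$ and hence $\sum_l\|\hPsi_{l,p}\|_F=O_P(p)$, since $\hA_p(z)$ is only controlled on that thin annulus (for $\tPsi$ you do get $O_P(1)$, but from the first bound of the lemma together with Lemma~\ref{1:l6}, not from the Cauchy estimate). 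This is in fact exactly where the extra factor of $p$ must come from: $O_P(p)\cdot O_P\bigl(\sqrt{p}\,\|\hA_p-\tA_p\|_F\bigr)\cdot O_P(1)$ gives the stated $O_P\bigl(p^{3/2}\|\hA_p-\tA_p\|_F\bigr)$, matching the paper's accounting, and then Lemma~\ref{1:l9} with Assumption~\ref{1:a4} gives $o_P(1)$ as you say.
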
	

\begin{proof}[Proof of Lemma \ref{1:k1}]
	For large enough $N$, $T$ and $p>p_3$ as in Lemma \ref{1:l7}, $\sum_{l=1}^{\infty} \left\|\tPsi_{l,p}-\pPsi_{l,p}\right\|_F$ follows directly from Lemma~\ref{1:l7} as
	\begin{align*}
		\sum_{l=1}^{\infty} \left\| \tPsi_{l,p}-\pPsi_{l,p}\right\|_F 
		\le \frac{1}{p^2} \sum_{l=1}^{\infty}\left(1+\frac{1}{p} \right)^{-l} O_P\left(1\right)
		\le \frac{1}{p^2} \frac{p}{1+p}(1+p) O_P\left(1\right)
		=O_P\left(\frac{1}{p}\right).
	\end{align*}
	The order of $\sum_{l=1}^{\infty} \left\| \pPsi_{l,p}-\pPsi_{l}\right\|_F$ follows directly from Lemma \ref{1:l6} (i) and (iv), as
	\begin{align*}
		\sum_{l=1}^{\infty} \left\| \pPsi_{l,p}-\pPsi_{l}\right\|_F 
		\le \sum_{l=1}^{\infty} (1+l)^\gamma \left\| \pPsi_{l,p} -\pPsi_{l} \right\|_F 
		\le d \sum_{l=p+1}^{\infty} (1+l)^\gamma \left\| \pA_{l} \right\|_F
		=o\left(1\right).
	\end{align*}
	To show $\sum_{l=1}^{\infty} \left\| \hPsi_{l,p}-\tPsi_{l,p}\right\|_F = o_P\left(1\right),$ first notice that
	$
		\sum_{l=1}^{\infty} \left\|\hPsi_{l,p}-\tPsi_{l,p}\right\|_F
		\le 	\sum_{l=1}^{\infty} \sum_{u=1}^{r} \sum_{v=1}^{r} \left| \widehat{\Psi}_{l,p}^{(u,v)} - \widetilde{\Psi}_{l,p}^{(u,v)} \right|,
	$
	where $\widehat{\Psi}_{l,p}^{(u,v)}$ and $\widetilde{\Psi}_{l,p}^{(u,v)}$ are the $(u,v)$\textsuperscript{th} elements of the matrices $\hPsi_{l,p}$ and $\tPsi_{l,p}$, respectively. We then apply Cauchy's inequality for holomorphic functions on the $(u,v)$\textsuperscript{th} element of $\tPsi_{l,p}$ and $\pPsi_{l,p}$, that is
	\begin{align*}
		\left| \widehat{\Psi}_{l,p}^{(u,v)}-\widetilde{\Psi}_{l,p}^{(u,v)} \right| 
		&\le \left(1+\frac{1}{p}\right)^{-l} \max_{|z| = 1+\frac{1}{p}}  \left\| \hA_{p}^{-1}(z) - \tA_{p}^{-1}(z) \right\|_F
		\le \left(1+\frac{1}{p}\right)^{-l} \left[ \max_{|z| = 1+\frac{1}{p}} \frac{1}{|\det(\hA_{p}(z))|} \left\| \hA_{p}^{adj}(z) - \tA_{p}^{adj}(z) \right\|_F \right. \\
		&+ \left. \max_{|z| = 1+\frac{1}{p}} \left| \frac{1}{\det(\hA_{p}(z))} - \frac{1}{\det(\tA_{p}(z))} \right|  \left\| \tA_{p}^{adj}(z) \right\|_F \right]
		\eqqcolon \left(1+\frac{1}{p} \right)^{-l} \left[ \max_{|z| = 1+\frac{1}{p}}\mathcal{K}_{1,z} + \max_{|z| = 1+\frac{1}{p}}\mathcal{K}_{2,z} \right],
	\end{align*}
	where we use $\pA^{adj}$ to denote the adjugate matrix of $\pA$, and write the two terms above as $\mathcal{K}_{1,z}$ and $\mathcal{K}_{2,z}$.
	
	To study $\mathcal{K}_{1,z}$, with Assumption~\ref{1:a3}, Lemmas \ref{1:l2} and \ref{1:l9}, we show that with sufficiently large $N$ and $T$, we can choose $p>p_3$ such that $\left\| \hA_{p}- \tA_{p} \right\|_F = o_P(1) $ and $\sup_{|z|\le 1+\frac{1}{p}} \left\| \hA_{p}(z)- \tA_{p}(z) \right\|_F = o_P(1)$. Furthermore, since determinants are continuous functions of the elements, it can be extended to $\sup_{|z|\le 1+\frac{1}{p}} \left| \det\hA_{p}(z)- \det\tA_{p}(z) \right| \to 0$ in probability, with
	$
		\left|\det\left(\tA_p (z)\right)\right| \ge c\ \text{and}\ \left|\det\left(\hA_p (z)\right)\right| \ge c\ \text{in probability, for }|z|\le1+\frac{1}{p},
	$
	and for some $c>0$ as in Lemma \ref{1:l6}. Then, for $p > p_3$ and any $|z|=1+1/p$ we can show that
	\begin{align*}
		\mathcal{K}_{1,z} 
		&\le \frac{1}{c} \left\| \hA_{p}^{adj}(z) - \tA_{p}^{adj}(z) \right\|_F
		\le \frac{1}{c} \sum_{u=1}^{r} \sum_{v=1}^{r} \left| \hA_{p}^{adj}(z)^{(u,v)} - \tA_{p}^{adj}(z)^{(u,v)} \right| \\
		&\le \frac{1}{c} \sum_{u=1}^{r} \sum_{v=1}^{r} \sup_{|z|\le 1+\frac{1}{p}} \left| \det \hA_{p}^{(-v,-u)}(z) - \det \tA_{p}^{(-v,-u)}(z) \right|
		\le \frac{1}{c} \sum_{u=1}^{r} \sum_{v=1}^{r} \sup_{|z|\le 1+\frac{1}{p}} r \left\| \hA_{p}(z) - \tA_{p}(z) \right\|_F O_P\left(1\right)\\
		&\le \sup_{|z|\le 1+\frac{1}{p}} \left\| \hA_{p}(z) - \tA_{p}(z) \right\|_F,
	\end{align*}
	where $\tA_{p}^{(-v,-u)}(z)$ is a matrix generated by removing the $v$\textsuperscript{th} row and the $u$\textsuperscript{th} column of $\tA_{p}(z)$.
	
	And for $\sup_{|z|\le 1+\frac{1}{p}} \left\| \hA_{p}(z) - \tA_{p}(z) \right\|_F,$ we have
	\begin{align*}
		\sup_{|z|\le 1+\frac{1}{p}} \left\| \hA_{p}(z) - \tA_{p}(z) \right\|_F
		\le \sup_{|z|\le 1+\frac{1}{p}} \sum_{l=1}^{p}  \left\| \hA_{l,p} - \tA_{l,p} \right\|_F |\comm{z}|^l
		\le \left(1+\frac{1}{p}\right)^p \sum_{l=1}^{p}  \left\| \hA_{l,p} - \tA_{l,p} \right\|_F
		= O_P\left(\sqrt{p} \left\| \hA_{p} - \tA_{p} \right\|_F\right).
	\end{align*}
	Hence we can conclude that for $\mathcal{K}_{1,z}$,
	$
		\max_{|z| = 1+\frac{1}{p}}\mathcal{K}_{1,z}
		= O_P\left(\sqrt{p} \left\| \hA_{p} - \tA_{p} \right\|_F\right),
	$
	since the bound does not depend on $z$.
	
	For $\mathcal{K}_{2,z}$, note that $\max_{|z| = 1+\frac{1}{p}} \left\| \pA_{p}(z) \right\|_F \le (1+1/p)^p \sum_{l=1}^{p} \left\| \pA_{l,p} \right\|_F = O_P\left(1\right)$ by Lemma \ref{1:l6}, therefore, $\max_{|z| = 1+\frac{1}{p}} \left\| \tA_{p}(z) \right\|_F = O_P\left(1\right)$ by Assumption~\ref{1:a3}. Similarly, for some constants $c$,
	\begin{align*}
		\max_{|z| = 1+\frac{1}{p}} \mathcal{K}_{2,z} 
		\le \frac{1}{c^2} \max_{|z| = 1+\frac{1}{p}} \left| \det \hA_{p}(z) - \det \tA_{p}(z) \right| \left\| \tA_{p}^{adj}(z) \right\|_F
		= O_P\left(\sqrt{p} \left\| \hA_{p} - \tA_{p} \right\|_F\right).
	\end{align*}
	As a result, 
	$
		\sum_{l=1}^{\infty} \left\| \hPsi_{l,p}-\tPsi_{l,p}\right\|_F 
		\le \sum_{l=1}^{\infty} \sum_{u=1}^{r} \sum_{v=1}^{r} | \widehat{\Psi}_{l,p}^{(u,v)}-\widetilde{\Psi}_{l,p}^{(u,v)} |
		= O_P\left(p^{3/2} \left\| \hA_{p} - \tA_{p} \right\|_F\right).
	$
	Then, we can conclude that
	\begin{align*}
		\sum_{l=1}^{\infty} \left\| \hPsi_{l,p}-\pPsi_{l,p}\right\|_F
		\le 
		\sum_{l=1}^{\infty} \left\| \tPsi_{l,p}-\pPsi_{l,p}\right\|_F +
		\sum_{l=1}^{\infty} \left\| \hPsi_{l,p}-\tPsi_{l,p}\right\|_F
		= O_P\left(\frac{1}{p}\right) + O_P\left(p^{3/2} \left\| \hA_{p}-\tA_{p} \right\|_F\right).
	\end{align*}
\end{proof}

\begin{lemma} \label{1:k2} 
	Let $\{\pf_t\}$ be factor processes defined under the assumptions of Lemma \ref{1:k1}. Write $\pe_t = \pf_t - \sum_{l=1}^{\infty} \pA_{l}\pf_{t-l}$, $\pe_{t,p} = \pf_t - \sum_{l=1}^{p} \pA_{l,p}\pf_{t-l}$, $\te_{t,p} = \pf_t - \sum_{l=1}^{p} \tA_{l,p}\pf_{t-l}$ and $\he_{t,p} = \hf_t - \sum_{l=1}^{p} \hA_{l,p}\hf_{t-l}$. Furthermore, define the corresponding covariance $\tS_{e,p} = \mathbb{E}^\ast (\te_{t,p} - \overline{\te}_{{T'},p})(\te_{t,p} - \overline{\te}_{{T'},p})^\top$ with $\overline{\te}_{{T'},p} = \frac{1}{{T'}} \sum_{t=p+1}^T \te_{t,p}$, and $\hS_{e,p} = \mathbb{E}^\ast (\he_{t,p} - \overline{\he}_{{T'},p})(\he_{t,p} - \overline{\he}_{{T'},p})^\top$ with $\overline{\he}_{{T'},p} = \frac{1}{{T'}} \sum_{t=p+1}^T \he_{t,p}$, where $\mathbb{E}^\ast$ is the expectation defined on the measure of assigning probability $\frac{1}{{T'}}$ to each observation. 
	
	If we additionally assume that the empirical distribution of $\{\pe_t\}$ converges weakly to the distribution function of $\mathcal{L}(\pe_t)$, then, there exists $p_3 \in \mathbb{N}$ such that for all $p \ge p_3$ as in Lemma~\ref{1:l7}, when $N \to \infty$ and $T \to \infty$,
	\begin{align*}
		&\left\|\tS_{e,p}-\pS_{e,p}\right\|_F   = o_P(1),&\quad\quad
		\left\|\pS_{e,p}-\pS_{e}\right\|_F   = o(1),\\
		&\left\|\hS_{e,p}-\tS_{e,p}\right\|_F = O_P\left(p^{3/2} \left\|  \hA_{p} - \tA_{p} \right\|_F\right)  = o_P(1),&\quad\quad
		\left\|\hS_{e,p}-\pS_{e,p}\right\|_F  = o_P(1).
	\end{align*}	
\end{lemma}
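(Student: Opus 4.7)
\medskip
\noindent\textbf{Proof plan for Lemma \ref{1:k2}.}
The plan is to establish the four estimates in the order stated, chaining them together by triangle inequalities and leveraging Lemmas \ref{1:l6}, \ref{1:l8}, \ref{1:l9} and \ref{1:k1} for the quantitative bounds. Throughout I shall interpret $\pS_{e,p}$ as the population covariance $\mathbb{E}(\pe_{t,p}\pe_{t,p}^\top)$, so that it differs from $\tS_{e,p}$ and $\hS_{e,p}$ by both a sample-to-population step and a coefficient-estimation step.

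First, for $\|\pS_{e,p}-\pS_e\|_F = o(1)$ I would decompose
\[
\pe_{t,p} - \pe_t = \sum_{l=1}^{p} (\pA_l - \pA_{l,p})\pf_{t-l} - \sum_{l=p+1}^{\infty} \pA_l \pf_{t-l},
\]
expand the covariance difference into bilinear terms in $\pGa_f(\cdot)$, and bound the result by Cauchy--Schwarz combined with Baxter's inequality (Lemma \ref{1:l6}(ii) with $\gamma=1$) and the norm summability $\sum_l \|\pA_l\|_F<\infty$ from Lemma \ref{1:l6}(i). Everything reduces to a tail bound of the form $\sum_{l>p}(1+l)\|\pA_l\|_F$, which vanishes as $p\to\infty$.

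Second, I introduce an intermediate empirical covariance
\[
\boldsymbol{\Sigma}^{\circ}_{e,p} := \frac{1}{T-p}\sum_{t=p+1}^{T}(\pe_{t,p}-\overline{\pe}_{T,p})(\pe_{t,p}-\overline{\pe}_{T,p})^\top
\]
built from the true finite-predictor residuals, and apply $\|\tS_{e,p}-\pS_{e,p}\|_F \le \|\tS_{e,p}-\boldsymbol{\Sigma}^{\circ}_{e,p}\|_F + \|\boldsymbol{\Sigma}^{\circ}_{e,p}-\pS_{e,p}\|_F$. The second term is a standard sample-vs-population covariance error for the $\psi$-mixing stationary process $\{\pe_{t,p}\}$ (Assumption \ref{1:c1}(iii) plus finite fourth moments), and is of order $O_P(T^{-1/2})$ uniformly in the admissible range of $p$. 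For the first term, $\te_{t,p}-\pe_{t,p} = \sum_{l=1}^{p}(\pA_{l,p}-\tA_{l,p})\pf_{t-l}$, so expanding the bilinear form yields cross terms whose Frobenius norm is dominated by $\bigl(\sum_l \|\pA_{l,p}-\tA_{l,p}\|_F\bigr) \cdot \sum_l\|\pGa_f(l)\|_F$; the $p$-dependence is then absorbed by Assumption \ref{1:a3}.

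Third, for $\|\hS_{e,p}-\tS_{e,p}\|_F$ I would write
\[
\he_{t,p}-\te_{t,p} = (\hf_t - \pf_t) - \sum_{l=1}^{p}(\hA_{l,p}-\tA_{l,p})\hf_{t-l} - \sum_{l=1}^{p}\tA_{l,p}(\hf_{t-l}-\pf_{t-l}),
\]
substitute into the bootstrap sample covariance, and control the cross terms using: (i) Lemma \ref{1:l8} to get $\sup_t\|\hf_t-\pf_t\|=O_P(N^{-1/2}+T^{-1/2})$; (ii) Lemma \ref{1:l9} for $\|\hA_p-\tA_p\|_F=O_P(p^4(N^{-1/2}+T^{-1/2}))$; and (iii) the uniform summability $\sum_l\|\tA_{l,p}\|_F=O_P(1)$ provided by Lemma \ref{1:l6}(i) with Assumption \ref{1:a3}. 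Careful bookkeeping --- applying Cauchy--Schwarz at the summation index rather than pulling absolute values through blindly, exactly as in the proof of Lemma \ref{1:k1} --- should yield the advertised rate $O_P(p^{3/2}\|\hA_p-\tA_p\|_F)$, and Assumption \ref{1:a4} then forces this to be $o_P(1)$. The fourth estimate follows at once from $\|\hS_{e,p}-\pS_{e,p}\|_F \le \|\hS_{e,p}-\tS_{e,p}\|_F + \|\tS_{e,p}-\pS_{e,p}\|_F$.

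The main obstacle will be the third bound, specifically matching the claimed $p^{3/2}$ factor: a brute-force application of the triangle inequality across the three-way decomposition of $\he_{t,p}-\te_{t,p}$ yields $p^2\|\hA_p-\tA_p\|_F$, and recovering the sharper $p^{3/2}$ requires a Cauchy--Schwarz step over the summation index $l=1,\ldots,p$ together with uniform-in-$p$ control on $\sum_l\|\hA_{l,p}\|_F^2$ (equivalently, on the one-step prediction error variance). Both of these in turn rest on Assumption \ref{1:a3} combined with the Baxter-type bound in Lemma \ref{1:l6}(ii), mirroring the strategy already used to obtain the corresponding MA-coefficient bound in Lemma \ref{1:k1}.
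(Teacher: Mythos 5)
Your plan matches the paper's proof essentially step for step: the same decomposition of $\pe_{t,p}-\pe_t$ via Baxter's inequality (Lemma \ref{1:l6}) for the bound on $\|\pS_{e,p}-\pS_e\|_F$, the same three-part split (coefficient-estimation error, sample-versus-population error, centering term) for $\|\tS_{e,p}-\pS_{e,p}\|_F$, the same three-way decomposition of $\he_{t,p}-\te_{t,p}$ controlled by Lemmas \ref{1:l8} and \ref{1:l9} for $\|\hS_{e,p}-\tS_{e,p}\|_F$, and the triangle inequality for the final claim. Your worry about recovering the $p^{3/2}$ factor is fair — the paper's own argument only bounds the quadratic term $\frac{1}{T-p}\sum_{t}\|\he_{t,p}-\te_{t,p}\|_2^2$ and does not make that exponent explicit either — but the Cauchy--Schwarz route you sketch is the intended one.
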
	

\begin{proof}[Proof of Lemma \ref{1:k2}]
	To show $\left\| \tS_{e,p}-\pS_{e,p} \right\|_F \to 0$ in probability, first note that by definition,
	\begin{align*}
		\left\| \tS_{e,p}-\pS_{e,p} \right\|_F 
		&= \left\| \frac{1}{{T'}} \sum_{t=p+1}^{T} \left( \te_{t,p} \te_{t,p}^\top - \pe_{t,p} \pe_{t,p}^\top \right) \right\|_F 
		+ \left\| \frac{1}{{T'}} \sum_{t=p+1}^{T} \pe_{t,p} \pe_{t,p}^\top - \mathbb{E}\left( \pe_{t,p} \pe_{t,p}^\top\right) \right\|_F 
		+ \left\| \overline{\te}_{{T'},p} \overline{\te}_{{T'},p}^\top \right\|_F\\
		&\eqqcolon \mathcal{E}_1 + \mathcal{E}_2 + \mathcal{E}_3,
	\end{align*}
	with straightforward notations for $\mathcal{E}_1,\ \mathcal{E}_2$ and $\mathcal{E}_3$. Next, we show that the three terms above converge to zero in probability. For $\mathcal{E}_1$, we know that by triangular inequality,
	$
		\mathcal{E}_1 \le \left\| \frac{1}{{T'}} \sum_{t=p+1}^{T} \left(\te_{t,p}  - \pe_{t,p}\right) \te_{t,p}^\top \right\|_F 
		+ \left\| \frac{1}{{T'}} \sum_{t=p+1}^{T} \pe_{t,p} \left( \te_{t,p} - \pe_{t,p} \right)^\top\right\|_F
		\eqqcolon \mathcal{E}_{1,1} + \mathcal{E}_{1,2},
	$
	with obvious notations for $\mathcal{E}_{1,1}$ and $\mathcal{E}_{1,2}$. It is then sufficient to show $\mathcal{E}_{1,1} \to 0$ in probability since $\mathcal{E}_{1,2}$ can be dealt with similarly. We can now bound $\mathcal{E}_{1,1}$ by
	\begin{align*}
		\mathcal{E}_{1,1} 
		&\le \left\| \frac{1}{{T'}} \sum_{t=p+1}^{T} \sum_{l=1}^p \left(\tA_{l,p}  - \pA_{l,p}\right) \pf_{t-l} \te_{t,p}^\top \right\|_F 
		+  \left\| \frac{1}{{T'}} \sum_{t=p+1}^{T} \sum_{l=1}^p \left(\pA_{l,p}  - \pA_{l}\right) \pf_{t-l} \te_{t,p}^\top \right\|_F \\
		&+ \left\| \frac{1}{{T'}} \sum_{t=p+1}^{T} \sum_{l=p+1}^\infty \pA_{l} \pf_{t-l} \te_{t,p}^\top \right\|_F.
	\end{align*}
	Since both $\{\pf_t\}$ and $\{\te_{t,p}\}$ are $r \times 1$ vectors, by Assumption~\ref{1:a3} and Lemma \ref{1:l6}, we have
	\begin{align*}
		\mathcal{E}_{1,1} = O_P\left(\left\| \sum_{l=1}^p \left(\tA_{l,p}  - \pA_{l,p}\right) \right\|_F +  \sum_{l=p+1}^\infty (1+l)\left\| \pA_{l} \right\|_F\right),
	\end{align*}
	which tends to zero in probability. 
	$\mathcal{E}_2 \to 0$ in probability can be shown similarly, since $\{\pf_t\}$ is stationary.
 
 For $\mathcal{E}_3$, first write that
	\begin{align*}
		\mathcal{E}_3 
		= \left\| \overline{\te}_{{T'},p} \overline{\te}_{{T'},p}^\top \right\|_F 
		\le \left\| \left( \overline{\te}_{{T'},p} - \overline{\pe}_{{T'},p}\right) \left( \overline{\te}_{{T'},p} - \overline{\pe}_{{T'},p}\right)^\top \right\|_F + 2 \left\| \left( \overline{\te}_{{T'},p} - \overline{\pe}_{{T'},p}\right) \overline{\pe}_{{T'},p}^\top \right\|_F + \left\| \overline{\pe}_{{T'},p} \overline{\pe}_{{T'},p}^\top \right\|_F,
	\end{align*}
	where $\left\| \overline{\pe}_{{T'},p} \right\| = O_P\left(\left({T'}\right)^{-1/2}\right)$. Hence it is sufficient to consider $\left\| \overline{\te}_{{T'},p} - \overline{\pe}_{{T'},p} \right\|$ as
	\begin{align*}
		&\left\| \overline{\te}_{{T'},p} - \overline{\pe}_{{T'},p} \right\|
		= \left\| \frac{1}{{T'}} \sum_{t=p+1}^{T} \left(\te_{{T'},p} - \pe_{{T'},p}\right) \right\| 
		= \left\| \frac{1}{{T'}} \sum_{t=p+1}^{T} \left(\sum_{l=1}^{p} \tA_{l,p} \pf_{t-l} - \sum_{l=1}^{\infty} \pA_{l} \pf_{t-l}\right)\right\| \\
		&\le \left\| \frac{1}{{T'}} \sum_{t=p+1}^{T} \sum_{l=1}^{p} \left(\tA_{l,p} -\pA_{l,p}\right) \pf_{t-l} \right\| 
		+ \left\| \frac{1}{{T'}} \sum_{t=p+1}^{T} \sum_{l=1}^{p} \left(\pA_{l,p} -\pA_{l}\right) \pf_{t-l} \right\| 
		+ \left\| \frac{1}{{T'}} \sum_{t=p+1}^{T} \sum_{l=p+1}^{\infty} \pA_{l} \pf_{t-l} \right\| \\
		&=O_P\left(\left\| \sum_{l=1}^p \left(\tA_{l,p}  - \pA_{l,p}\right) \right\|_F \right) + O_P\left(\sum_{l=p+1}^{\infty} (1+l) \left\| \pA_{l} \right\|_F\right) \overset{p}{\to} 0,
	\end{align*}
	where the last line follows from Assumption~\ref{1:a3} and Lemma \ref{1:l6}, and we use the same arguments for $\mathcal{E}_{1,1}$ as above. Therefore, we can conclude that $\left\| \tS_{e,p}-\pS_{e,p}\right\|_F \to 0$ in probability.
	
	To see $\left\| \pS_{e,p}-\pS_{e}\right\|_F  \to 0$, note that
	$
		\left\| \pS_{e,p}-\pS_{e} \right\|_F
		= \left\| \mathbb{E} \left( \pe_{t,p}\pe_{t,p}^\top - \pe_{t}\pe_{t}^\top\right) \right\|_F 
		\le \left\| \mathbb{E} \left\{ \left( \pe_{t,p}- \pe_{t}\right)\pe_{t,p}^\top \right\} \right\|_F + \left\| \mathbb{E} \left\{ \pe_{t,p}\left( \pe_{t,p} - \pe_{t}\right)^\top \right\} \right\|_F.
	$
	Hence it suffices to show $\left\| \mathbb{E} \left\{ \left( \pe_{t,p}- \pe_{t}\right)\pe_{t,p}^\top \right\} \right\|_F \to 0$. For this, by the triangle inequality, we have
	$
		\left\| \mathbb{E} \left\{ \left( \pe_{t,p}- \pe_{t}\right)\pe_{t,p}^\top \right\} \right\|_F 
		\le  \left\| \mathbb{E} \sum_{l=1}^{p} \left( \pA_{l,p}- \pA_{l}\right)\pf_{t-l}\pe_{t,p}^\top \right\|_F +  \left\| \mathbb{E}  \sum_{l=p+1}^{\infty} \pA_{l}\pf_{t-l}\pe_{t,p}^\top \right\|_F 
		= O\left( \sum_{l=1}^{p} \left\| \pA_{l,p} \right.\right.$\\$\left.\left.- \pA_{l} \right\|_F\right) + O\left( \sum_{l=p+1}^{\infty} \left\| \pA_{l}\right\|_F\right) \to 0,
	$
	where we stress the fact that $\|\pf_t\| \asymp \| \pe_{t,p} \| \asymp 1$ and use the results in Lemma~\ref{1:l6}.
	
	With similar arguments, we can show that $\left\| \hS_{e,p}-\tS_{e,p}\right\|_F \to 0$ in probability. Firstly, notice that $\left( \hS_{e,p}-\tS_{e,p}\right)$ can be expressed as
	\begin{align*}
		&\hS_{e,p}-\tS_{e,p}
		=  \frac{1}{{T'}} \sum_{t=p+1}^{T} \left[  \left( \he_{t,p} - \overline{\he}_{{T'},p} \right) \left( \he_{t,p} - \overline{\he}_{{T'},p} \right)^\top - \left( \te_{t,p} - \overline{\te}_{{T'},p} \right) \left( \te_{t,p} - \overline{\te}_{{T'},p} \right)^\top \right]   \\
		=& \frac{1}{{T'}} \sum_{t=p+1}^{T} \left[ \left( \he_{t,p} - \overline{\he}_{{T'},p} \right) - \left( \te_{t,p} - \overline{\te}_{{T'},p} \right)\right] \left( \he_{t,p} - \te_{t,p} \right)^\top  
		- \frac{1}{{T'}} \sum_{t=p+1}^{T} \left[ \left( \he_{t,p} - \overline{\he}_{{T'},p} \right) - \left( \te_{t,p} - \overline{\te}_{{T'},p} \right)\right] \left( \overline{\he}_{{T'},p} - \overline{\te}_{{T'},p} \right)^\top \\
		+& \frac{1}{{T'}} \sum_{t=p+1}^{T}\left[  \left( \he_{t,p} - \overline{\he}_{{T'},p} \right) \left( \te_{t,p} - \overline{\te}_{{T'},p} \right)^\top\right]  
		+ \frac{1}{{T'}} \sum_{t=p+1}^{T}\left[  \left( \te_{t,p} - \overline{\te}_{{T'},p} \right) \left( \he_{t,p} - \overline{\he}_{{T'},p} \right)^\top\right].
	\end{align*}
	Recall that $\overline{\te}_{{T'},p} = \frac{1}{{T'}} \sum_{t=p+1}^T \te_{t,p}$ and $\overline{\he}_{{T'},p} = \frac{1}{{T'}} \sum_{t=p+1}^T \he_{t,p}$, therefore, by triangular inequality, it is sufficient to study the leading term $\frac{1}{{T'}} \sum_{t=p+1}^{T} \left[ (\he_{t,p} - \overline{\he}_{t,p}) - (\te_{t,p} - \overline{\te}_{t,p})\right] (\he_{t,p} - \te_{t,p})^\top$. For this, it is sufficient to consider the order of $\left\| \frac{1}{{T'}} \sum_{t=p+1}^{T} (\he_{t,p}  - \te_{t,p}) (\he_{t,p} - \te_{t,p})^\top \right\|_F$. We then have the bound
	\begin{align*}
		\frac{1}{{T'}} &\sum_{t=p+1}^{T} \left\| \he_{t,p}  - \te_{t,p} \right\|^2 
		\le 3 \sum_{l=1}^{p} \left\| \hA_{l,p} -\tA_{l,p} \right\|_F^2 \frac{1}{{T'}} \sum_{t=p+1}^{T} \left\| \hf_{t-l} \right\|^2 
		+ \frac{3}{{T'}} \sum_{t=p+1}^{T} \left\| \hf_{t} - \pf_{t} \right\|^2 \\
		&+ 3 \sum_{l=1}^{p} \left\| \tA_{l,p} \right\|_F^2 \frac{1}{{T'}} \sum_{t=p+1}^{T} \left\| \hf_{t-l} - \pf_{t-l} \right\|^2 
		= O_P\left(\left\| \hA_{p}- \tA_{p} \right\|_F^2\right) + O_P\left(p \left\| \hf_{t} - \pf_{t} \right\|^2\right),
	\end{align*}
	which converges to 0 in probability by the results of Lemmas \ref{1:l8} and \ref{1:l9}. Hence we can conclude that $\left\| \hS_{e,p}-\tS_{e,p}\right\|_F \to 0$ in probability.
	
	Lastly, $\left\| \hS_{e,p}-\pS_{e,p}\right\|_F = o_P\left(1\right)$ follows directly from $\left\| \hS_{e,p}-\tS_{e,p}\right\|_F = o_P\left(1\right)$, $\left\| \tS_{e,p}-\pS_{e,p}\right\|_F = o_P\left(1\right)$, and the triangular inequality.
\end{proof}

\section{Additional simulations on AR-sieve bootstrap}\label{1:sec:Appendix_C}

\subsection{AR-sieve bootstrap for mean statistics} \label{1:s2}

We examine the performance of the AR-sieve bootstrap for relatively weak factors. To achieve that, we evaluate the empirical coverage and average width of bootstrap confidence intervals for the mean statistics based on the same data-generating process as discussed in the main paper.

\begin{table}[!htb]
	\centering
	\caption{Empirical coverage, average width, and interval score of bootstrap intervals for $\theta_y$ with $\nu=0.6$.}
	\label{1:ta1e}
	\resizebox{\textwidth}{!}{%
		\begin{tabular}{|ccccccccccc|}
			\hline
			&  & \multicolumn{3}{c}{95\%} & \multicolumn{3}{c}{90\%} & \multicolumn{3}{c|}{80\%} \\ \hline
			T & N & \begin{tabular}[c]{@{}c@{}}Empirical\\  coverage\end{tabular} & \begin{tabular}[c]{@{}c@{}}Average\\ width\end{tabular} & \begin{tabular}[c]{@{}c@{}}Average\\  interval score\end{tabular} & \begin{tabular}[c]{@{}c@{}}Empirical\\  coverage\end{tabular} & \begin{tabular}[c]{@{}c@{}}Average\\ width\end{tabular} & \begin{tabular}[c]{@{}c@{}}Average\\  interval score\end{tabular} & \begin{tabular}[c]{@{}c@{}}Empirical\\  coverage\end{tabular} & \begin{tabular}[c]{@{}c@{}}Average\\ width\end{tabular} & \begin{tabular}[c]{@{}c@{}}Average\\  interval score\end{tabular} \\ \hline
			\multicolumn{11}{|c|}{Nonparametric bootstrap intervals using quantiles} \\ \hline
			\multirow{5}{*}{200}  & 50   & $0.957$ & $8.423$ & $10.729$ & $0.911$ & $7.080$ & $9.856$ & $0.819$ & $5.522$ & $8.810$ \\
			& 100  & $0.965$ & $8.551$ & $10.317$ & $0.913$ & $7.186$ & $9.506$ & $0.830$ & $5.601$ & $8.642$ \\
			& 200  & $0.965$ & $8.490$ & $10.791$ & $0.928$ & $7.136$ & $9.597$ & $0.839$ & $5.570$ & $8.476$ \\
			& 500  & $0.970$ & $8.742$ & $10.666$ & $0.927$ & $7.351$ & $9.590$ & $0.828$ & $5.732$ & $8.717$ \\
			& 1000 & $0.968$ & $9.090$ & $11.055$ & $0.946$ & $7.643$ & $9.641$ & $0.854$ & $5.954$ & $8.444$ \\
			&&&&&&&&&& \\
			\multirow{5}{*}{500}  & 50   & $0.939$ & $8.521$ & $12.639$ & $0.880$ & $7.164$ & $11.564$ & $0.774$ & $5.583$ & $10.313$ \\
			& 100  & $0.949$ & $8.288$ & $11.068$ & $0.893$ & $6.970$ & $10.284$ & $0.791$ & $5.438$ & $9.466$ \\
			& 200  & $0.947$ & $8.543$ & $12.417$ & $0.904$ & $7.183$ & $10.928$ & $0.818$ & $5.597$ & $9.619$ \\
			& 500  & $0.960$ & $8.525$ & $10.822$ & $0.929$ & $7.157$ & $9.732$ & $0.829$ & $5.591$ & $8.822$ \\
			& 1000 & $0.952$ & $8.343$ & $11.234$ & $0.916$ & $7.016$ & $10.067$ & $0.836$ & $5.472$ & $8.676$ \\
			&&&&&&&&&& \\
			\multirow{5}{*}{1000} & 50   & $0.931$ & $8.581$ & $13.487$ & $0.886$ & $7.213$ & $11.923$ & $0.774$ & $5.631$ & $10.433$ \\
			& 100  & $0.944$ & $8.441$ & $13.101$ & $0.889$ & $7.105$ & $11.734$ & $0.768$ & $5.538$ & $10.550$ \\
			& 200  & $0.937$ & $8.209$ & $12.268$ & $0.891$ & $6.905$ & $11.084$ & $0.792$ & $5.383$ & $9.744$ \\
			& 500  & $0.953$ & $8.547$ & $11.405$ & $0.900$ & $7.189$ & $10.701$ & $0.815$ & $5.603$ & $9.635$ \\
			& 1000 & $0.954$ & $8.584$ & $11.455$ & $0.891$ & $7.214$ & $10.683$ & $0.795$ & $5.630$ & $9.868$ \\ \hline
			\multicolumn{11}{|c|}{Parametric bootstrap intervals based on normality} \\ \hline
			\multirow{5}{*}{200}  & 50   & $0.961$ & $8.465$ & $10.704$ & $0.910$ & $7.104$ & $9.847$ & $0.822$ & $5.535$ & $8.816$ \\
			& 100  & $0.966$ & $8.590$ & $10.243$ & $0.921$ & $7.209$ & $9.485$ & $0.830$ & $5.617$ & $8.632$ \\
			& 200  & $0.968$ & $8.533$ & $10.748$ & $0.932$ & $7.162$ & $9.574$ & $0.839$ & $5.580$ & $8.453$ \\
			& 500  & $0.966$ & $8.783$ & $10.614$ & $0.927$ & $7.371$ & $9.592$ & $0.829$ & $5.743$ & $8.697$ \\
			& 1000 & $0.970$ & $9.127$ & $10.937$ & $0.948$ & $7.659$ & $9.675$ & $0.854$ & $5.968$ & $8.458$ \\
			&&&&&&&&&& \\
			\multirow{5}{*}{500}  & 50   & $0.940$ & $8.555$ & $12.396$ & $0.879$ & $7.180$ & $11.407$ & $0.774$ & $5.594$ & $10.283$ \\
			& 100  & $0.950$ & $8.330$ & $11.046$ & $0.896$ & $6.991$ & $10.306$ & $0.786$ & $5.447$ & $9.439$ \\
			& 200  & $0.947$ & $8.575$ & $12.331$ & $0.909$ & $7.196$ & $10.927$ & $0.822$ & $5.607$ & $9.607$ \\
			& 500  & $0.965$ & $8.560$ & $10.681$ & $0.931$ & $7.184$ & $9.697$ & $0.831$ & $5.597$ & $8.802$ \\
			& 1000 & $0.957$ & $8.389$ & $11.215$ & $0.917$ & $7.040$ & $10.082$ & $0.842$ & $5.485$ & $8.660$ \\
			&&&&&&&&&& \\
			\multirow{5}{*}{1000} & 50   & $0.932$ & $8.629$ & $13.388$ & $0.888$ & $7.242$ & $11.920$ & $0.774$ & $5.642$ & $10.429$ \\
			& 100  & $0.946$ & $8.481$ & $13.061$ & $0.891$ & $7.118$ & $11.670$ & $0.769$ & $5.546$ & $10.503$ \\
			& 200  & $0.942$ & $8.249$ & $12.290$ & $0.895$ & $6.922$ & $11.074$ & $0.791$ & $5.394$ & $9.684$ \\
			& 500  & $0.953$ & $8.585$ & $11.349$ & $0.899$ & $7.205$ & $10.641$ & $0.817$ & $5.614$ & $9.597$ \\
			& 1000 & $0.954$ & $8.626$ & $11.343$ & $0.894$ & $7.239$ & $10.612$ & $0.799$ & $5.640$ & $9.817$ \\ \hline
		\end{tabular}%
	}
\end{table}

As shown in Tables~\ref{1:ta1e} to~\ref{1:ta1i}, when $\nu$ is further reduced from $0.6$ to $0.2$ and the factors are weakened, the empirical coverage tends to increase with $N/T$, and the bootstrap intervals become wider and wider. This suggests that the AR-sieve bootstrap overestimates the standard error of the (standardized) mean statistic when $N$ increases. When the factors become weaker, the spikiness of the first two largest eigenvalues of accumulated symmetrized autocovariance matrices decreases. The number of factors can be overestimated, which brings the noise into the bootstrap samples. As a result, neither of the two types of bootstrap intervals performs well when factors are very weak (especially when $\nu = 0.2$) and $N/T$ is large. The bootstrap distribution of the (standardized) mean statistic suffers from comparably fatter tails. This phenomenon can be observed especially for large $T$ in Table~\ref{1:ta1i}, where both the average widths and the empirical coverages of bootstrap intervals are increasing with sample size $N$ while the average interval scores are decreasing.

\begin{table}[!htbp]
	\centering
	\caption{Empirical coverage, average width, and interval score of bootstrap intervals for $\theta_y$ with $\nu=0.4$.}
	\label{1:ta1g}
	\resizebox{\textwidth}{!}{%
		\begin{tabular}{|ccccccccccc|}
			\hline
			&  & \multicolumn{3}{c}{95\%} & \multicolumn{3}{c}{90\%} & \multicolumn{3}{c|}{80\%} \\ \hline
			T & N & \begin{tabular}[c]{@{}c@{}}Empirical\\  coverage\end{tabular} & \begin{tabular}[c]{@{}c@{}}Average\\ width\end{tabular} & \begin{tabular}[c]{@{}c@{}}Average\\  interval score\end{tabular} & \begin{tabular}[c]{@{}c@{}}Empirical\\  coverage\end{tabular} & \begin{tabular}[c]{@{}c@{}}Average\\ width\end{tabular} & \begin{tabular}[c]{@{}c@{}}Average\\  interval score\end{tabular} & \begin{tabular}[c]{@{}c@{}}Empirical\\  coverage\end{tabular} & \begin{tabular}[c]{@{}c@{}}Average\\ width\end{tabular} & \begin{tabular}[c]{@{}c@{}}Average\\  interval score\end{tabular} \\ \hline
			\multicolumn{11}{|c|}{Nonparametric bootstrap intervals using quantiles} \\ \hline
			\multirow{5}{*}{200}  & 50   & $0.969$ & $8.513$ & $9.931$  & $0.933$ & $7.154$ & $9.136$ & $0.845$ & $5.585$ & $8.188$ \\
			& 100  & $0.980$ & $8.821$ & $9.634$  & $0.944$ & $7.417$ & $8.730$ & $0.865$ & $5.782$ & $7.949$ \\
			& 200  & $0.982$ & $8.868$ & $10.416$ & $0.960$ & $7.451$ & $8.854$ & $0.887$ & $5.817$ & $7.638$ \\
			& 500  & $0.989$ & $9.648$ & $10.149$ & $0.973$ & $8.111$ & $8.870$ & $0.915$ & $6.323$ & $7.439$ \\
			& 1000 & $0.992$ & $10.190$ & $10.407$ & $0.980$ & $8.557$ & $9.079$ & $0.939$ & $6.672$ & $7.607$ \\
			&&&&&&&&&& \\
			\multirow{5}{*}{500}  & 50   & $0.943$ & $8.567$ & $12.859$ & $0.874$ & $7.197$ & $11.536$ & $0.765$ & $5.614$ & $10.393$ \\
			& 100  & $0.962$ & $8.367$ & $10.292$ & $0.903$ & $7.030$ & $9.770$ & $0.796$ & $5.487$ & $9.049$ \\
			& 200  & $0.957$ & $8.743$ & $11.581$ & $0.925$ & $7.352$ & $10.183$ & $0.846$ & $5.733$ & $9.016$ \\
			& 500  & $0.978$ & $8.974$ & $9.992$  & $0.945$ & $7.549$ & $8.930$ & $0.863$ & $5.889$ & $7.966$ \\
			& 1000 & $0.984$ & $8.998$ & $10.014$ & $0.959$ & $7.580$ & $8.876$ & $0.898$ & $5.916$ & $7.694$ \\
			&&&&&&&&&& \\
			\multirow{5}{*}{1000} & 50   & $0.934$ & $8.624$ & $13.608$ & $0.885$ & $7.250$ & $12.105$ & $0.771$ & $5.653$ & $10.618$ \\
			& 100  & $0.943$ & $8.486$ & $12.923$ & $0.891$ & $7.142$ & $11.624$ & $0.785$ & $5.570$ & $10.552$ \\
			& 200  & $0.941$ & $8.277$ & $11.882$ & $0.888$ & $6.959$ & $10.814$ & $0.805$ & $5.426$ & $9.487$ \\
			& 500  & $0.967$ & $8.709$ & $10.811$ & $0.917$ & $7.320$ & $9.967$ & $0.842$ & $5.711$ & $9.070$ \\
			& 1000 & $0.972$ & $8.939$ & $11.083$ & $0.919$ & $7.525$ & $9.874$ & $0.831$ & $5.875$ & $9.059$ \\ \hline
			\multicolumn{11}{|c|}{Parametric bootstrap intervals based on normality} \\ \hline
			\multirow{5}{*}{200}  & 50   & $0.971$ & $8.555$ & $9.934$  & $0.934$ & $7.180$ & $9.109$ & $0.843$ & $5.594$ & $8.200$ \\
			& 100  & $0.979$ & $8.868$ & $9.685$  & $0.947$ & $7.442$ & $8.695$ & $0.862$ & $5.798$ & $7.935$ \\
			& 200  & $0.985$ & $8.915$ & $10.326$ & $0.956$ & $7.481$ & $8.877$ & $0.889$ & $5.829$ & $7.644$ \\
			& 500  & $0.989$ & $9.685$ & $10.176$ & $0.975$ & $8.128$ & $8.891$ & $0.918$ & $6.333$ & $7.463$ \\
			& 1000 & $0.993$ & $10.228$ & $10.403$ & $0.982$ & $8.583$ & $9.054$ & $0.939$ & $6.688$ & $7.594$ \\
			&&&&&&&&&& \\
			\multirow{5}{*}{500}  & 50   & $0.945$ & $8.597$ & $12.755$ & $0.876$ & $7.215$ & $11.405$ & $0.763$ & $5.621$ & $10.363$ \\
			& 100  & $0.958$ & $8.403$ & $10.410$ & $0.908$ & $7.052$ & $9.762$ & $0.799$ & $5.494$ & $9.016$ \\
			& 200  & $0.960$ & $8.775$ & $11.546$ & $0.928$ & $7.364$ & $10.254$ & $0.846$ & $5.737$ & $9.006$ \\
			& 500  & $0.978$ & $9.016$ & $10.091$ & $0.947$ & $7.566$ & $8.916$ & $0.866$ & $5.895$ & $7.941$ \\
			& 1000 & $0.986$ & $9.054$ & $10.033$ & $0.959$ & $7.599$ & $8.919$ & $0.897$ & $5.920$ & $7.688$ \\
			&&&&&&&&&& \\
			\multirow{5}{*}{1000} & 50   & $0.932$ & $8.666$ & $13.475$ & $0.883$ & $7.273$ & $12.096$ & $0.775$ & $5.666$ & $10.598$ \\
			& 100  & $0.944$ & $8.531$ & $12.906$ & $0.894$ & $7.159$ & $11.577$ & $0.779$ & $5.578$ & $10.503$ \\
			& 200  & $0.945$ & $8.317$ & $11.877$ & $0.893$ & $6.979$ & $10.803$ & $0.802$ & $5.438$ & $9.443$ \\
			& 500  & $0.968$ & $8.749$ & $10.722$ & $0.923$ & $7.343$ & $9.938$ & $0.846$ & $5.721$ & $9.033$ \\
			& 1000 & $0.972$ & $8.994$ & $10.924$ & $0.926$ & $7.548$ & $9.814$ & $0.835$ & $5.881$ & $9.025$ \\ \hline
		\end{tabular}%
	}
\end{table}

\begin{table}[!thbp]
	\centering
	\caption{Empirical coverage, average width, and interval score of bootstrap intervals for $\theta_y$ with $\nu=0.2$.}
	\label{1:ta1i}
	\resizebox{\textwidth}{!}{%
		\begin{tabular}{|ccccccccccc|}
			\hline
			&  & \multicolumn{3}{c}{95\%} & \multicolumn{3}{c}{90\%} & \multicolumn{3}{c|}{80\%} \\ \hline
			T & N & \begin{tabular}[c]{@{}c@{}}Empirical\\  coverage\end{tabular} & \begin{tabular}[c]{@{}c@{}}Average\\ width\end{tabular} & \begin{tabular}[c]{@{}c@{}}Average\\  interval score\end{tabular} & \begin{tabular}[c]{@{}c@{}}Empirical\\  coverage\end{tabular} & \begin{tabular}[c]{@{}c@{}}Average\\ width\end{tabular} & \begin{tabular}[c]{@{}c@{}}Average\\  interval score\end{tabular} & \begin{tabular}[c]{@{}c@{}}Empirical\\  coverage\end{tabular} & \begin{tabular}[c]{@{}c@{}}Average\\ width\end{tabular} & \begin{tabular}[c]{@{}c@{}}Average\\  interval score\end{tabular} \\ \hline
			\multicolumn{11}{|c|}{Nonparametric bootstrap intervals using quantiles} \\ \hline
			\multirow{5}{*}{200}  & 50   & $0.980$ & $8.677$ & $9.368$  & $0.952$ & $7.291$ & $8.417$ & $0.877$ & $5.687$ & $7.451$ \\
			& 100  & $0.989$ & $9.119$ & $9.628$  & $0.971$ & $7.647$ & $8.297$ & $0.900$ & $5.967$ & $7.243$ \\
			& 200  & $0.994$ & $9.297$ & $9.702$  & $0.980$ & $7.819$ & $8.338$ & $0.944$ & $6.098$ & $6.886$ \\
			& 500  & $1.000$ & $10.850$ & $10.850$ & $0.998$ & $9.119$ & $9.131$ & $0.985$ & $7.120$ & $7.272$ \\
			& 1000 & $0.997$ & $12.374$ & $12.521$ & $0.994$ & $10.399$ & $10.670$ & $0.988$ & $8.101$ & $8.424$ \\
			&&&&&&&&&& \\
			\multirow{5}{*}{500}  & 50   & $0.940$ & $8.714$ & $12.959$ & $0.888$ & $7.330$ & $11.577$ & $0.786$ & $5.711$ & $10.325$ \\
			& 100  & $0.973$ & $8.591$ & $9.978$  & $0.930$ & $7.229$ & $9.165$ & $0.837$ & $5.632$ & $8.327$ \\
			& 200  & $0.981$ & $9.123$ & $10.594$ & $0.957$ & $7.673$ & $9.256$ & $0.897$ & $5.977$ & $7.953$ \\
			& 500  & $0.997$ & $9.799$ & $9.868$  & $0.984$ & $8.236$ & $8.625$ & $0.942$ & $6.433$ & $7.148$ \\
			& 1000 & $0.999$ & $10.222$ & $10.344$ & $0.998$ & $8.591$ & $8.743$ & $0.977$ & $6.700$ & $6.994$ \\
			&&&&&&&&&& \\
			\multirow{5}{*}{1000} & 50   & $0.938$ & $8.793$ & $13.505$ & $0.878$ & $7.395$ & $12.148$ & $0.775$ & $5.759$ & $10.722$ \\
			& 100  & $0.950$ & $8.668$ & $12.120$ & $0.887$ & $7.288$ & $11.303$ & $0.787$ & $5.691$ & $10.382$ \\
			& 200  & $0.961$ & $8.495$ & $11.213$ & $0.910$ & $7.133$ & $10.264$ & $0.826$ & $5.561$ & $9.021$ \\
			& 500  & $0.989$ & $9.152$ & $9.786$  & $0.962$ & $7.686$ & $8.759$ & $0.880$ & $5.986$ & $7.852$ \\
			& 1000 & $0.990$ & $9.789$ & $10.293$ & $0.972$ & $8.216$ & $8.920$ & $0.910$ & $6.416$ & $7.662$ \\ \hline
			\multicolumn{11}{|c|}{Parametric bootstrap intervals based on normality} \\ \hline
			\multirow{5}{*}{200}  & 50   & $0.983$ & $8.717$ & $9.392$  & $0.951$ & $7.316$ & $8.443$ & $0.880$ & $5.700$ & $7.458$ \\
			& 100  & $0.990$ & $9.150$ & $9.628$  & $0.968$ & $7.679$ & $8.317$ & $0.901$ & $5.983$ & $7.247$ \\
			& 200  & $0.993$ & $9.347$ & $9.719$  & $0.980$ & $7.844$ & $8.342$ & $0.945$ & $6.112$ & $6.911$ \\
			& 500  & $1.000$ & $10.907$ & $10.907$ & $0.998$ & $9.153$ & $9.157$ & $0.985$ & $7.131$ & $7.277$ \\
			& 1000 & $0.997$ & $12.421$ & $12.583$ & $0.994$ & $10.424$ & $10.694$ & $0.988$ & $8.122$ & $8.438$ \\
			&&&&&&&&&& \\
			\multirow{5}{*}{500}  & 50   & $0.946$ & $8.751$ & $12.804$ & $0.892$ & $7.344$ & $11.463$ & $0.786$ & $5.722$ & $10.259$ \\
			& 100  & $0.970$ & $8.635$ & $10.047$ & $0.934$ & $7.246$ & $9.145$ & $0.843$ & $5.646$ & $8.309$ \\
			& 200  & $0.982$ & $9.160$ & $10.503$ & $0.960$ & $7.687$ & $9.220$ & $0.893$ & $5.989$ & $7.935$ \\
			& 500  & $0.996$ & $9.850$ & $9.968$  & $0.986$ & $8.266$ & $8.619$ & $0.946$ & $6.440$ & $7.129$ \\
			& 1000 & $0.999$ & $10.263$ & $10.343$ & $0.998$ & $8.613$ & $8.772$ & $0.977$ & $6.710$ & $7.004$ \\
			&&&&&&&&&& \\
			\multirow{5}{*}{1000} & 50   & $0.937$ & $8.833$ & $13.426$ & $0.881$ & $7.413$ & $12.134$ & $0.776$ & $5.776$ & $10.703$ \\
			& 100  & $0.952$ & $8.713$ & $12.132$ & $0.891$ & $7.312$ & $11.276$ & $0.784$ & $5.697$ & $10.336$ \\
			& 200  & $0.959$ & $8.527$ & $11.119$ & $0.913$ & $7.156$ & $10.135$ & $0.824$ & $5.576$ & $8.983$ \\
			& 500  & $0.989$ & $9.179$ & $9.783$  & $0.965$ & $7.703$ & $8.700$ & $0.882$ & $6.002$ & $7.826$ \\
			& 1000 & $0.992$ & $9.832$ & $10.339$ & $0.979$ & $8.251$ & $8.918$ & $0.914$ & $6.429$ & $7.648$ \\ \hline
		\end{tabular}%
	}
\end{table}

\subsection{AR-sieve bootstrap for spiked eigenvalues of the symmetrized autocovariance matrix}\label{1:s3}

The study on spiked eigenvalues of high-dimensional covariance matrices has received massive attention in the past decades. For time-series data, researchers are particularly interested in the spiked eigenvalues of the symmetrized autocovariance matrix. However, the theoretical results of these spiked eigenvalues of the symmetrized autocovariance matrix for high-dimensional time series are much more involved and hard to apply for practical analysis. As an alternative, the AR-sieve bootstrap can be considered for real data applications when the theoretical results do not exist or are hard to implement. As discussed in Proposition~\ref{1:t3}, the bootstrap estimates $\delta_i^\ast(k)$ are generally consistent to $\delta_i(k)$. However, without a central limit theorem (CLT) on $\widehat{\delta}_i(k)$, the spiked eigenvalues of the symmetrized sample autocovariance matrix, it is generally hard to derive the validity of the AR-sieve bootstrapped estimate theoretically. We use simulations to study our AR-sieve bootstrap method's performance on estimating $\delta_i(k)$. To be more specific, the data we generated are based on the strongest factor model where $\nu=1$. We continue the study on the validity and consistency of our AR-sieve bootstrap method by assessing the empirical coverage of bootstrap intervals on the first two largest eigenvalues $\delta_1$ and $\delta_2$ of the symmetrized lag-$1$ autocovariance matrix. To make a comprehensive comparison based on average width and interval score of bootstrap intervals for various combination of $N$ and $T$, the bootstrap intervals are created based on standardized eigenvalues $\delta_1^0 = \frac{\sqrt{T}}{{N}^2} \delta_1$ and $\delta_2^0 = \frac{\sqrt{T}}{{N}^2} \delta_2$ rather than $\delta_1$ and $\delta_2$.  

\begin{table}[!htbp]
	\centering
	\caption{Empirical coverage, average width, and interval score of bootstrap intervals for $\delta_1^0$ with $\nu=1$.}
	\label{1:ta3a}
	\resizebox{\textwidth}{!}{%
		\begin{tabular}{|ccccccccccc|}
			\hline
			&  & \multicolumn{3}{c}{95\%} & \multicolumn{3}{c}{90\%} & \multicolumn{3}{c|}{80\%} \\ \hline
			T & N & \begin{tabular}[c]{@{}c@{}}Empirical\\  coverage\end{tabular} & \begin{tabular}[c]{@{}c@{}}Average\\ width\end{tabular} & \begin{tabular}[c]{@{}c@{}}Average\\  interval score\end{tabular} & \begin{tabular}[c]{@{}c@{}}Empirical\\  coverage\end{tabular} & \begin{tabular}[c]{@{}c@{}}Average\\ width\end{tabular} & \begin{tabular}[c]{@{}c@{}}Average\\  interval score\end{tabular} & \begin{tabular}[c]{@{}c@{}}Empirical\\  coverage\end{tabular} & \begin{tabular}[c]{@{}c@{}}Average\\ width\end{tabular} & \begin{tabular}[c]{@{}c@{}}Average\\  interval score\end{tabular} \\ \hline
			\multicolumn{11}{|c|}{Nonparametric bootstrap intervals using quantiles} \\ \hline
			\multirow{5}{*}{200}  & 50   & $0.846$ & $11.881$ & $27.227$ & $0.819$ & $9.775$ & $18.896$ & $0.771$ & $7.470$ & $13.697$ \\
			& 100  & $0.855$ & $11.999$ & $26.475$ & $0.835$ & $9.895$ & $18.426$ & $0.794$ & $7.587$ & $13.157$ \\
			& 200  & $0.854$ & $11.732$ & $26.724$ & $0.837$ & $9.676$ & $18.463$ & $0.798$ & $7.390$ & $13.200$ \\
			& 500  & $0.874$ & $11.805$ & $25.093$ & $0.846$ & $9.730$ & $17.536$ & $0.789$ & $7.444$ & $12.811$ \\
			& 1000 & $0.858$ & $12.077$ & $26.443$ & $0.841$ & $9.967$ & $18.380$ & $0.795$ & $7.623$ & $13.385$ \\
			&&&&&&&&&& \\
			\multirow{5}{*}{500}  & 50   & $0.887$ & $11.377$ & $22.661$ & $0.858$ & $9.481$ & $16.962$ & $0.777$ & $7.347$ & $13.539$ \\
			& 100  & $0.892$ & $11.326$ & $22.973$ & $0.873$ & $9.441$ & $16.895$ & $0.800$ & $7.317$ & $12.991$ \\
			& 200  & $0.891$ & $11.444$ & $23.008$ & $0.864$ & $9.541$ & $17.078$ & $0.797$ & $7.391$ & $13.353$ \\
			& 500  & $0.885$ & $11.426$ & $23.913$ & $0.858$ & $9.521$ & $17.425$ & $0.782$ & $7.366$ & $13.648$ \\
			& 1000 & $0.884$ & $11.357$ & $23.069$ & $0.866$ & $9.478$ & $17.031$ & $0.775$ & $7.339$ & $13.398$ \\
			&&&&&&&&&& \\
			\multirow{5}{*}{1000} & 50   & $0.943$ & $11.440$ & $17.729$ & $0.907$ & $9.582$ & $14.185$ & $0.810$ & $7.446$ & $12.196$ \\
			& 100  & $0.935$ & $11.322$ & $17.117$ & $0.901$ & $9.490$ & $14.079$ & $0.803$ & $7.372$ & $12.127$ \\
			& 200  & $0.934$ & $11.263$ & $18.888$ & $0.886$ & $9.422$ & $15.027$ & $0.809$ & $7.324$ & $12.756$ \\
			& 500  & $0.920$ & $11.281$ & $18.128$ & $0.891$ & $9.457$ & $15.059$ & $0.804$ & $7.347$ & $12.544$ \\
			& 1000 & $0.928$ & $11.221$ & $18.426$ & $0.888$ & $9.395$ & $14.828$ & $0.795$ & $7.299$ & $12.433$ \\ \hline
			\multicolumn{11}{|c|}{Parametric bootstrap intervals based on normality} \\ \hline
			\multirow{5}{*}{200}  & 50   & $0.901$ & $12.147$ & $19.992$ & $0.873$ & $10.194$ & $16.073$ & $0.796$ & $7.943$ & $13.071$ \\
			& 100  & $0.907$ & $12.304$ & $19.677$ & $0.878$ & $10.326$ & $15.899$ & $0.809$ & $8.045$ & $12.788$ \\
			& 200  & $0.904$ & $12.012$ & $19.824$ & $0.876$ & $10.081$ & $15.926$ & $0.820$ & $7.854$ & $12.677$ \\
			& 500  & $0.915$ & $12.088$ & $19.296$ & $0.896$ & $10.145$ & $15.414$ & $0.824$ & $7.904$ & $12.334$ \\
			& 1000 & $0.919$ & $12.365$ & $20.463$ & $0.890$ & $10.377$ & $16.048$ & $0.813$ & $8.085$ & $13.031$ \\
			&&&&&&&&&& \\
			\multirow{5}{*}{500}  & 50   & $0.928$ & $11.518$ & $18.289$ & $0.887$ & $9.666$ & $15.434$ & $0.800$ & $7.531$ & $13.180$ \\
			& 100  & $0.927$ & $11.463$ & $18.825$ & $0.890$ & $9.620$ & $15.334$ & $0.819$ & $7.495$ & $12.654$ \\
			& 200  & $0.927$ & $11.582$ & $18.808$ & $0.883$ & $9.720$ & $15.619$ & $0.814$ & $7.573$ & $13.063$ \\
			& 500  & $0.930$ & $11.553$ & $19.663$ & $0.881$ & $9.696$ & $15.970$ & $0.799$ & $7.554$ & $13.382$ \\
			& 1000 & $0.924$ & $11.501$ & $18.687$ & $0.877$ & $9.652$ & $15.431$ & $0.785$ & $7.520$ & $13.158$ \\
			&&&&&&&&&& \\
			\multirow{5}{*}{1000} & 50   & $0.953$ & $11.535$ & $15.854$ & $0.915$ & $9.681$ & $13.768$ & $0.826$ & $7.542$ & $12.143$ \\
			& 100  & $0.953$ & $11.426$ & $15.294$ & $0.915$ & $9.589$ & $13.457$ & $0.810$ & $7.471$ & $11.985$ \\
			& 200  & $0.942$ & $11.349$ & $16.814$ & $0.909$ & $9.524$ & $14.387$ & $0.809$ & $7.421$ & $12.580$ \\
			& 500  & $0.941$ & $11.380$ & $16.037$ & $0.901$ & $9.550$ & $14.190$ & $0.809$ & $7.441$ & $12.428$ \\
			& 1000 & $0.944$ & $11.310$ & $16.318$ & $0.906$ & $9.492$ & $13.903$ & $0.808$ & $7.395$ & $12.276$ \\ \hline
		\end{tabular}%
	}
\end{table}

First of all, we compute the empirical coverage, average width, and interval score for nonparametric bootstrap intervals using quantiles and parametric bootstrap intervals based on normality for $\delta_1^0$ and $\delta_2^0$. As shown in Tables~\ref{1:ta3a} to~\ref{1:ta4a}, neither of the two types of bootstrap intervals can provide the desired result as the empirical coverage probabilities are consistently lower than the nominal probabilities for each interval, especially when $T$ is small. While the ``blessing of dimensionality'' may improve the empirical coverage of both intervals on $\delta_1$ and $\delta_2$ for large $N$, the results are not as good for the (standardized) mean statistic. They consistently underestimated empirical coverage probabilities mainly due to the skewness of the sampling distribution of $\widehat{\delta}_i(k)$, especially for a relatively small $T$. In general, the parametric bootstrap interval based on normality, which is symmetric, and the nonparametric bootstrap interval using quantiles, which is reversely skewed, perform well when the sampling distributions are symmetric but do not perform well when the sample statistic follows a skewed distribution \citep[see,][for discussions]{hall_theoretical_1988}. To consider this skewness, an unreversed nonparametric bootstrap interval using quantiles, computed as 
\begin{align*}
	\left( \theta^\ast_{(\alpha/2)},\ \theta^\ast_{(1-\alpha/2)} \right),
\end{align*} 
can also be computed and compared since the skewness of sample statistics is retained by the bootstrap estimates. It is noteworthy that, unlike the aforementioned two bootstrap intervals which follow the idea of using the bootstrap distribution of $(\theta^\ast - \widehat{\theta})$ to approximate the distribution of $(\widehat{\theta} - \theta)$, this unreversed nonparametric bootstrap interval is constructed based on the idea of using the bootstrap distribution of $\widehat{\theta}$ to create a confidence interval of $\theta$ directly. Therefore, this interval has its tails unreversed and hence is more appropriate for asymmetric distributions.  

\begin{table}[]
	\centering
	\caption{Empirical coverage, average width, and interval score of bootstrap intervals for $\delta_2^0$ with $\nu=1$.}
	\label{1:ta4a}
	\resizebox{\textwidth}{!}{%
		\begin{tabular}{|ccccccccccc|}
			\hline
			&  & \multicolumn{3}{c}{95\%} & \multicolumn{3}{c}{90\%} & \multicolumn{3}{c|}{80\%} \\ \hline
			T & N & \begin{tabular}[c]{@{}c@{}}Empirical\\  coverage\end{tabular} & \begin{tabular}[c]{@{}c@{}}Average\\ width\end{tabular} & \begin{tabular}[c]{@{}c@{}}Average\\  interval score\end{tabular} & \begin{tabular}[c]{@{}c@{}}Empirical\\  coverage\end{tabular} & \begin{tabular}[c]{@{}c@{}}Average\\ width\end{tabular} & \begin{tabular}[c]{@{}c@{}}Average\\  interval score\end{tabular} & \begin{tabular}[c]{@{}c@{}}Empirical\\  coverage\end{tabular} & \begin{tabular}[c]{@{}c@{}}Average\\ width\end{tabular} & \begin{tabular}[c]{@{}c@{}}Average\\  interval score\end{tabular} \\ \hline
			\multicolumn{11}{|c|}{Nonparametric bootstrap intervals using quantiles} \\ \hline
			\multirow{5}{*}{200}  & 50   & $0.820$ & $2.264$ & $6.900$ & $0.753$ & $1.876$ & $5.059$ & $0.634$ & $1.442$ & $4.259$ \\
			& 100  & $0.795$ & $2.225$ & $8.224$ & $0.748$ & $1.838$ & $5.988$ & $0.649$ & $1.415$ & $4.609$ \\
			& 200  & $0.807$ & $2.176$ & $7.376$ & $0.764$ & $1.801$ & $5.334$ & $0.660$ & $1.387$ & $4.147$ \\
			& 500  & $0.809$ & $2.185$ & $7.212$ & $0.761$ & $1.810$ & $5.172$ & $0.646$ & $1.393$ & $4.127$ \\
			& 1000 & $0.816$ & $2.185$ & $7.343$ & $0.761$ & $1.809$ & $5.224$ & $0.655$ & $1.391$ & $4.185$ \\
			&&&&&&&&&& \\
			\multirow{5}{*}{500}  & 50   & $0.894$ & $2.614$ & $5.478$ & $0.846$ & $2.184$ & $4.263$ & $0.731$ & $1.691$ & $3.682$ \\
			& 100  & $0.897$ & $2.550$ & $5.205$ & $0.844$ & $2.130$ & $4.062$ & $0.746$ & $1.652$ & $3.420$ \\
			& 200  & $0.892$ & $2.576$ & $5.342$ & $0.853$ & $2.148$ & $4.154$ & $0.768$ & $1.667$ & $3.439$ \\
			& 500  & $0.898$ & $2.599$ & $5.202$ & $0.860$ & $2.167$ & $4.049$ & $0.764$ & $1.678$ & $3.402$ \\
			& 1000 & $0.894$ & $2.564$ & $5.193$ & $0.862$ & $2.139$ & $4.052$ & $0.753$ & $1.656$ & $3.360$ \\
			&&&&&&&&&& \\
			\multirow{5}{*}{1000} & 50   & $0.919$ & $2.720$ & $4.608$ & $0.879$ & $2.280$ & $3.899$ & $0.795$ & $1.772$ & $3.324$ \\
			& 100  & $0.926$ & $2.697$ & $4.502$ & $0.876$ & $2.259$ & $3.766$ & $0.768$ & $1.753$ & $3.292$ \\
			& 200  & $0.915$ & $2.672$ & $4.554$ & $0.869$ & $2.237$ & $3.775$ & $0.789$ & $1.739$ & $3.192$ \\
			& 500  & $0.928$ & $2.668$ & $4.549$ & $0.884$ & $2.237$ & $3.683$ & $0.794$ & $1.737$ & $3.156$ \\
			& 1000 & $0.919$ & $2.682$ & $4.672$ & $0.868$ & $2.248$ & $3.869$ & $0.762$ & $1.749$ & $3.362$ \\ \hline
			\multicolumn{11}{|c|}{Parametric bootstrap intervals based on normality} \\ \hline
			\multirow{5}{*}{200}  & 50   & $0.857$ & $2.314$ & $6.008$ & $0.781$ & $1.942$ & $5.015$ & $0.658$ & $1.513$ & $4.216$ \\
			& 100  & $0.835$ & $2.271$ & $7.695$ & $0.783$ & $1.906$ & $5.851$ & $0.660$ & $1.485$ & $4.534$ \\
			& 200  & $0.833$ & $2.224$ & $6.523$ & $0.789$ & $1.867$ & $5.131$ & $0.677$ & $1.454$ & $4.102$ \\
			& 500  & $0.841$ & $2.235$ & $6.175$ & $0.778$ & $1.875$ & $4.979$ & $0.663$ & $1.461$ & $4.092$ \\
			& 1000 & $0.837$ & $2.233$ & $6.293$ & $0.768$ & $1.874$ & $5.109$ & $0.680$ & $1.460$ & $4.125$ \\
			&&&&&&&&&& \\
			\multirow{5}{*}{500}  & 50   & $0.905$ & $2.644$ & $4.733$ & $0.845$ & $2.219$ & $4.177$ & $0.743$ & $1.729$ & $3.656$ \\
			& 100  & $0.917$ & $2.579$ & $4.528$ & $0.868$ & $2.164$ & $3.892$ & $0.761$ & $1.686$ & $3.382$ \\
			& 200  & $0.914$ & $2.606$ & $4.706$ & $0.868$ & $2.187$ & $4.027$ & $0.780$ & $1.704$ & $3.408$ \\
			& 500  & $0.920$ & $2.628$ & $4.551$ & $0.868$ & $2.206$ & $3.896$ & $0.769$ & $1.719$ & $3.390$ \\
			& 1000 & $0.920$ & $2.591$ & $4.597$ & $0.867$ & $2.175$ & $3.867$ & $0.765$ & $1.694$ & $3.342$ \\
			&&&&&&&&&& \\
			\multirow{5}{*}{1000} & 50   & $0.936$ & $2.742$ & $4.331$ & $0.891$ & $2.301$ & $3.802$ & $0.800$ & $1.793$ & $3.305$ \\
			& 100  & $0.938$ & $2.717$ & $4.205$ & $0.893$ & $2.280$ & $3.679$ & $0.771$ & $1.777$ & $3.266$ \\
			& 200  & $0.927$ & $2.693$ & $4.023$ & $0.887$ & $2.260$ & $3.624$ & $0.796$ & $1.761$ & $3.177$ \\
			& 500  & $0.941$ & $2.691$ & $4.160$ & $0.892$ & $2.259$ & $3.581$ & $0.801$ & $1.760$ & $3.135$ \\
			& 1000 & $0.936$ & $2.706$ & $4.329$ & $0.881$ & $2.271$ & $3.769$ & $0.763$ & $1.769$ & $3.363$ \\ \hline
		\end{tabular}%
	}
\end{table}

As shown in Tables~\ref{1:ta3c} and~\ref{1:ta4c}, unreversed nonparametric bootstrap intervals using quantiles outperform the other two competitors for $\delta_1$ with almost all combinations of $N$ and $T$ and for $\delta_2$ with small $T$. Meanwhile, the failure of nonparametric bootstrap intervals using quantiles and parametric bootstrap intervals based on normality verifies the skewness in the distribution of $\widehat{\delta}_i(k)$. Although some bias-corrected intervals may also be constructed, for example, by double bootstrap, to improve the empirical coverage probabilities further, those methods for reducing the error of bootstrap intervals generally have significant requirements on computations and are beyond the scope of this work.
			
\begin{table}[!htbp]
	\centering
	\caption{Empirical coverage, average width and interval score of unreversed nonparametric bootstrap intervals using quantiles for $\delta_1^0$ with $\nu=1$.}
	\label{1:ta3c}
	\resizebox{\textwidth}{!}{%
				\begin{tabular}{|ccccccccccc|}
			\hline
			\multicolumn{11}{|c|}{Unreversed nonparametric bootstrap intervals using quantiles} \\ \hline
			&  & \multicolumn{3}{c}{95\%} & \multicolumn{3}{c}{90\%} & \multicolumn{3}{c|}{80\%} \\ \hline
			T & N & \begin{tabular}[c]{@{}c@{}}Empirical\\  coverage\end{tabular} & \begin{tabular}[c]{@{}c@{}}Average\\ width\end{tabular} & \begin{tabular}[c]{@{}c@{}}Average\\  interval score\end{tabular} & \begin{tabular}[c]{@{}c@{}}Empirical\\  coverage\end{tabular} & \begin{tabular}[c]{@{}c@{}}Average\\ width\end{tabular} & \begin{tabular}[c]{@{}c@{}}Average\\  interval score\end{tabular} & \begin{tabular}[c]{@{}c@{}}Empirical\\  coverage\end{tabular} & \begin{tabular}[c]{@{}c@{}}Average\\ width\end{tabular} & \begin{tabular}[c]{@{}c@{}}Average\\  interval score\end{tabular} \\ \hline
			\multirow{5}{*}{200}  & 50   & $0.956$ & $11.881$ & $15.015$ & $0.913$ & $9.775$ & $13.388$ & $0.815$ & $7.470$ & $11.862$ \\
			& 100  & $0.959$ & $11.999$ & $14.554$ & $0.909$ & $9.895$ & $12.973$ & $0.813$ & $7.587$ & $11.561$ \\
			& 200  & $0.959$ & $11.732$ & $14.986$ & $0.909$ & $9.676$ & $13.065$ & $0.823$ & $7.390$ & $11.500$ \\
			& 500  & $0.960$ & $11.805$ & $14.519$ & $0.912$ & $9.730$ & $12.925$ & $0.844$ & $7.444$ & $11.144$ \\
			& 1000 & $0.954$ & $12.077$ & $15.885$ & $0.914$ & $9.967$ & $13.801$ & $0.820$ & $7.623$ & $11.833$ \\
			&&&&&&&&&& \\
			\multirow{5}{*}{500}  & 50   & $0.947$ & $11.377$ & $15.161$ & $0.901$ & $9.481$ & $13.763$ & $0.793$ & $7.347$ & $12.370$ \\
			& 100  & $0.951$ & $11.326$ & $14.969$ & $0.900$ & $9.441$ & $13.533$ & $0.818$ & $7.317$ & $11.918$ \\
			& 200  & $0.947$ & $11.444$ & $15.947$ & $0.904$ & $9.541$ & $14.044$ & $0.810$ & $7.391$ & $12.305$ \\
			& 500  & $0.941$ & $11.426$ & $16.354$ & $0.901$ & $9.521$ & $14.474$ & $0.792$ & $7.366$ & $12.665$ \\
			& 1000 & $0.946$ & $11.357$ & $15.300$ & $0.896$ & $9.478$ & $13.906$ & $0.775$ & $7.339$ & $12.620$ \\
			&&&&&&&&&& \\
			\multirow{5}{*}{1000} & 50   & $0.955$ & $11.440$ & $14.654$ & $0.910$ & $9.582$ & $13.399$ & $0.818$ & $7.446$ & $12.072$ \\
			& 100  & $0.958$ & $11.322$ & $14.361$ & $0.914$ & $9.490$ & $13.157$ & $0.810$ & $7.372$ & $11.781$ \\
			& 200  & $0.944$ & $11.263$ & $15.158$ & $0.906$ & $9.422$ & $13.922$ & $0.811$ & $7.324$ & $12.264$ \\
			& 500  & $0.951$ & $11.281$ & $14.943$ & $0.901$ & $9.457$ & $13.633$ & $0.811$ & $7.347$ & $12.262$ \\
			& 1000 & $0.957$ & $11.221$ & $14.608$ & $0.905$ & $9.395$ & $13.326$ & $0.817$ & $7.299$ & $12.017$ \\ \hline
		\end{tabular}%
	}
\end{table}

\begin{table}[!htbp]
	\centering
	\caption{Empirical coverage, average width, and interval score of unreversed nonparametric bootstrap intervals using quantiles for $\delta_2^0$ with $\nu=1$.}
	\label{1:ta4c}
	\resizebox{\textwidth}{!}{%
		\begin{tabular}{|ccccccccccc|}
			\hline
			\multicolumn{11}{|c|}{Unreversed nonparametric bootstrap intervals using quantiles}                                                                              \\ \hline
			&  & \multicolumn{3}{c}{95\%} & \multicolumn{3}{c}{90\%} & \multicolumn{3}{c|}{80\%} \\ \hline
			T & N & \begin{tabular}[c]{@{}c@{}}Empirical\\  coverage\end{tabular} & \begin{tabular}[c]{@{}c@{}}Average\\ width\end{tabular} & \begin{tabular}[c]{@{}c@{}}Average\\  interval score\end{tabular} & \begin{tabular}[c]{@{}c@{}}Empirical\\  coverage\end{tabular} & \begin{tabular}[c]{@{}c@{}}Average\\ width\end{tabular} & \begin{tabular}[c]{@{}c@{}}Average\\  interval score\end{tabular} & \begin{tabular}[c]{@{}c@{}}Empirical\\  coverage\end{tabular} & \begin{tabular}[c]{@{}c@{}}Average\\ width\end{tabular} & \begin{tabular}[c]{@{}c@{}}Average\\  interval score\end{tabular} \\ \hline
			\multirow{5}{*}{200}  & 50   & $0.861$ & $2.264$ & $5.674$ & $0.786$ & $1.876$ & $4.698$ & $0.675$ & $1.442$ & $3.843$ \\
			& 100  & $0.846$ & $2.225$ & $6.225$ & $0.769$ & $1.838$ & $5.036$ & $0.649$ & $1.415$ & $4.077$ \\
			& 200  & $0.848$ & $2.176$ & $6.303$ & $0.783$ & $1.801$ & $4.997$ & $0.667$ & $1.387$ & $3.959$ \\
			& 500  & $0.851$ & $2.185$ & $6.057$ & $0.776$ & $1.810$ & $4.935$ & $0.651$ & $1.393$ & $4.008$ \\
			& 1000 & $0.848$ & $2.185$ & $6.018$ & $0.780$ & $1.809$ & $4.899$ & $0.652$ & $1.391$ & $3.965$ \\
			&&&&&&&&&& \\
			\multirow{5}{*}{500}  & 50   & $0.908$ & $2.614$ & $4.757$ & $0.861$ & $2.184$ & $4.092$ & $0.757$ & $1.691$ & $3.467$ \\
			& 100  & $0.919$ & $2.550$ & $4.468$ & $0.854$ & $2.130$ & $3.880$ & $0.751$ & $1.652$ & $3.382$ \\
			& 200  & $0.917$ & $2.576$ & $4.606$ & $0.874$ & $2.148$ & $3.928$ & $0.778$ & $1.667$ & $3.300$ \\
			& 500  & $0.923$ & $2.599$ & $4.329$ & $0.870$ & $2.167$ & $3.799$ & $0.777$ & $1.678$ & $3.250$ \\
			& 1000 & $0.915$ & $2.564$ & $4.479$ & $0.867$ & $2.139$ & $3.891$ & $0.758$ & $1.656$ & $3.320$ \\
			&&&&&&&&&& \\
			\multirow{5}{*}{1000} & 50   & $0.938$ & $2.720$ & $4.155$ & $0.887$ & $2.280$ & $3.698$ & $0.798$ & $1.772$ & $3.267$ \\
			& 100  & $0.938$ & $2.697$ & $4.219$ & $0.879$ & $2.259$ & $3.706$ & $0.777$ & $1.753$ & $3.272$ \\
			& 200  & $0.929$ & $2.672$ & $4.115$ & $0.872$ & $2.237$ & $3.709$ & $0.780$ & $1.739$ & $3.240$ \\
			& 500  & $0.934$ & $2.668$ & $4.295$ & $0.886$ & $2.237$ & $3.715$ & $0.802$ & $1.737$ & $3.185$ \\
			& 1000 & $0.934$ & $2.682$ & $4.280$ & $0.873$ & $2.248$ & $3.787$ & $0.767$ & $1.749$ & $3.342$ \\ \hline
		\end{tabular}%
	}
\end{table}

\newpage
\color{darkblue}
\subsection{Comparison with moving block bootstrap}

To demonstrate the necessity of dimension reduction for high-dimensional inference and to evaluate the ``curse of dimensionality'' on ordinary resampling methods, we compare the proposed AR-sieve bootstrap with the standard moving block bootstrap (MBB). We consider block sizes of $l=5$ and $l=10$ to assess sensitivity to the block length.

We utilize the same strong factor DGP ($\nu=1$) as described in the simulation studies of the main text. We evaluate the performance of MBB on the statistics that have been studied in the main text, the (standardized) mean statistic ($\theta_y$) and the first two eigenvalues ($\delta_1, \delta_2$) of the symmetrized lag-1 autocovariance matrix.

As shown in Table~\ref{1:tambbmean}, the performance of the MBB for the mean statistic is close to that of the proposed AR-sieve bootstrap method in terms of empirical coverage. The coverage remains stable and close to the nominal level even as the dimension $N$ increases. This result is expected, as the cross-sectional averaging operation inherent in the mean statistic effectively mitigates the impact of high-dimensional noise (via the Law of Large Numbers), allowing standard methods to perform adequately for first-order moments.

A striking contrast is observed for the eigenvalues. Tables~\ref{1:tambb51}~to~\ref{1:tambb102} reveal that the performance of the MBB deteriorates rapidly as the dimension $N$ increases, regardless of the sample size $T$. As $N$ grows to 500 or 1000, the empirical coverage for the eigenvalues collapses. This failure occurs because the MBB resamples the full $N$-dimensional residual vector, thereby accumulating the noise from all $N$ idiosyncratic components. Unlike the proposed method, which filters out this noise via factor modeling, the MBB allows the accumulated noise to distort the covariance structure, leading to biased bootstrap estimates of the eigenvalues. The results are consistent across block sizes ($l=5$ and $l=10$), indicating that this is a fundamental limitation of ordinary resampling methods in high-dimensional settings.

This comparison highlights that while ordinary bootstrap methods like MBB may suffice for simple linear statistics (such as means), they suffer severely from the ``curse of dimensionality'' when inferring complex dependence structures (such as eigenvalues). Our proposed AR-sieve bootstrap, by explicitly reducing dimensionality, effectively filters out idiosyncratic noise and provides consistent estimation for high-dimensional second-order statistics, leading to robust empirical coverage in finite samples.

\begin{table}[!htbp]
	\centering
	\caption{Empirical coverage, average width, and interval score of bootstrap intervals constructed using moving block bootstrap (block size = 5) for $\theta_y$ with $\nu=1$.}
	\label{1:tambbmean}
	\resizebox{\textwidth}{!}{%
		\begin{tabular}{|ccccccccccc|}
			\hline
             &  & \multicolumn{3}{c}{95\%} & \multicolumn{3}{c}{90\%} & \multicolumn{3}{c|}{80\%} \\ \hline
            T & N & \begin{tabular}[c]{@{}c@{}}Empirical\\ coverage\end{tabular} & \begin{tabular}[c]{@{}c@{}}Average\\ width\end{tabular} & \begin{tabular}[c]{@{}c@{}}Average\\ interval score\end{tabular} & \begin{tabular}[c]{@{}c@{}}Empirical\\ coverage\end{tabular} & \begin{tabular}[c]{@{}c@{}}Average\\ width\end{tabular} & \begin{tabular}[c]{@{}c@{}}Average\\ interval score\end{tabular} & \begin{tabular}[c]{@{}c@{}}Empirical\\ coverage\end{tabular} & \begin{tabular}[c]{@{}c@{}}Average\\ width\end{tabular} & \begin{tabular}[c]{@{}c@{}}Average\\ interval score\end{tabular} \\ \hline
            \multicolumn{11}{|c|}{Nonparametric bootstrap intervals using quantiles} \\ \hline
            \multirow{5}{*}{200} & 50 & 0.908 & 8.685 & 15.350 & 0.861 & 7.317 & 13.200 & 0.760 & 5.726 & 11.258 \\
             & 100 & 0.922 & 8.981 & 15.134 & 0.869 & 7.564 & 13.405 & 0.780 & 5.901 & 11.600 \\
             & 200 & 0.926 & 9.477 & 15.023 & 0.882 & 7.971 & 13.535 & 0.782 & 6.227 & 11.864 \\
             & 500 & 0.942 & 9.253 & 15.009 & 0.889 & 7.788 & 13.115 & 0.798 & 6.069 & 11.369 \\
             & 1000 & 0.938 & 9.429 & 15.593 & 0.891 & 7.924 & 13.460 & 0.779 & 6.186 & 11.797 \\
             &  &  &  &  &  &  &  &  &  &  \\
            \multirow{5}{*}{500} & 50 & 0.921 & 8.777 & 15.049 & 0.860 & 7.398 & 13.251 & 0.767 & 5.781 & 11.465 \\
             & 100 & 0.910 & 9.201 & 16.247 & 0.846 & 7.750 & 14.583 & 0.749 & 6.045 & 12.554 \\
             & 200 & 0.931 & 9.462 & 14.431 & 0.876 & 7.952 & 13.007 & 0.770 & 6.206 & 11.427 \\
             & 500 & 0.943 & 9.468 & 13.434 & 0.896 & 7.963 & 12.342 & 0.807 & 6.211 & 10.946 \\
             & 1000 & 0.945 & 9.199 & 13.942 & 0.884 & 7.736 & 12.669 & 0.795 & 6.028 & 11.347 \\
             &  &  &  &  &  &  &  &  &  &  \\
            \multirow{5}{*}{1000} & 50 & 0.899 & 8.674 & 16.941 & 0.838 & 7.315 & 14.473 & 0.749 & 5.718 & 12.166 \\
             & 100 & 0.935 & 9.060 & 14.196 & 0.884 & 7.612 & 12.705 & 0.774 & 5.942 & 11.256 \\
             & 200 & 0.934 & 9.238 & 14.840 & 0.887 & 7.771 & 13.182 & 0.793 & 6.060 & 11.413 \\
             & 500 & 0.938 & 9.398 & 15.186 & 0.884 & 7.906 & 13.260 & 0.775 & 6.162 & 11.644 \\
             & 1000 & 0.939 & 9.466 & 15.228 & 0.896 & 7.963 & 13.306 & 0.802 & 6.214 & 11.359 \\ \hline
            \multicolumn{11}{|c|}{Parametric bootstrap intervals based on normality} \\ \hline
            \multirow{5}{*}{200} & 50 & 0.909 & 8.746 & 15.022 & 0.860 & 7.340 & 13.151 & 0.760 & 5.719 & 11.227 \\
             & 100 & 0.922 & 9.038 & 14.934 & 0.872 & 7.585 & 13.386 & 0.780 & 5.910 & 11.589 \\
             & 200 & 0.930 & 9.533 & 15.004 & 0.885 & 8.001 & 13.482 & 0.782 & 6.233 & 11.824 \\
             & 500 & 0.948 & 9.298 & 15.027 & 0.891 & 7.803 & 13.007 & 0.795 & 6.079 & 11.366 \\
             & 1000 & 0.938 & 9.472 & 15.425 & 0.892 & 7.949 & 13.419 & 0.777 & 6.193 & 11.768 \\
             &  &  &  &  &  &  &  &  &  &  \\
            \multirow{5}{*}{500} & 50 & 0.921 & 8.835 & 14.781 & 0.862 & 7.415 & 13.115 & 0.764 & 5.777 & 11.469 \\
             & 100 & 0.912 & 9.253 & 16.216 & 0.847 & 7.765 & 14.528 & 0.748 & 6.050 & 12.549 \\
             & 200 & 0.932 & 9.505 & 14.345 & 0.883 & 7.977 & 12.933 & 0.771 & 6.215 & 11.422 \\
             & 500 & 0.942 & 9.515 & 13.446 & 0.897 & 7.985 & 12.290 & 0.808 & 6.221 & 10.927 \\
             & 1000 & 0.948 & 9.241 & 13.947 & 0.882 & 7.755 & 12.667 & 0.795 & 6.042 & 11.315 \\
             &  &  &  &  &  &  &  &  &  &  \\
            \multirow{5}{*}{1000} & 50 & 0.901 & 8.741 & 16.752 & 0.839 & 7.336 & 14.372 & 0.750 & 5.715 & 12.121 \\
             & 100 & 0.936 & 9.094 & 14.134 & 0.882 & 7.632 & 12.717 & 0.774 & 5.947 & 11.222 \\
             & 200 & 0.937 & 9.281 & 14.908 & 0.888 & 7.789 & 13.175 & 0.792 & 6.069 & 11.399 \\
             & 500 & 0.944 & 9.439 & 15.058 & 0.889 & 7.922 & 13.232 & 0.771 & 6.172 & 11.584 \\
             & 1000 & 0.941 & 9.511 & 15.081 & 0.897 & 7.982 & 13.242 & 0.798 & 6.219 & 11.327 \\ \hline
		\end{tabular}%
	}
\end{table}

\begin{table}[!htbp]
	\centering
	\caption{Empirical coverage, average width, and interval score of bootstrap intervals constructed using moving block bootstrap (block size = 5) for $\delta_1^0$ with $\nu=1$.}
	\label{1:tambb51}
	\resizebox{\textwidth}{!}{%
		\begin{tabular}{|ccccccccccc|}
			\hline
             &  & \multicolumn{3}{c}{95\%} & \multicolumn{3}{c}{90\%} & \multicolumn{3}{c|}{80\%} \\ \hline
            T & N & \begin{tabular}[c]{@{}c@{}}Empirical\\ coverage\end{tabular} & \begin{tabular}[c]{@{}c@{}}Average\\ width\end{tabular} & \begin{tabular}[c]{@{}c@{}}Average\\ interval score\end{tabular} & \begin{tabular}[c]{@{}c@{}}Empirical\\ coverage\end{tabular} & \begin{tabular}[c]{@{}c@{}}Average\\ width\end{tabular} & \begin{tabular}[c]{@{}c@{}}Average\\ interval score\end{tabular} & \begin{tabular}[c]{@{}c@{}}Empirical\\ coverage\end{tabular} & \begin{tabular}[c]{@{}c@{}}Average\\ width\end{tabular} & \begin{tabular}[c]{@{}c@{}}Average\\ interval score\end{tabular} \\ \hline
            \multicolumn{11}{|c|}{Nonparametric bootstrap intervals using quantiles} \\ \hline
            \multirow{5}{*}{200} & 50 & 0.803 & 10.026 & 34.194 & 0.762 & 8.370 & 23.338 & 0.692 & 6.484 & 16.871 \\
             & 100 & 0.744 & 7.127 & 36.156 & 0.685 & 5.958 & 24.154 & 0.582 & 4.623 & 17.582 \\
             & 200 & 0.633 & 5.176 & 47.719 & 0.559 & 4.337 & 31.860 & 0.450 & 3.373 & 21.903 \\
             & 500 & 0.449 & 3.386 & 75.632 & 0.392 & 2.837 & 45.990 & 0.312 & 2.210 & 28.189 \\
             & 1000 & 0.323 & 2.369 & 104.476 & 0.267 & 1.990 & 58.739 & 0.207 & 1.551 & 33.488 \\
             &  &  &  &  &  &  &  &  &  &  \\
            \multirow{5}{*}{500} & 50 & 0.901 & 14.808 & 27.284 & 0.876 & 12.373 & 20.386 & 0.807 & 9.591 & 15.669 \\
             & 100 & 0.864 & 10.773 & 25.346 & 0.819 & 8.998 & 18.959 & 0.740 & 6.987 & 14.739 \\
             & 200 & 0.819 & 7.856 & 29.113 & 0.755 & 6.583 & 20.937 & 0.636 & 5.118 & 16.090 \\
             & 500 & 0.617 & 5.094 & 52.659 & 0.522 & 4.279 & 35.082 & 0.409 & 3.334 & 23.988 \\
             & 1000 & 0.472 & 3.572 & 75.805 & 0.412 & 3.000 & 45.575 & 0.325 & 2.339 & 28.008 \\
             &  &  &  &  &  &  &  &  &  &  \\
            \multirow{5}{*}{1000} & 50 & 0.953 & 20.588 & 25.175 & 0.942 & 17.185 & 20.665 & 0.902 & 13.297 & 16.276 \\
             & 100 & 0.939 & 14.986 & 20.957 & 0.906 & 12.528 & 17.026 & 0.858 & 9.717 & 13.700 \\
             & 200 & 0.908 & 11.009 & 21.171 & 0.869 & 9.222 & 16.756 & 0.773 & 7.168 & 13.635 \\
             & 500 & 0.783 & 7.075 & 33.420 & 0.709 & 5.942 & 24.229 & 0.593 & 4.629 & 18.209 \\
             & 1000 & 0.598 & 4.928 & 54.864 & 0.528 & 4.143 & 36.060 & 0.422 & 3.232 & 24.029 \\ \hline
            \multicolumn{11}{|c|}{Parametric bootstrap intervals based on normality} \\ \hline
            \multirow{5}{*}{200} & 50 & 0.827 & 10.232 & 29.479 & 0.781 & 8.587 & 22.042 & 0.706 & 6.690 & 16.654 \\
             & 100 & 0.764 & 7.215 & 32.501 & 0.690 & 6.055 & 23.390 & 0.588 & 4.718 & 17.396 \\
             & 200 & 0.643 & 5.218 & 48.022 & 0.559 & 4.379 & 32.135 & 0.451 & 3.412 & 21.958 \\
             & 500 & 0.445 & 3.400 & 77.490 & 0.396 & 2.854 & 46.504 & 0.304 & 2.223 & 28.260 \\
             & 1000 & 0.326 & 2.381 & 105.473 & 0.265 & 1.998 & 59.071 & 0.207 & 1.557 & 33.535 \\
             &  &  &  &  &  &  &  &  &  &  \\
            \multirow{5}{*}{500} & 50 & 0.936 & 15.135 & 23.033 & 0.897 & 12.702 & 18.888 & 0.824 & 9.896 & 15.308 \\
             & 100 & 0.896 & 10.906 & 21.579 & 0.848 & 9.152 & 17.354 & 0.763 & 7.131 & 14.367 \\
             & 200 & 0.829 & 7.924 & 26.968 & 0.761 & 6.650 & 20.457 & 0.639 & 5.181 & 15.958 \\
             & 500 & 0.609 & 5.127 & 53.068 & 0.519 & 4.303 & 35.405 & 0.413 & 3.353 & 24.038 \\
             & 1000 & 0.475 & 3.591 & 76.217 & 0.404 & 3.013 & 45.836 & 0.324 & 2.348 & 28.047 \\
             &  &  &  &  &  &  &  &  &  &  \\
            \multirow{5}{*}{1000} & 50 & 0.973 & 21.013 & 22.837 & 0.960 & 17.635 & 19.737 & 0.902 & 13.297 & 16.276 \\
             & 100 & 0.966 & 15.155 & 18.355 & 0.938 & 12.719 & 15.745 & 0.858 & 9.717 & 13.700 \\
             & 200 & 0.920 & 11.106 & 18.789 & 0.875 & 9.321 & 15.930 & 0.773 & 7.168 & 13.635 \\
             & 500 & 0.787 & 7.120 & 32.583 & 0.713 & 5.975 & 24.118 & 0.593 & 4.629 & 18.209 \\
             & 1000 & 0.597 & 4.962 & 54.831 & 0.526 & 4.164 & 36.024 & 0.422 & 3.232 & 24.029 \\ \hline
		\end{tabular}%
	}
\end{table}

\begin{table}[!htbp]
	\centering
	\caption{Empirical coverage, average width, and interval score of bootstrap intervals constructed using moving block bootstrap (block size = 5) for $\delta_2^0$ with $\nu=1$.}
	\label{1:tambb52}
	\resizebox{\textwidth}{!}{%
		\begin{tabular}{|ccccccccccc|}
			\hline
             &  & \multicolumn{3}{c}{95\%} & \multicolumn{3}{c}{90\%} & \multicolumn{3}{c|}{80\%} \\ \hline
            T & N & \begin{tabular}[c]{@{}c@{}}Empirical\\ coverage\end{tabular} & \begin{tabular}[c]{@{}c@{}}Average\\ width\end{tabular} & \begin{tabular}[c]{@{}c@{}}Average\\ interval score\end{tabular} & \begin{tabular}[c]{@{}c@{}}Empirical\\ coverage\end{tabular} & \begin{tabular}[c]{@{}c@{}}Average\\ width\end{tabular} & \begin{tabular}[c]{@{}c@{}}Average\\ interval score\end{tabular} & \begin{tabular}[c]{@{}c@{}}Empirical\\ coverage\end{tabular} & \begin{tabular}[c]{@{}c@{}}Average\\ width\end{tabular} & \begin{tabular}[c]{@{}c@{}}Average\\ interval score\end{tabular} \\ \hline
            \multicolumn{11}{|c|}{Nonparametric bootstrap intervals using quantiles} \\ \hline
            \multirow{5}{*}{200} & 50 & 0.736 & 1.923 & 9.489 & 0.689 & 1.619 & 6.383 & 0.692 & 6.484 & 16.871 \\
             & 100 & 0.666 & 1.474 & 11.781 & 0.598 & 1.236 & 7.478 & 0.582 & 4.623 & 17.582 \\
             & 200 & 0.579 & 1.095 & 14.221 & 0.507 & 0.920 & 8.767 & 0.450 & 3.373 & 21.903 \\
             & 500 & 0.406 & 0.729 & 20.221 & 0.356 & 0.613 & 11.741 & 0.312 & 2.210 & 28.189 \\
             & 1000 & 0.329 & 0.526 & 23.589 & 0.265 & 0.442 & 13.151 & 0.207 & 1.551 & 33.488 \\
             &  &  &  &  &  &  &  &  &  &  \\
            \multirow{5}{*}{500} & 50 & 0.905 & 3.198 & 5.564 & 0.861 & 2.689 & 4.443 & 0.785 & 2.102 & 3.677 \\
             & 100 & 0.842 & 2.448 & 6.982 & 0.793 & 2.056 & 5.102 & 0.718 & 1.603 & 3.916 \\
             & 200 & 0.774 & 1.821 & 8.181 & 0.714 & 1.528 & 5.750 & 0.599 & 1.190 & 4.297 \\
             & 500 & 0.573 & 1.187 & 14.577 & 0.493 & 0.996 & 9.370 & 0.399 & 0.776 & 6.127 \\
             & 1000 & 0.469 & 0.842 & 19.910 & 0.403 & 0.708 & 11.705 & 0.323 & 0.551 & 7.037 \\
             &  &  &  &  &  &  &  &  &  &  \\
            \multirow{5}{*}{1000} & 50 & 0.980 & 4.598 & 5.069 & 0.962 & 3.868 & 4.307 & 0.913 & 3.029 & 3.577 \\
             & 100 & 0.936 & 3.509 & 4.960 & 0.908 & 2.946 & 4.111 & 0.848 & 2.294 & 3.396 \\
             & 200 & 0.917 & 2.637 & 4.829 & 0.877 & 2.212 & 3.941 & 0.791 & 1.723 & 3.276 \\
             & 500 & 0.776 & 1.707 & 8.317 & 0.686 & 1.432 & 6.073 & 0.559 & 1.116 & 4.581 \\
             & 1000 & 0.589 & 1.204 & 14.106 & 0.523 & 1.010 & 9.067 & 0.420 & 0.788 & 5.987 \\ \hline
            \multicolumn{11}{|c|}{Parametric bootstrap intervals based on normality} \\ \hline
            \multirow{5}{*}{200} & 50 & 0.769 & 1.960 & 8.471 & 0.781 & 8.587 & 22.042 & 0.706 & 6.690 & 16.654 \\
             & 100 & 0.674 & 1.492 & 10.769 & 0.690 & 6.055 & 23.390 & 0.588 & 4.718 & 17.396 \\
             & 200 & 0.586 & 1.105 & 13.797 & 0.559 & 4.379 & 32.135 & 0.451 & 3.412 & 21.958 \\
             & 500 & 0.416 & 0.734 & 20.209 & 0.396 & 2.854 & 46.504 & 0.304 & 2.223 & 28.260 \\
             & 1000 & 0.322 & 0.529 & 23.589 & 0.265 & 1.998 & 59.071 & 0.207 & 1.557 & 33.535 \\
             &  &  &  &  &  &  &  &  &  &  \\
            \multirow{5}{*}{500} & 50 & 0.922 & 3.263 & 4.885 & 0.882 & 2.738 & 4.231 & 0.789 & 2.133 & 3.641 \\
             & 100 & 0.867 & 2.482 & 6.011 & 0.811 & 2.083 & 4.776 & 0.718 & 1.623 & 3.872 \\
             & 200 & 0.799 & 1.839 & 7.583 & 0.720 & 1.543 & 5.593 & 0.604 & 1.202 & 4.276 \\
             & 500 & 0.570 & 1.194 & 14.429 & 0.502 & 1.002 & 9.332 & 0.398 & 0.781 & 6.121 \\
             & 1000 & 0.473 & 0.847 & 19.846 & 0.403 & 0.710 & 11.697 & 0.323 & 0.554 & 7.028 \\
             &  &  &  &  &  &  &  &  &  &  \\
            \multirow{5}{*}{1000} & 50 & 0.986 & 4.697 & 4.969 & 0.971 & 3.942 & 4.327 & 0.918 & 3.071 & 3.614 \\
             & 100 & 0.960 & 3.557 & 4.448 & 0.925 & 2.985 & 3.905 & 0.854 & 2.326 & 3.370 \\
             & 200 & 0.925 & 2.661 & 4.572 & 0.892 & 2.233 & 3.882 & 0.786 & 1.740 & 3.272 \\
             & 500 & 0.777 & 1.716 & 8.176 & 0.693 & 1.440 & 6.044 & 0.560 & 1.122 & 4.567 \\
             & 1000 & 0.590 & 1.210 & 13.914 & 0.520 & 1.015 & 9.038 & 0.419 & 0.791 & 5.974 \\ \hline
		\end{tabular}%
	}
\end{table}

\begin{table}[!htbp]
	\centering
	\caption{Empirical coverage, average width, and interval score of bootstrap intervals constructed using moving block bootstrap (block size = 10) for $\delta_1^0$ with $\nu=1$.}
	\label{1:tambb101}
	\resizebox{\textwidth}{!}{%
		\begin{tabular}{|ccccccccccc|}
			\hline
             &  & \multicolumn{3}{c}{95\%} & \multicolumn{3}{c}{90\%} & \multicolumn{3}{c|}{80\%} \\ \hline
            T & N & \begin{tabular}[c]{@{}c@{}}Empirical\\ coverage\end{tabular} & \begin{tabular}[c]{@{}c@{}}Average\\ width\end{tabular} & \begin{tabular}[c]{@{}c@{}}Average\\ interval score\end{tabular} & \begin{tabular}[c]{@{}c@{}}Empirical\\ coverage\end{tabular} & \begin{tabular}[c]{@{}c@{}}Average\\ width\end{tabular} & \begin{tabular}[c]{@{}c@{}}Average\\ interval score\end{tabular} & \begin{tabular}[c]{@{}c@{}}Empirical\\ coverage\end{tabular} & \begin{tabular}[c]{@{}c@{}}Average\\ width\end{tabular} & \begin{tabular}[c]{@{}c@{}}Average\\ interval score\end{tabular} \\ \hline
            \multicolumn{11}{|c|}{Nonparametric bootstrap intervals using quantiles} \\ \hline
            \multirow{5}{*}{200} & 50 & 0.758 & 8.991 & 40.054 & 0.716 & 7.573 & 27.102 & 0.635 & 5.916 & 19.319 \\
             & 100 & 0.708 & 6.813 & 39.818 & 0.652 & 5.716 & 26.607 & 0.564 & 4.448 & 18.739 \\
             & 200 & 0.619 & 5.056 & 50.016 & 0.546 & 4.247 & 32.785 & 0.443 & 3.302 & 22.262 \\
             & 500 & 0.444 & 3.352 & 76.877 & 0.384 & 2.815 & 46.431 & 0.312 & 2.194 & 28.342 \\
             & 1000 & 0.323 & 2.357 & 104.899 & 0.265 & 1.979 & 58.936 & 0.207 & 1.541 & 33.563 \\
             &  &  &  &  &  &  &  &  &  &  \\
            \multirow{5}{*}{500} & 50 & 0.871 & 13.313 & 29.885 & 0.829 & 11.211 & 22.310 & 0.764 & 8.912 & 16.852 \\
             & 100 & 0.845 & 10.289 & 27.427 & 0.801 & 8.630 & 20.250 & 0.715 & 6.827 & 15.519 \\
             & 200 & 0.809 & 7.656 & 30.666 & 0.740 & 6.417 & 21.963 & 0.636 & 5.049 & 16.497 \\
             & 500 & 0.615 & 5.061 & 53.256 & 0.517 & 4.252 & 35.505 & 0.409 & 3.332 & 24.159 \\
             & 1000 & 0.472 & 3.555 & 76.310 & 0.407 & 2.989 & 45.858 & 0.326 & 2.338 & 28.121 \\
             &  &  &  &  &  &  &  &  &  &  \\
            \multirow{5}{*}{1000} & 50 & 0.937 & 18.445 & 25.485 & 0.914 & 15.550 & 20.922 & 0.839 & 12.140 & 17.244 \\
             & 100 & 0.920 & 14.309 & 21.712 & 0.896 & 12.003 & 17.545 & 0.828 & 9.335 & 14.119 \\
             & 200 & 0.888 & 10.734 & 22.324 & 0.849 & 9.004 & 17.501 & 0.758 & 7.000 & 14.042 \\
             & 500 & 0.780 & 7.028 & 33.991 & 0.703 & 5.897 & 24.582 & 0.587 & 4.594 & 18.427 \\
             & 1000 & 0.604 & 4.908 & 55.123 & 0.526 & 4.130 & 36.089 & 0.420 & 3.220 & 24.034 \\ \hline
            \multicolumn{11}{|c|}{Parametric bootstrap intervals based on normality} \\ \hline
            \multirow{5}{*}{200} & 50 & 0.777 & 9.201 & 36.973 & 0.724 & 7.722 & 26.432 & 0.648 & 6.016 & 19.176 \\
             & 100 & 0.730 & 6.906 & 37.009 & 0.662 & 5.796 & 25.932 & 0.563 & 4.516 & 18.658 \\
             & 200 & 0.636 & 5.105 & 50.233 & 0.549 & 4.284 & 33.119 & 0.443 & 3.338 & 22.324 \\
             & 500 & 0.442 & 3.377 & 78.460 & 0.385 & 2.834 & 46.917 & 0.310 & 2.208 & 28.419 \\
             & 1000 & 0.320 & 2.369 & 105.794 & 0.264 & 1.988 & 59.194 & 0.207 & 1.549 & 33.581 \\
             &  &  &  &  &  &  &  &  &  &  \\
            \multirow{5}{*}{500} & 50 & 0.897 & 13.630 & 26.512 & 0.850 & 11.439 & 21.042 & 0.785 & 8.756 & 14.971 \\
             & 100 & 0.879 & 10.440 & 24.187 & 0.820 & 8.762 & 19.219 & 0.744 & 6.720 & 14.314 \\
             & 200 & 0.818 & 7.721 & 29.026 & 0.745 & 6.480 & 21.454 & 0.608 & 4.991 & 16.827 \\
             & 500 & 0.611 & 5.097 & 53.892 & 0.517 & 4.277 & 35.753 & 0.409 & 3.314 & 24.946 \\
             & 1000 & 0.470 & 3.576 & 76.612 & 0.402 & 3.001 & 46.021 & 0.320 & 2.329 & 28.394 \\
             &  &  &  &  &  &  &  &  &  &  \\
            \multirow{5}{*}{1000} & 50 & 0.956 & 18.913 & 23.411 & 0.926 & 15.873 & 20.344 & 0.852 & 12.367 & 17.040 \\
             & 100 & 0.953 & 14.504 & 18.737 & 0.914 & 12.172 & 16.279 & 0.841 & 9.484 & 13.784 \\
             & 200 & 0.910 & 10.827 & 19.839 & 0.864 & 9.086 & 16.632 & 0.764 & 7.079 & 13.859 \\
             & 500 & 0.782 & 7.069 & 33.052 & 0.703 & 5.932 & 24.396 & 0.589 & 4.622 & 18.375 \\
             & 1000 & 0.606 & 4.943 & 54.850 & 0.524 & 4.149 & 36.031 & 0.421 & 3.232 & 23.984 \\ \hline
		\end{tabular}%
	}
\end{table}

\begin{table}[!htbp]
	\centering
	\caption{Empirical coverage, average width, and interval score of bootstrap intervals constructed using moving block bootstrap (block size = 10) for $\delta_2^0$ with $\nu=1$.}
	\label{1:tambb102}
	\resizebox{\textwidth}{!}{%
		\begin{tabular}{|ccccccccccc|}
			\hline
             &  & \multicolumn{3}{c}{95\%} & \multicolumn{3}{c}{90\%} & \multicolumn{3}{c|}{80\%} \\ \hline
            T & N & \begin{tabular}[c]{@{}c@{}}Empirical\\ coverage\end{tabular} & \begin{tabular}[c]{@{}c@{}}Average\\ width\end{tabular} & \begin{tabular}[c]{@{}c@{}}Average\\ interval score\end{tabular} & \begin{tabular}[c]{@{}c@{}}Empirical\\ coverage\end{tabular} & \begin{tabular}[c]{@{}c@{}}Average\\ width\end{tabular} & \begin{tabular}[c]{@{}c@{}}Average\\ interval score\end{tabular} & \begin{tabular}[c]{@{}c@{}}Empirical\\ coverage\end{tabular} & \begin{tabular}[c]{@{}c@{}}Average\\ width\end{tabular} & \begin{tabular}[c]{@{}c@{}}Average\\ interval score\end{tabular} \\ \hline
            \multicolumn{11}{|c|}{Nonparametric bootstrap intervals using quantiles} \\ \hline
            \multirow{5}{*}{200} & 50 & 0.692 & 1.762 & 10.860 & 0.640 & 1.498 & 7.200 & 0.555 & 1.182 & 5.020 \\
             & 100 & 0.636 & 1.403 & 12.591 & 0.562 & 1.182 & 7.957 & 0.466 & 0.924 & 5.289 \\
             & 200 & 0.571 & 1.077 & 14.764 & 0.506 & 0.906 & 8.984 & 0.423 & 0.707 & 5.689 \\
             & 500 & 0.401 & 0.722 & 20.423 & 0.350 & 0.607 & 11.825 & 0.278 & 0.473 & 6.980 \\
             & 1000 & 0.322 & 0.523 & 23.682 & 0.262 & 0.440 & 13.201 & 0.204 & 0.343 & 7.469 \\
             &  &  &  &  &  &  &  &  &  &  \\
            \multirow{5}{*}{500} & 50 & 0.869 & 2.938 & 6.534 & 0.817 & 2.500 & 5.146 & 0.724 & 1.972 & 4.162 \\
             & 100 & 0.824 & 2.338 & 7.654 & 0.772 & 1.968 & 5.542 & 0.676 & 1.539 & 4.210 \\
             & 200 & 0.780 & 1.792 & 8.423 & 0.694 & 1.507 & 5.989 & 0.595 & 1.175 & 4.443 \\
             & 500 & 0.565 & 1.178 & 14.793 & 0.489 & 0.989 & 9.437 & 0.394 & 0.770 & 6.166 \\
             & 1000 & 0.471 & 0.839 & 20.046 & 0.400 & 0.705 & 11.748 & 0.327 & 0.550 & 7.048 \\
             &  &  &  &  &  &  &  &  &  &  \\
            \multirow{5}{*}{1000} & 50 & 0.963 & 4.222 & 5.207 & 0.921 & 3.586 & 4.592 & 0.847 & 2.828 & 4.053 \\
             & 100 & 0.926 & 3.345 & 5.202 & 0.881 & 2.819 & 4.383 & 0.818 & 2.204 & 3.676 \\
             & 200 & 0.911 & 2.589 & 5.030 & 0.860 & 2.178 & 4.144 & 0.777 & 1.698 & 3.426 \\
             & 500 & 0.770 & 1.691 & 8.512 & 0.686 & 1.421 & 6.178 & 0.548 & 1.106 & 4.636 \\
             & 1000 & 0.594 & 1.198 & 14.180 & 0.521 & 1.007 & 9.098 & 0.416 & 0.785 & 5.999 \\ \hline
            \multicolumn{11}{|c|}{Parametric bootstrap intervals based on normality} \\ \hline
            \multirow{5}{*}{200} & 50 & 0.723 & 1.806 & 10.097 & 0.653 & 1.516 & 7.051 & 0.552 & 1.181 & 5.008 \\
             & 100 & 0.653 & 1.423 & 11.733 & 0.572 & 1.194 & 7.702 & 0.460 & 0.930 & 5.249 \\
             & 200 & 0.579 & 1.088 & 14.307 & 0.509 & 0.913 & 8.888 & 0.421 & 0.711 & 5.675 \\
             & 500 & 0.406 & 0.727 & 20.385 & 0.352 & 0.610 & 11.830 & 0.284 & 0.475 & 6.978 \\
             & 1000 & 0.320 & 0.526 & 23.684 & 0.265 & 0.441 & 13.207 & 0.203 & 0.344 & 7.470 \\
             &  &  &  &  &  &  &  &  &  &  \\
            \multirow{5}{*}{500} & 50 & 0.882 & 3.014 & 6.068 & 0.828 & 2.529 & 5.055 & 0.722 & 1.970 & 4.170 \\
             & 100 & 0.846 & 2.370 & 6.801 & 0.778 & 1.989 & 5.295 & 0.683 & 1.550 & 4.180 \\
             & 200 & 0.784 & 1.811 & 8.011 & 0.712 & 1.520 & 5.867 & 0.595 & 1.184 & 4.407 \\
             & 500 & 0.567 & 1.184 & 14.634 & 0.495 & 0.993 & 9.409 & 0.398 & 0.774 & 6.155 \\
             & 1000 & 0.471 & 0.843 & 19.932 & 0.402 & 0.708 & 11.735 & 0.326 & 0.551 & 7.042 \\
             &  &  &  &  &  &  &  &  &  &  \\
            \multirow{5}{*}{1000} & 50 & 0.965 & 4.326 & 5.278 & 0.925 & 3.630 & 4.734 & 0.855 & 2.828 & 4.102 \\
             & 100 & 0.941 & 3.395 & 4.779 & 0.895 & 2.849 & 4.228 & 0.827 & 2.220 & 3.641 \\
             & 200 & 0.917 & 2.617 & 4.881 & 0.871 & 2.196 & 4.113 & 0.784 & 1.711 & 3.430 \\
             & 500 & 0.772 & 1.702 & 8.370 & 0.681 & 1.428 & 6.140 & 0.556 & 1.113 & 4.622 \\
             & 1000 & 0.594 & 1.205 & 13.998 & 0.521 & 1.012 & 9.073 & 0.413 & 0.788 & 5.989 \\ \hline
		\end{tabular}%
	}
\end{table}

\color{black}
\newpage
\section{Applications on sparsely observed functional time series}\label{appendix0}

The second contribution of this work is that we compare the proposed novel AR-sieve bootstrap for high-dimensional time series with the AR-sieve bootstrap method for functional time series \citep{paparoditis_sieve_2018} in terms of their applications on sparse and unsmoothed functional observations. We suggest that the sparse and unsmoothed observations need to be treated as high-dimensional time series, and the AR-sieve bootstrap proposed in this work needs to be applied. 
In the literature of functional time series studies, a very fundamental assumption is that the actual observations come from a smoothed functional curve, and statistical inferences for functional data usually require the observations to be dense. In a classic functional set-up, dense and discrete points are observed on a sample of $T$ curves. Denoted by $N_t$ the number of observations for the curve $t$, the discussions on the density of observations in functional data literature are generally through assumptions made on $N_t$. Typically, when $N_t$ is much larger than the sample size $T$, the data can be considered dense functional data where each curve can be well smoothed before analysis.  
However, in the case where $N_t$ is small compared with the sample size $T$ for all $t$, the discrete observations should be considered sparse along the population functional curve. 
The fundamental problem of sparse functional data is that the local patterns of the population functional curve are generally not captured by those sparse observations. 

To illustrate the potential problems of pre-smoothing sparse observations for functional time series analysis, we consider a toy example. 
For a square-integrable functional process $\{\mathcal{X}(u),\ u \in \mathcal{I}\}$, let $y_{i,t}$ be the $i$\textsuperscript{th} observation of $\{\mathcal{X}_t(\cdot)\}$, observed at random time $t$ with the measurement errors defined as $\epsilon_{i,t}$ for $t=1,2,...,T$ and $i=1,2,...,N$. Consider now a model of functional observations
\begin{align}\label{1:e0}
	y_{i,t} = \mathcal{X}_t(u_{i}) + \epsilon_{i,t}, \ \ \ u_{i} \in \mathcal{I},
\end{align}
where $\epsilon_{i,t}$ is independent and identically distributed (i.i.d.) with $\mathbb{E}(\epsilon_{i,t}) = 0$, $\mathbb{V}(\epsilon_{i,t}) = \sigma^2$ and $\mathcal{I}$ is a functional support. In this model, the observations of $\{\mathcal{X}_t(\cdot)\}$ are assumed to be equally spaced, and the number of measurements $N$ assesses the density and design of the actual observations. 
In functional data analysis, $\mathcal{X}_t(u_i)$ can be estimated or recovered by some smoothing methods, such as a linear smoother, as follows.
\begin{align*}
\widehat{\mathcal{X}_t}(u_{i}) = \sum_{j=1}^{N} w_{i}(u_{j}) y_{t,i},
\end{align*}
where $w_{i}(u_{j})$ is the weight of the $j$\textsuperscript{th} point on the $i$\textsuperscript{th} point with $\sum_{j=1}^{N} w_{i}(u_{j}) = 1$ for $t=1,2,...,T$ and $i=1,2,\dots,N$. The accuracy of the smoothing curve is highly related to the density of observations and measurement errors. If observations along the curve are equally spaced, the change of density can affect the quality of smoothness and its recovery power to the population curve. For a relatively sparse curve, smoothing can fail to work under certain situations; for example, when there are local patterns that observations are too sparse to capture. To visually depict this phenomenon, we provide a toy example by simulation in the following part. We consider a contaminated functional time series model generated from three Fourier bases with different frequencies reflecting local patterns. The details of the simulation setting can be found in Section~\ref{app.f.1}. The curves in Figure~\ref{1:f1} are plotted based on $401$ grid points defined on functional support $[0,1]$, whereas the actual number of observations $N$ along each curve is chosen as $51$, $21$, and $5$ to address different observation densities.
\begin{figure}[!htbp]
\centering
\includegraphics[width=0.9\textwidth]{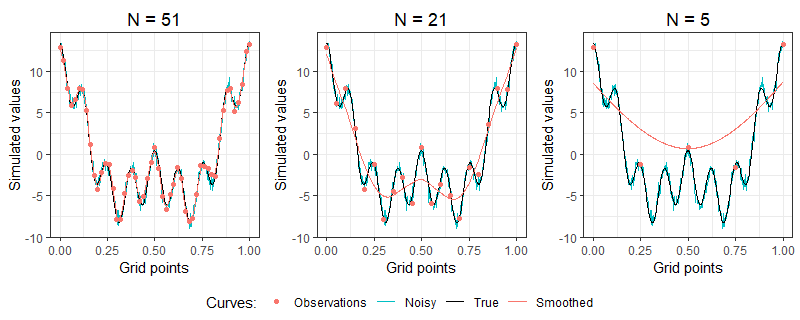}
\caption{Example of smoothing error of sparse functional time series observations.}\label{1:f1}
\end{figure}
As shown in Figure~\ref{1:f1}, when the observations (red points) become sparse (but still equally spaced), the (red) smoothing curve can lead to an obviously misleading result with local patterns not accurately captured by the smoothing curve. The errors associated with pre-smoothing on those sparse observations are generally large. In this situation, the assumption of dense functional data suffers from insufficient observations along each curve. As a result, we cannot adopt the pre-smoothing results based on a functional set-up but instead treat the data as a multivariate time series with growing dimensions. In other words, when $N$ grows with sample size $T$ but at a relatively slower rate, the real data may adapt to a high-dimensional set-up rather than a functional set-up, which makes statistical inferences and applications rather different. This phenomenon is associated with an area where functional data analysis and high-dimensional data analysis may overlap, yet follow different assumptions and produce quite different asymptotic results. 

In contrast to functional data analysis, where the increase of observations along a curve can practically improve pre-smoothing and recover the functional curve, the growth of dimensions is associated with the increase of complexity for high-dimensional data analysis. This key difference makes it vital to choose between functional time series and high-dimensional time series methods. In the following part, we consider the situation where $N$ is growing but not fast enough. The curve smoothed from the sparse observations is inaccurate, especially for local patterns of a functional curve. We apply the proposed AR-sieve bootstrap method for studying the inferences of this type of high-dimensional time series.

\subsection{Smoothing on sparse discrete functional time series}\label{app.f.1}

To study the impact of smoothing on the sparse functional time series observations, we can compare bootstrap samples' empirical distributions under various densities of observations. To start, we first assume the data originated from functional curves, which are temporally dependent. Recall the model~\eqref{1:e0} that
\begin{align*}
y_{t,i} = \mathcal{X}_t(u_{i}) + \epsilon_{t,i}, \ \ \ u_{i} \in \mathcal{I},
\end{align*}
where $\epsilon_{t,i}$ is i.i.d.\ with $\mathbb{E}(\epsilon_{t,i}) = 0$ and $\mathbb{V}(\epsilon_{t,i}) = \sigma^2$, for $t=1,2,...,T$ and $i=1,2,...,N$. In this model, the number of measurements $N$ reflects the density of the actual observations. To study the impact of density, we assume the observations are equally spaced and generated from a three-factor model 
\begin{align*}
\py_{t} = \pQ \pf_{t} + \pu_{t},
\end{align*}
where $u_{t,i}$, the element in $\{\pu_t\}$, is an independent random noise $\mathcal{N}\ (0,1)$, $\pQ$ is a $N \times 3$ matrix with each column a Fourier basis and $\cos(2\pi i/N)$, $\cos(4\pi i/N)$, $0.5\cos(16\pi i/N)$ as the $i$\textsuperscript{th} element, respectively. The factors $\{\pf_t\}$ follow a VAR(1) model with a coefficient matrix
\begin{align*}
\left[\begin{matrix}
0.5&0.1&0.1\\
0.1&0.5&0.1\\
0.1&0.1&0.5\\ 
\end{matrix} \right]
\end{align*}  
and errors independently simulated from $\mathcal{N}\ (0,1)$. The Fourier basis is selected to produce a smooth population curve, with the third basis reflecting local patterns. Hence, we can generate discrete observations from a functional curve with local patterns. In Section~\ref{1:intro}, we have presented graphs of $\{\py_t\}$ at a particular time $t$ with three different densities of observations to illustrate the potential issue of smoothing. This section takes it one step further and considers a wider choice of densities so that the actual dimensions of observations along each curve are $N=101,51,21,17,11$ and $5$.

For the same choice of time $t$ as in Section~\ref{1:intro}, we have generated $6$ plots under various densities in Figure~\ref{1:f2} to compare the smoothing results with the true population curve and the noisy curve with small measurement errors. The smoothing results are obtained using $B$-splines with the number of basis functions set to $N$, the actual number of observations in each case, and the roughness penalties selected based on generalized cross-validation (GCV). As depicted in Figure \ref{1:f2}, when the actual number of observations $N$ is relatively small, for example, $N<21$, some local patterns of the population curve are generally not captured. In addition, the smoothing curve sometimes also averaged out the actual observations to achieve relatively flat results, for example, when $N=21,17$ and $5$ as in Figure \ref{1:f2}. As a result, the observations after smoothing are generally less spread than the original observations, which produces very different bootstrap samples and inferences' results. To see that, we generate $B=499$ AR-sieve bootstrap samples and computed two summary statistics to compare the bootstrap distribution based on original observations with smoothed observations. We use AR-sieve bootstrap to obtain estimates of a so-called (standardized) mean statistic, computed as $\overline{\overline{y^\ast}}=\frac{\sqrt{T}}{\sqrt{N}} \pone^\top \hQ \overline{\bof}$ according to Theorem \ref{1:t1}, and $\delta_1^\ast$, the estimate of (standardized) largest eigenvalue of symmetrized lag-$1$ sample autocovariance matrix as defined in Proposition \ref{1:t3}, to compare bootstrap samples from original observations with bootstrap samples from pre-smoothed observations. 
\begin{figure}[!htbp]
\centering
\includegraphics[width=0.9\textwidth]{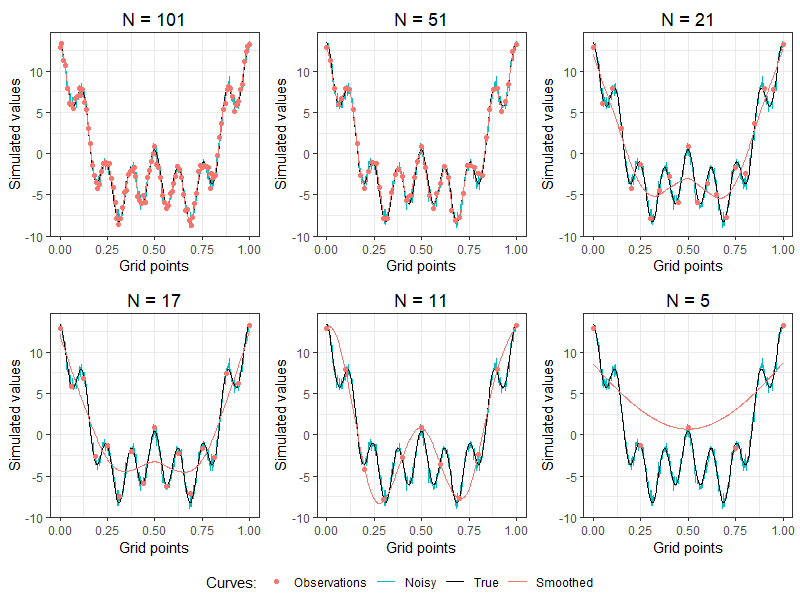}
\caption{Example of smoothing errors on sparse functional observations.}\label{1:f2}
\end{figure}

Figures~\ref{1:f5} and~\ref{1:f6} compare the histograms and boxplots of $\delta_1^\ast$, the AR-sieve bootstrap estimates of the largest eigenvalue of the symmetrized lag-$1$ autocovariance matrix, while Figures~\ref{1:f7} and~\ref{1:f8} compare the histograms and boxplots of $\overline{\overline{y^\ast}}$, the AR-sieve bootstrap estimates of the (standardized) mean statistic. As seen in Figure \ref{1:f2}, when $N=21,17$ and $5$, the pre-smoothed observations are averaged out compared with the original observations. As a result, the bootstrap estimates of the two statistics perform differently before and after smoothing, when $N=21,17$ and $5$. Figures~\ref{1:f5} and~\ref{1:f7} use boxplots to present the difference in empirical distributions of $\overline{\overline{y^\ast}}$ and $\delta_1^\ast$ for $N=21,17$ and $5$, whereas Figures~\ref{1:f6} and~\ref{1:f8} illustrate the impact of smoothing by comparing the histograms of $\overline{\overline{y^\ast}}$ and $\delta_{1}^{\ast}$.
\begin{figure}[!htb]
\centering
\includegraphics[width=0.76\textwidth]{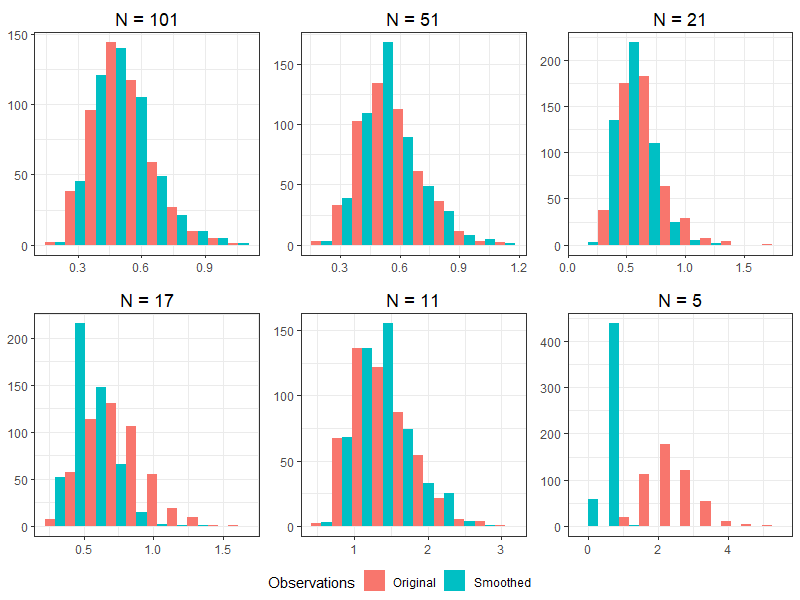}
\caption{Histograms of $\delta_1^\ast$, the AR-sieve bootstrap estimates of the largest eigenvalue of symmetrized lag-$1$ sample autocovariance matrix.}\label{1:f5}
\end{figure}

\begin{figure}[!htb]
\centering
\includegraphics[width=0.76\textwidth]{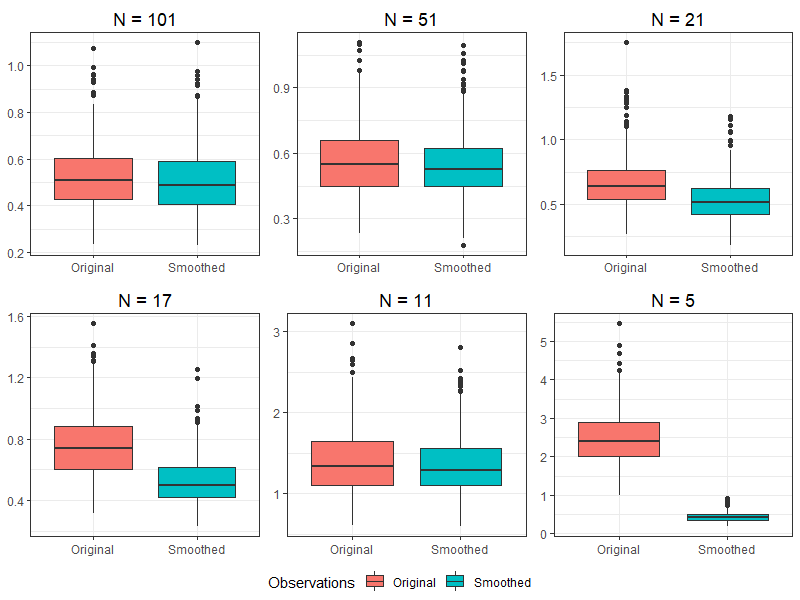}
\caption{Boxplots of $\delta_1^\ast$, the AR-sieve bootstrap estimates of the largest eigenvalue of symmetrized lag-$1$ sample autocovariance matrix.}\label{1:f6}
\end{figure}

\begin{figure}[!htbp]
\centering
\includegraphics[width=0.8\textwidth]{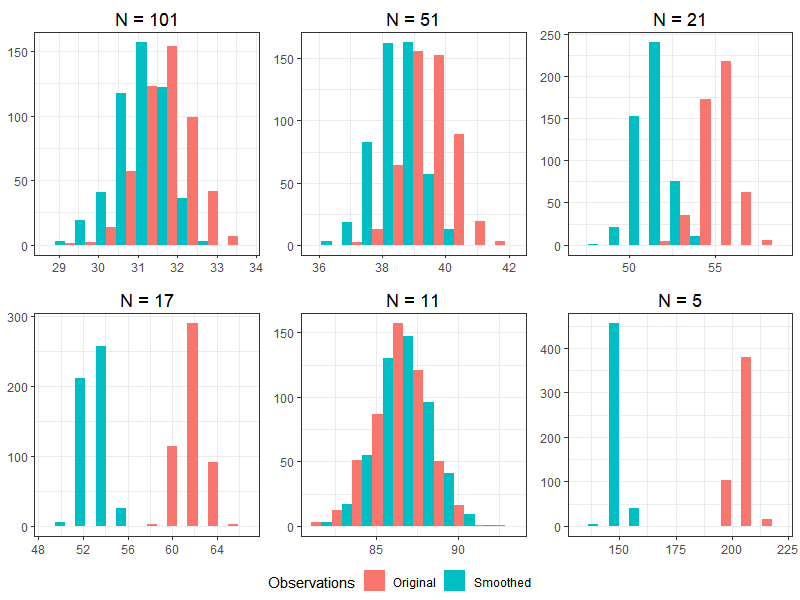}
\caption{Histograms of $\overline{\overline{y^\ast}}$, the AR-sieve bootstrap estimates of the mean statistic.}\label{1:f7}
\end{figure}

\begin{figure}[!htbp]
\centering
\includegraphics[width=0.8\textwidth]{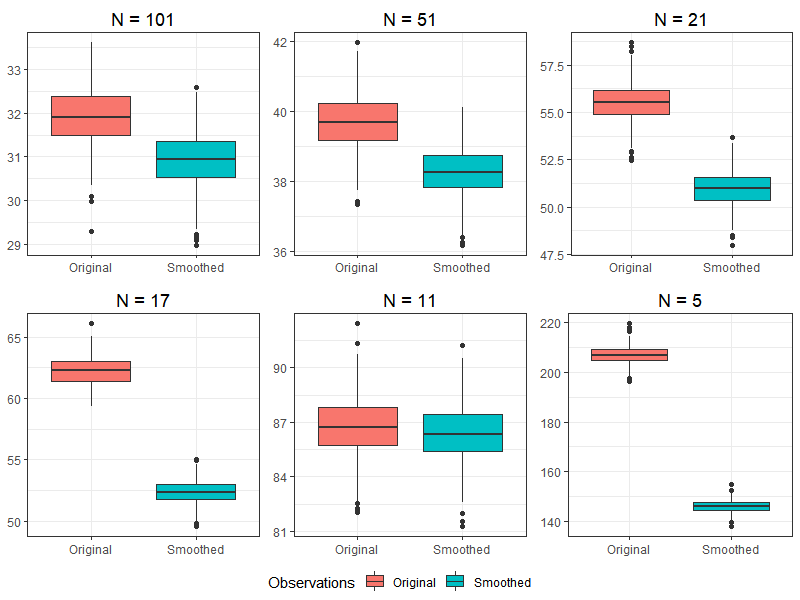}
\caption{Boxplots of $\overline{\overline{y^\ast}}$, the AR-sieve bootstrap estimates of the mean statistic.}\label{1:f8}
\end{figure}

\begin{figure}[!htbp]
\centering
\includegraphics[width=0.8\textwidth]{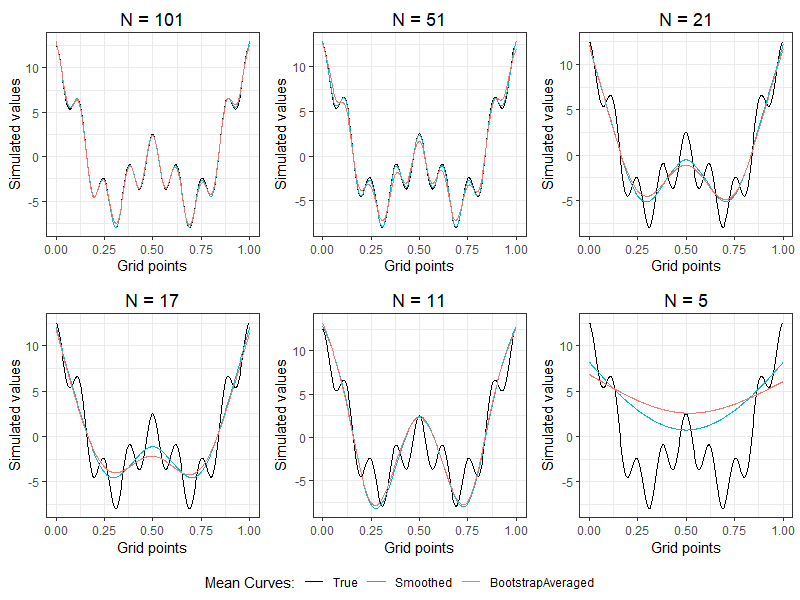}
\caption{Example of errors of the AR-sieve bootstrap mean curve for sparse functional observations.}\label{1:f9}
\end{figure} 

The last example we presented in Figure~\ref{1:f9} illustrates the results of AR-sieve bootstrap estimates (bootstrap average) of the functional mean curve when we pre-smooth the observations under various densities of data. As shown in Figure~\ref{1:f9}, when the actual observations are relatively dense, for example, $N \ge 51$, AR-sieve bootstrap estimates of the mean functional curve are close to the pre-smoothed curve and the population curve. However, when observations are sparse, for example, $N \le 21$, AR-sieve bootstrap estimates of the mean functional curve do not correctly capture the local patterns of the population curve, which is due to unacceptable smoothing results. This result is also typical evidence of the impact of pre-smoothing on the AR-sieve bootstrap for functional time series. Hence, when the actual functional time series observations are sparse, pre-smoothing can significantly impact statistical inferences, including those based on the bootstrap. In fact, for many real-world time series data, the rule on considering a data set as a dense functional time series is generally not clear and often varies across researchers and problems. Practically speaking, the impact of observations' density is only about whether to pre-smooth the functional time series before performing bootstrap or other statistical analysis. 


\newpage
\commHS{
\subsection{Simulation studies}\label{app.f.2}

To further investigate the impact of pre-smoothing observed in Appendix~\ref{app.f.1}, we conducted a simulation study comparing the proposed AR-sieve bootstrap (treating data as high-dimensional vectors) with the functional AR-sieve bootstrap (using B-spline pre-smoothing).

The simulation utilizes the same DGP as described in Appendix~\ref{app.f.1}, which features factors with high-frequency local patterns generated by Fourier bases. However, to better isolate the bias introduced by smoothing from the variance caused by large noise, we set the standard deviation of the error term $u_t$ to $\sigma = 0.1$. We evaluated the performance of bootstrap confidence intervals (CIs) across a range of observation densities, specifically $N \in \{11, 17, 21, 51, 101, 201\}$.

\begin{table}[!htbp]
	\centering
	\caption{Empirical coverage, average width, and interval score of bootstrap intervals constructed using the functional AR-sieve bootstrap (pre-smoothed) for $\theta_{y}$.}
	\label{tab:mean_functional}
	\resizebox{\textwidth}{!}{%
		\begin{tabular}{|ccccccccccc|}
			\hline
             &  & \multicolumn{3}{c}{95\%} & \multicolumn{3}{c}{90\%} & \multicolumn{3}{c|}{80\%} \\ \hline
            T & N & \begin{tabular}[c]{@{}c@{}}Empirical\\ coverage\end{tabular} & \begin{tabular}[c]{@{}c@{}}Average\\ width\end{tabular} & \begin{tabular}[c]{@{}c@{}}Average\\ interval score\end{tabular} & \begin{tabular}[c]{@{}c@{}}Empirical\\ coverage\end{tabular} & \begin{tabular}[c]{@{}c@{}}Average\\ width\end{tabular} & \begin{tabular}[c]{@{}c@{}}Average\\ interval score\end{tabular} & \begin{tabular}[c]{@{}c@{}}Empirical\\ coverage\end{tabular} & \begin{tabular}[c]{@{}c@{}}Average\\ width\end{tabular} & \begin{tabular}[c]{@{}c@{}}Average\\ interval score\end{tabular} \\ \hline
            \multicolumn{11}{|c|}{Nonparametric bootstrap intervals using quantiles} \\ \hline
            \multirow{6}{*}{100} & 11 & 0.878 & 5.244 & 12.380 & 0.798 & 4.417 & 10.216 & 0.684 & 3.450 & 8.599 \\
             & 17 & 0.858 & 4.059 & 9.794 & 0.796 & 3.418 & 8.266 & 0.680 & 2.666 & 6.860 \\
             & 21 & 0.872 & 3.618 & 8.911 & 0.788 & 3.048 & 7.380 & 0.676 & 2.381 & 6.112 \\
             & 51 & 0.982 & 3.529 & 3.824 & 0.954 & 2.972 & 3.376 & 0.884 & 2.323 & 2.934 \\
             & 101 & 0.990 & 3.021 & 3.155 & 0.982 & 2.543 & 2.661 & 0.948 & 1.986 & 2.154 \\
             & 201 & 0.984 & 2.683 & 2.773 & 0.984 & 2.260 & 2.336 & 0.976 & 1.761 & 1.824 \\ \hline
            \multicolumn{11}{|c|}{Parametric bootstrap intervals based on normality} \\ \hline
            \multirow{6}{*}{100} & 11 & 0.876 & 5.297 & 12.170 & 0.804 & 4.446 & 10.199 & 0.680 & 3.464 & 8.598 \\
             & 17 & 0.874 & 4.096 & 9.671 & 0.800 & 3.438 & 8.199 & 0.680 & 2.678 & 6.806 \\
             & 21 & 0.880 & 3.655 & 8.700 & 0.794 & 3.067 & 7.260 & 0.676 & 2.390 & 6.077 \\
             & 51 & 0.984 & 3.559 & 3.811 & 0.956 & 2.987 & 3.382 & 0.880 & 2.327 & 2.926 \\
             & 101 & 0.988 & 3.046 & 3.151 & 0.984 & 2.556 & 2.667 & 0.954 & 1.992 & 2.153 \\
             & 201 & 0.984 & 2.705 & 2.785 & 0.984 & 2.270 & 2.343 & 0.978 & 1.769 & 1.829 \\ \hline
		\end{tabular}%
	}
\end{table}

\begin{table}[!htbp]
	\centering
	\caption{Empirical coverage, average width, and interval score of bootstrap intervals constructed using the proposed high-dimensional AR-sieve bootstrap (no smoothing) for $\theta_{y}$.}
	\label{tab:mean_highdim}
	\resizebox{\textwidth}{!}{%
		\begin{tabular}{|ccccccccccc|}
			\hline
             &  & \multicolumn{3}{c}{95\%} & \multicolumn{3}{c}{90\%} & \multicolumn{3}{c|}{80\%} \\ \hline
            T & N & \begin{tabular}[c]{@{}c@{}}Empirical\\ coverage\end{tabular} & \begin{tabular}[c]{@{}c@{}}Average\\ width\end{tabular} & \begin{tabular}[c]{@{}c@{}}Average\\ interval score\end{tabular} & \begin{tabular}[c]{@{}c@{}}Empirical\\ coverage\end{tabular} & \begin{tabular}[c]{@{}c@{}}Average\\ width\end{tabular} & \begin{tabular}[c]{@{}c@{}}Average\\ interval score\end{tabular} & \begin{tabular}[c]{@{}c@{}}Empirical\\ coverage\end{tabular} & \begin{tabular}[c]{@{}c@{}}Average\\ width\end{tabular} & \begin{tabular}[c]{@{}c@{}}Average\\ interval score\end{tabular} \\ \hline
            \multicolumn{11}{|c|}{Nonparametric bootstrap intervals using quantiles} \\ \hline
            \multirow{6}{*}{100} & 11 & 0.860 & 13.474 & 36.086 & 0.792 & 11.347 & 29.281 & 0.666 & 8.850 & 23.717 \\
             & 17 & 0.878 & 5.247 & 12.339 & 0.794 & 4.420 & 10.204 & 0.684 & 3.453 & 8.590 \\
             & 21 & 0.964 & 5.464 & 6.846 & 0.920 & 4.587 & 6.132 & 0.812 & 3.584 & 5.550 \\
             & 51 & 0.948 & 4.617 & 6.514 & 0.900 & 3.881 & 5.724 & 0.800 & 3.038 & 5.123 \\
             & 101 & 0.978 & 3.504 & 3.923 & 0.952 & 2.952 & 3.421 & 0.874 & 2.307 & 2.958 \\
             & 201 & 0.914 & 1.844 & 2.991 & 0.858 & 1.554 & 2.592 & 0.794 & 1.214 & 2.162 \\ \hline
            \multicolumn{11}{|c|}{Parametric bootstrap intervals based on normality} \\ \hline
            \multirow{6}{*}{100} & 11 & 0.868 & 13.598 & 35.734 & 0.796 & 11.412 & 28.930 & 0.678 & 8.891 & 23.421 \\
             & 17 & 0.878 & 5.301 & 12.143 & 0.804 & 4.449 & 10.185 & 0.680 & 3.466 & 8.589 \\
             & 21 & 0.966 & 5.509 & 6.852 & 0.922 & 4.623 & 6.102 & 0.816 & 3.602 & 5.531 \\
             & 51 & 0.948 & 4.659 & 6.452 & 0.906 & 3.910 & 5.684 & 0.794 & 3.046 & 5.102 \\
             & 101 & 0.980 & 3.535 & 3.918 & 0.952 & 2.967 & 3.422 & 0.876 & 2.312 & 2.950 \\
             & 201 & 0.914 & 1.862 & 2.935 & 0.858 & 1.563 & 2.574 & 0.786 & 1.218 & 2.152 \\ \hline
		\end{tabular}%
	}
\end{table}

We calculated the empirical coverage, average width, and interval score for the (standardized) mean statistic. The results (Table~\ref{tab:mean_functional}~and~\ref{tab:mean_highdim}) reveal a distinct trade-off governed by observation density. In the sparse regime ($N \le 21$), the bootstrap confidence intervals are undercovered, where pre-smoothing significantly distorts the signal when observations are sparse. For $N=21$, the functional approach yields coverage probabilities notably lower than the nominal level and lower than those of the proposed high-dimensional method. This under-coverage confirms that pre-smoothing averages out critical local patterns, introducing bias that the bootstrap cannot recover. In the dense regime ($N \ge 51$), as $N$ increases, the smoothing bias diminishes. However, we observe that the functional approach tends to produce higher coverage probabilities (over-coverage) compared to the proposed method. This suggests that even with denser grids, the imposed smoothness structure may overestimate variability when the underlying signal retains persistent local irregularities. In contrast, the proposed high-dimensional method adapts well to the increasing dimension, maintaining stable coverage.

\begin{table}[!htbp]
	\centering
	\caption{Empirical coverage, average width, and interval score of bootstrap intervals constructed using the functional AR-sieve bootstrap (pre-smoothed) for $\delta_{1}^0$.}
	\label{tab:eigen_functional}
	\resizebox{\textwidth}{!}{%
		\begin{tabular}{|ccccccccccc|}
			\hline
             &  & \multicolumn{3}{c}{95\%} & \multicolumn{3}{c}{90\%} & \multicolumn{3}{c|}{80\%} \\ \hline
            T & N & \begin{tabular}[c]{@{}c@{}}Empirical\\ coverage\end{tabular} & \begin{tabular}[c]{@{}c@{}}Average\\ width\end{tabular} & \begin{tabular}[c]{@{}c@{}}Average\\ interval score\end{tabular} & \begin{tabular}[c]{@{}c@{}}Empirical\\ coverage\end{tabular} & \begin{tabular}[c]{@{}c@{}}Average\\ width\end{tabular} & \begin{tabular}[c]{@{}c@{}}Average\\ interval score\end{tabular} & \begin{tabular}[c]{@{}c@{}}Empirical\\ coverage\end{tabular} & \begin{tabular}[c]{@{}c@{}}Average\\ width\end{tabular} & \begin{tabular}[c]{@{}c@{}}Average\\ interval score\end{tabular} \\ \hline
            \multicolumn{11}{|c|}{Nonparametric bootstrap intervals using quantiles} \\ \hline
            \multirow{6}{*}{100} & 11 & 0.098 & 10.829 & 671.030 & 0.090 & 8.762 & 347.923 & 0.080 & 6.555 & 182.500 \\
             & 17 & 0.936 & 35.115 & 48.712 & 0.930 & 28.274 & 35.710 & 0.790 & 21.259 & 34.235 \\
             & 21 & 0.756 & 12.622 & 43.207 & 0.746 & 10.200 & 26.894 & 0.692 & 7.645 & 17.800 \\
             & 51 & 0.888 & 13.919 & 25.456 & 0.878 & 11.250 & 17.773 & 0.796 & 8.435 & 13.650 \\
             & 101 & 0.936 & 14.163 & 19.935 & 0.928 & 11.452 & 14.996 & 0.816 & 8.598 & 18.237 \\
             & 201 & 0.938 & 13.116 & 18.308 & 0.932 & 10.626 & 13.675 & 0.826 & 8.004 & 17.050 \\ \hline
            \multicolumn{11}{|c|}{Parametric bootstrap intervals based on normality} \\ \hline
            \multirow{6}{*}{100} & 11 & 0.156 & 11.302 & 558.569 & 0.122 & 9.485 & 307.439 & 0.098 & 7.390 & 171.521 \\
             & 17 & 0.954 & 36.552 & 44.494 & 0.936 & 30.675 & 36.208 & 0.808 & 23.900 & 34.889 \\
             & 21 & 0.846 & 13.184 & 32.453 & 0.814 & 11.065 & 23.067 & 0.732 & 8.621 & 16.833 \\
             & 51 & 0.928 & 14.530 & 22.105 & 0.916 & 12.194 & 17.011 & 0.834 & 9.500 & 13.558 \\
             & 101 & 0.954 & 14.725 & 18.638 & 0.944 & 12.358 & 15.019 & 0.834 & 9.628 & 13.323 \\
             & 201 & 0.952 & 13.706 & 16.977 & 0.942 & 11.503 & 13.823 & 0.842 & 8.962 & 12.196 \\ \hline
		\end{tabular}%
	}
\end{table}

\begin{table}[!htbp]
	\centering
	\caption{Empirical coverage, average width, and interval score of bootstrap intervals constructed using the proposed high-dimensional AR-sieve bootstrap (no smoothing) for $\delta_{1}^0$.}
	\label{tab:eigen_highdim}
	\resizebox{\textwidth}{!}{%
		\begin{tabular}{|ccccccccccc|}
			\hline
             &  & \multicolumn{3}{c}{95\%} & \multicolumn{3}{c}{90\%} & \multicolumn{3}{c|}{80\%} \\ \hline
            T & N & \begin{tabular}[c]{@{}c@{}}Empirical\\ coverage\end{tabular} & \begin{tabular}[c]{@{}c@{}}Average\\ width\end{tabular} & \begin{tabular}[c]{@{}c@{}}Average\\ interval score\end{tabular} & \begin{tabular}[c]{@{}c@{}}Empirical\\ coverage\end{tabular} & \begin{tabular}[c]{@{}c@{}}Average\\ width\end{tabular} & \begin{tabular}[c]{@{}c@{}}Average\\ interval score\end{tabular} & \begin{tabular}[c]{@{}c@{}}Empirical\\ coverage\end{tabular} & \begin{tabular}[c]{@{}c@{}}Average\\ width\end{tabular} & \begin{tabular}[c]{@{}c@{}}Average\\ interval score\end{tabular} \\ \hline
            \multicolumn{11}{|c|}{Nonparametric bootstrap intervals using quantiles} \\ \hline
            \multirow{6}{*}{100} & 11 & 0.936 & 62.553 & 89.998 & 0.926 & 50.730 & 66.696 & 0.808 & 38.180 & 60.948 \\
             & 17 & 0.938 & 35.285 & 48.726 & 0.932 & 28.411 & 35.760 & 0.790 & 21.363 & 34.438 \\
             & 21 & 0.936 & 20.980 & 28.825 & 0.930 & 16.986 & 21.775 & 0.818 & 12.751 & 18.934 \\
             & 51 & 0.930 & 17.389 & 24.650 & 0.924 & 14.051 & 18.372 & 0.818 & 10.533 & 16.004 \\
             & 101 & 0.936 & 14.208 & 19.996 & 0.930 & 11.487 & 15.023 & 0.830 & 8.622 & 12.707 \\
             & 201 & 0.934 & 13.184 & 18.250 & 0.932 & 10.650 & 13.730 & 0.838 & 8.040 & 11.594 \\ \hline
            \multicolumn{11}{|c|}{Parametric bootstrap intervals based on normality} \\ \hline
            \multirow{6}{*}{100} & 11 & 0.954 & 65.438 & 84.349 & 0.932 & 54.918 & 68.195 & 0.828 & 42.788 & 63.939 \\
             & 17 & 0.954 & 36.729 & 44.549 & 0.936 & 30.824 & 36.305 & 0.804 & 24.016 & 35.111 \\
             & 21 & 0.958 & 21.869 & 26.988 & 0.948 & 18.353 & 22.021 & 0.852 & 14.299 & 19.669 \\
             & 51 & 0.952 & 18.131 & 23.162 & 0.948 & 15.216 & 18.546 & 0.848 & 11.855 & 16.215 \\
             & 101 & 0.954 & 14.769 & 18.677 & 0.944 & 12.395 & 15.051 & 0.838 & 9.657 & 13.335 \\
             & 201 & 0.952 & 13.768 & 17.145 & 0.944 & 11.555 & 13.885 & 0.842 & 9.003 & 12.343 \\ \hline
		\end{tabular}%
	}
\end{table}

The analysis of the largest eigenvalue (Table~\ref{tab:eigen_functional}~and~\ref{tab:eigen_highdim}) of the symmetrized lag-1 autocovariance matrix yields similar insights. In the sparse regime ($N \le 21$), pre-smoothing alters the spectral density of the data, resulting in bootstrap CIs with poor coverage. As the grid becomes denser ($N \ge 51$), the pre-smoothing effect on the global temporal dependence structure weakens, and the performance of the functional approach converges toward that of the unsmoothed high-dimensional approach. This indicates that while second-order properties are more robust to smoothing than first-order means, they are still vulnerable in sparse settings.

To provide a comprehensive evaluation, we also examined the impact of pre-smoothing in scenarios where the underlying factor structure is relatively smooth, lacking the high-frequency local patterns used in Appendix~\ref{app.f.2}. We revisited the simulation setup described in Section~\ref{simu2} with strong factors. We applied the functional AR-sieve bootstrap (with pre-smoothing) to this dataset and compared the mean statistics, shown in Table~\ref{1:ta1pre}, with those obtained from the proposed high-dimensional approach (without smoothing), as shown in Table~\ref{1:ta1a} in the main text.

\begin{table}[!htb]
	\centering
	\caption{Empirical coverage, average width, and interval score of bootstrap intervals constructed using the functional AR-sieve bootstrap (pre-smoothed) for $\theta_{y}$ ($\nu=1$) under the DGP considered in the main paper.}
	\label{1:ta1pre}
	\resizebox{\textwidth}{!}{%
		\begin{tabular}{|ccccccccccc|}
			\hline
             &  & \multicolumn{3}{c}{95\%} & \multicolumn{3}{c}{90\%} & \multicolumn{3}{c|}{80\%} \\ \hline
            T & N & \begin{tabular}[c]{@{}c@{}}Empirical\\ coverage\end{tabular} & \begin{tabular}[c]{@{}c@{}}Average\\ width\end{tabular} & \begin{tabular}[c]{@{}c@{}}Average\\ interval score\end{tabular} & \begin{tabular}[c]{@{}c@{}}Empirical\\ coverage\end{tabular} & \begin{tabular}[c]{@{}c@{}}Average\\ width\end{tabular} & \begin{tabular}[c]{@{}c@{}}Average\\ interval score\end{tabular} & \begin{tabular}[c]{@{}c@{}}Empirical\\ coverage\end{tabular} & \begin{tabular}[c]{@{}c@{}}Average\\ width\end{tabular} & \begin{tabular}[c]{@{}c@{}}Average\\ interval score\end{tabular} \\ \hline
            \multicolumn{11}{|c|}{Nonparametric bootstrap intervals using quantiles} \\ \hline
            \multirow{5}{*}{200} & 50 & $0.938$ & $9.076$ & $13.869$ & $0.878$ & $7.633$ & $12.285$ & $0.778$ & $5.942$ & $10.892$ \\
             & 100 & $0.947$ & $9.137$ & $13.842$ & $0.901$ & $7.683$ & $12.294$ & $0.801$ & $5.994$ & $10.704$ \\
             & 200 & $0.931$ & $9.381$ & $15.385$ & $0.875$ & $7.878$ & $13.645$ & $0.770$ & $6.143$ & $11.761$ \\
             & 500 & $0.928$ & $9.421$ & $16.573$ & $0.876$ & $7.926$ & $14.318$ & $0.784$ & $6.182$ & $12.276$ \\
             & 1000 & $0.948$ & $9.368$ & $13.834$ & $0.899$ & $7.874$ & $12.349$ & $0.821$ & $6.136$ & $10.741$ \\
             &  &  &  &  &  &  &  &  &  &  \\
            \multirow{5}{*}{500} & 50 & $0.946$ & $9.313$ & $13.339$ & $0.905$ & $7.823$ & $11.788$ & $0.787$ & $6.102$ & $10.544$ \\
             & 100 & $0.937$ & $9.366$ & $14.124$ & $0.882$ & $7.880$ & $12.592$ & $0.779$ & $6.149$ & $11.290$ \\
             & 200 & $0.937$ & $9.339$ & $14.175$ & $0.887$ & $7.845$ & $12.727$ & $0.790$ & $6.121$ & $11.291$ \\
             & 500 & $0.941$ & $9.195$ & $14.497$ & $0.882$ & $7.734$ & $12.803$ & $0.778$ & $6.029$ & $11.321$ \\
             & 1000 & $0.938$ & $9.496$ & $14.501$ & $0.893$ & $7.993$ & $12.756$ & $0.796$ & $6.235$ & $11.267$ \\
             &  &  &  &  &  &  &  &  &  &  \\
            \multirow{5}{*}{1000} & 50 & $0.945$ & $9.415$ & $13.606$ & $0.886$ & $7.914$ & $12.363$ & $0.790$ & $6.173$ & $11.207$ \\
             & 100 & $0.938$ & $9.393$ & $14.646$ & $0.883$ & $7.889$ & $13.050$ & $0.769$ & $6.154$ & $11.642$ \\
             & 200 & $0.935$ & $9.395$ & $14.975$ & $0.874$ & $7.891$ & $13.355$ & $0.772$ & $6.148$ & $11.825$ \\
             & 500 & $0.941$ & $9.556$ & $14.563$ & $0.888$ & $8.036$ & $12.896$ & $0.784$ & $6.268$ & $11.539$ \\
             & 1000 & $0.953$ & $9.161$ & $12.377$ & $0.913$ & $7.702$ & $11.076$ & $0.817$ & $6.006$ & $9.828$ \\ \hline
            \multicolumn{11}{|c|}{Parametric bootstrap intervals based on normality} \\ \hline
            \multirow{5}{*}{200} & 50 & $0.938$ & $9.109$ & $13.873$ & $0.879$ & $7.644$ & $12.293$ & $0.775$ & $5.956$ & $10.893$ \\
             & 100 & $0.950$ & $9.185$ & $13.710$ & $0.898$ & $7.708$ & $12.229$ & $0.804$ & $6.006$ & $10.694$ \\
             & 200 & $0.934$ & $9.414$ & $15.347$ & $0.874$ & $7.901$ & $13.589$ & $0.775$ & $6.156$ & $11.747$ \\
             & 500 & $0.928$ & $9.475$ & $16.539$ & $0.881$ & $7.952$ & $14.318$ & $0.783$ & $6.196$ & $12.242$ \\
             & 1000 & $0.949$ & $9.407$ & $13.762$ & $0.902$ & $7.895$ & $12.311$ & $0.820$ & $6.151$ & $10.753$ \\
             &  &  &  &  &  &  &  &  &  &  \\
            \multirow{5}{*}{500} & 50 & $0.948$ & $9.347$ & $13.173$ & $0.908$ & $7.844$ & $11.775$ & $0.789$ & $6.112$ & $10.550$ \\
             & 100 & $0.939$ & $9.410$ & $13.924$ & $0.892$ & $7.897$ & $12.475$ & $0.781$ & $6.153$ & $11.262$ \\
             & 200 & $0.944$ & $9.377$ & $14.128$ & $0.885$ & $7.870$ & $12.653$ & $0.787$ & $6.131$ & $11.274$ \\
             & 500 & $0.941$ & $9.234$ & $14.398$ & $0.882$ & $7.750$ & $12.777$ & $0.779$ & $6.038$ & $11.310$ \\
             & 1000 & $0.942$ & $9.553$ & $14.383$ & $0.896$ & $8.017$ & $12.662$ & $0.795$ & $6.246$ & $11.244$ \\
             &  &  &  &  &  &  &  &  &  &  \\
            \multirow{5}{*}{1000} & 50 & $0.946$ & $9.456$ & $13.543$ & $0.889$ & $7.936$ & $12.263$ & $0.794$ & $6.183$ & $11.163$ \\
             & 100 & $0.939$ & $9.435$ & $14.682$ & $0.884$ & $7.918$ & $12.944$ & $0.776$ & $6.169$ & $11.607$ \\
             & 200 & $0.934$ & $9.424$ & $14.820$ & $0.874$ & $7.909$ & $13.276$ & $0.774$ & $6.162$ & $11.825$ \\
             & 500 & $0.944$ & $9.600$ & $14.337$ & $0.892$ & $8.057$ & $12.855$ & $0.786$ & $6.277$ & $11.517$ \\
             & 1000 & $0.956$ & $9.201$ & $12.315$ & $0.911$ & $7.722$ & $11.058$ & $0.814$ & $6.016$ & $9.789$ \\ \hline
		\end{tabular}%
	}
\end{table}

Unlike the sparse case with local patterns, the simulation results under this smooth setting show negligible differences between the two approaches. The empirical coverage probabilities for both the pre-smoothed and unsmoothed methods are very close to the nominal levels across various dimensions ($N$). This suggests that when the underlying functional curve is smooth and well-behaved, pre-smoothing does not introduce significant bias, nor does it substantially improve the inference compared to the high-dimensional approach. This additional comparison highlights that the proposed high-dimensional AR-sieve bootstrap is a robust methodology. It matches the performance of functional methods in standard smooth settings while offering superior accuracy in sparse settings with complex local patterns.

The simulation results, combined with the visual illustrations in Appendix~\ref{app.f.1}, demonstrate that treating sparse functional observations as high-dimensional time series is advantageous when local patterns are present. In practice, the distinction between dense functional data and sparse high-dimensional data is often ambiguous. Nonetheless, the theoretical assumptions behind functional time series and high-dimensional time series vary, leading to different implications for statistical inference. Our results highlight that when data sparsity prevents accurate pre-smoothing, the proposed high-dimensional AR-sieve bootstrap offers a robust alternative. This validates our contribution to developing building blocks for AR-sieve bootstrap in high-dimensional settings, bridging the gap where functional methods may falter.
}


\end{appendices}
\end{document}